\numberwithin{equation}{section}
\setlist[enumerate]{label=\upshape(\arabic*),leftmargin=*}
\crefname{subsection}{Subsection}{Subsections}
\crefname{theorem}{Theorem}{Theorems}
\crefname{lemma}{Lemma}{Lemmas}
\crefname{corollary}{Corollary}{Corollaries}
\theoremstyle{plain}
\newtheorem{theorem}{Theorem}[section]
\newtheorem{lemma}[theorem]{Lemma}
\newtheorem{corollary}[theorem]{Corollary}
\theoremstyle{definition}
\newtheorem{definition}{Definition}[section]
\newtheorem{example}{Example}[section]
\theoremstyle{remark}
\newcommand{\ifa}{if and only if\xspace}
\newcommand\qtq[1]{\quad\text{#1}\quad\xspace}
\newcommand{\n}{\mathbb{N}}
\newcommand{\z}{\mathbb{Z}}
\newcommand{\f}{\mathbb{F}}
\newcommand{\ord}{\mathrm{ord}}
\newcommand{\lcm}{\mathrm{lcm}}
\newcommand{\onetoone}{$1$-to-$1$\xspace}
\newcommand{\twotoone}{$2$-to-$1$\xspace}
\newcommand\mtoone{$m$-to-$1$\xspace}
\newcommand\mfq{$m$-to-$1$ on~$\mathbb{F}_{q}$\xspace}
\newcommand\mfqstar{$m$-to-$1$ on~$\mathbb{F}_{q}^{*}$\xspace}
\newcommand\mfield[2]{$#1$-to-$1$ on~$#2$\xspace}
\newcommand\mset[2]{$#1$-to-$1$ on~$#2$\xspace}
\newcommand{\fq}{\mathbb{F}_{q}}
\newcommand{\fqx}{\mathbb{F}_{q}[x]}
\newcommand{\fqstar}{\mathbb{F}_{q}^{*}}
\newcommand{\fqtwo}{\mathbb{F}_{q^2}}
\newcommand{\fqtwox}{\mathbb{F}_{q^2}[x]}
\newcommand{\fqtwostar}{\mathbb{F}_{q^2}^{*}}
\newcommand{\fqn}{\mathbb{F}_{q^n}}
\newcommand{\fqnx}{\mathbb{F}_{q^n}[x]}
\newcommand{\fqnstar}{\mathbb{F}_{q^n}^{*}}
\journal{XXX}
\date{}
\begin{document}
\begin{frontmatter}
\title{On many-to-one property of generalized cyclotomic mappings}
\tnotetext[t1]{
The Magma codes involved in this paper can be downloaded from \href{http://dx.doi.org/10.13140/RG.2.2.20938.89287}{here}.
This work was partially supported by 
the Natural Science Foundation of Shandong (No.\ ZR2021MA061),
the National Natural Science Foundation of China (Nos.\ 12461103, 12471492, 62072161),
the Innovation Group Project of the Natural Science Foundation of Hubei (No.\ 2023AFA021), 
and NSERC of Canada (RGPIN-2023-04673). 
}

\author[QF]{Yanbin Zheng}
\ead{zheng@qfnu.edu.cn}

\author[QF]{Yang Zhang}
\ead{2224959114@qq.com}

\author[XJ]{Zhengbang Zha}
\ead{zhazhengbang@163.com}

\author[WH]{Xiangyong Zeng}
\ead{xzeng@hubu.edu.cn}

\author[OT]{Qiang Wang}
\ead{wang@math.carleton.ca}

\address[QF]{School of Mathematical Sciences, Qufu Normal University, Qufu 273165, China}

\address[XJ]{College of Mathematics and System Science, Xinjiang University, Urumqi 830017, China}

\address[WH]{Hubei Key Laboratory of Applied Mathematics,
Faculty of Mathematics and Statistics,\\ Hubei University, Wuhan 430062, China}

\address[OT]{School of Mathematics and Statistics, 
Carleton University, 1125 Colonel By Drive,\\ Ottawa, ON K1S 5B6, Canada}

\begin{abstract}
The generalized cyclotomic mappings over finite fields $\mathbb{F}_{q}$ are those mappings which induce monomial functions on all cosets of an index $\ell$ subgroup $C_0$ of the multiplicative group $\mathbb{F}_{q}^{*}$. Previous research has focused on the one-to-one property, the functional graphs, and their applications in constructing linear codes and bent functions. In this paper, we devote to study the many-to-one property of these mappings. We completely characterize many-to-one generalized cyclotomic mappings for $1 \le \ell \le 3$. Moreover, we completely classify $2$-to-$1$ generalized cyclotomic mappings for any divisor $\ell$ of $q-1$. In addition, we construct several classes of many-to-one binomials and trinomials of the form $x^r h(x^{q-1})$ on $\mathbb{F}_{q^2}$, where $h(x)^{q-1}$ induces monomial functions on the cosets of a subgroup of $U_{q+1}$. 
\end{abstract}

\begin{keyword}
 Permutation \sep
 Two-to-one   \sep
 Many-to-one  \sep
 Cyclotomic mappings \sep
 Generalized cyclotomic mappings
\MSC[2010]  11T06 \sep 11T71
\end{keyword}

\end{frontmatter}
 
\section{Introduction}
For a prime power $q$, 
let $\fq$ denote the finite field~with~$q$ elements,
$\fqstar =\fq\setminus\{0\}$, 
and $\fqx$ be the ring of polynomials over~$\fq$.
Let $\xi$ be a primitive element of $\fq$ 
and $q-1 = \ell s$ for some $\ell, s \in \n$. 
Let $C_0 =  \{\xi^{k\ell} : 0 \le k < s \}$ 
be the set of all $s$-th roots of unity in  $\fq^*$.
Then $C_0$ is a subgroup of $\fqstar$ of index~$\ell$.
The elements of the factor group $\fqstar/C_0$ 
are the cyclotomic cosets
\begin{equation}\label{Ci}
    C_i = \xi^{i} C_0 = \{\xi^{k\ell+i} : 0 \le k < s\}, 
    \quad  0 \le i \le \ell - 1,
\end{equation}
which form a partition of $\fqstar$. 
For $a_{0}$, $a_{1}$, \ldots,  $a_{\ell-1}\in \fq$ and 
$r_{0}$, $r_{1}$, \ldots,  $r_{\ell-1} \in \n$, 
the generalized cyclotomic mappings from $\fq$ 
to itself is defined in~\cite{Wang-cyc} by
\begin{equation}\label{map_aixri}
  f(x) = 
  \begin{cases}
  0             & \text{if $x =  0$,} \\
  a_i x^{r_i}   & \text{if $x \in C_i$, 
                  $0 \le i \le \ell - 1$.}
  \end{cases}   
\end{equation}
(Cyclotomic mappings were introduced in~\cite{Nied-cyc} 
when $r_{0} = r_{1} = \dotsm = r_{\ell-1} =  1$ 
and in~\cite{Wang07} for $r_{0} = r_{1} = \dotsm = r_{\ell-1} >1$. 
Further information can be found in~\cite[Section~8.1.5]{HFF}.) 
For some $0 \le i, t \le \ell-1$ and $x \in C_t$, 
we have $x^s = \omega^t$ where $ \omega = \xi^s$, and so
\[
  \sum_{j=0}^{\ell-1}\Big(\frac{x^s}{ \omega^i}\Big)^j 
  = \sum_{j=0}^{\ell-1} \omega^{(t-i)j}
  = \begin{cases}
    0      & \text{if $t \ne i$,} \\
    \ell   & \text{if $t = i$.}
 \end{cases}
\]
Hence $f(x)$ in~\cref{map_aixri} is rewritten in~\cite{Wang-cyc} as
\begin{equation}\label{poly_aixri}
  f(x) 
  = \frac{1}{\ell}\sum_{i=0}^{\ell-1} a_{i} x^{r_i}
    \sum_{j=0}^{\ell-1} \Big(\frac{x^s}{ \omega^i}\Big)^j
  = \frac{1}{\ell}
    \sum_{i=0}^{\ell-1}\sum_{j=0}^{\ell-1}
    a_{i}  \omega^{- i j} x^{r_i + j s},
\end{equation}
in the sense of reduction modulo $x^q -x$. 
Theorem~2.2 in~\cite{Wang-cyc} gives several 
equivalent necessary and sufficient conditions 
for~$f(x)$ in~\cref{poly_aixri} permuting $\fq$.
An explicit expression of the inverse of~$f(x)$ on $\fq$ was found 
in \cite{ZhengPW2} by using a piecewise interpolation formula.
Later, based on the cyclotomic characteristics of~$f(x)$,
a shorter proof of the main result in \cite{ZhengPW2} 
and a necessary and sufficient condition for~$f(x)$  
to be an involution over $\fq$ were given in \cite{Wang-cyc2}.

Besides the characterization and construction of permutation polynomials, 
different representations including the wreath product form 
of generalized cyclotomic mappings and 
their cycle structures are studied in \cite{BorsWang22}. 
Furthermore, the functional graphs of generalized cyclotomic mappings 
of $\fq$ were extensively studied in \cite{BorsPW23}.
The applications of these mappings in constructing binary linear codes 
with few weights or bent functions can be found in 
\cite{Ding072274,Fang2023} and \cite{Xie2023} respectively. 
This motivates us to further study these generalized cyclotomic mappings. 

In 2019, Mesnager and Qu \cite{2to1-MesQ19} 
introduced the definition of \twotoone mappings 
and provided a systematic study of 
\twotoone mappings over finite fields. Later, 
the concept of \twotoone mappings was generalized in 
\cite{GAO211612, BartGT22194, nto1-NiuLQL23}.
Very recently, Zheng et al. \cite{Zhengmto1} introduced
the definition of many-to-one (\mtoone for short) 
mappings between two finite sets, 
which unifies and generalizes the definitions in
\cite{2to1-MesQ19, GAO211612, BartGT22194, nto1-NiuLQL23}.
The main purpose of this paper is to study the \mtoone property 
of these generalized cyclotomic mapping $f(x)$ in~\cref{map_aixri}
under the definition of \mtoone mappings in \cite{Zhengmto1}.

The rest of the paper is organized as follows.
After some preliminaries are given in \cref{sec_pre}, 
the \mtoone property of $f(x)$ in~\cref{map_aixri}
is completely characterized for index $\ell = 2$ or~$3$ 
in \cref{sec_ell=2,sec_ell=3}, respectively. 
\cref{sec_ell>=3} determines the \twotoone property 
of $f(x)$ for any divisor $\ell$ of $q-1$.
Then the \mtoone property of $f(x)$ is also completely 
characterized if all $(r_i, s)$'s are equal.
In \cref{sec_Realization}, several classes of \mtoone mappings 
are explicitly obtained by using polynomials of special forms 
to represent the constants $a_i$'s. 
Finally, by using the \mtoone property of $x^r h(x)^{q-1}$ on $U_{q+1}$,
several classes of \mtoone binomials and trinomials 
of the form $x^r h(x^{q-1})$ on $\fqtwo$ are given, 
where $h(x)^{q-1}$ induces monomial functions 
on the cosets of a subgroup of $U_{q+1}$.

In this paper, we shall use the following standard notation.
The letter $\z$ will denote the set of all integers,
$\n$ the set of all positive integers,
$\# S$ the cardinality of a finite set $S$,
and $\varnothing$ the empty set containing no elements. 
The greatest common divisor of two integers 
$a$ and $b$ is written as $(a, b)$, 
and $\lcm (a, b) = a b / (a, b)$. 
Denote $a \bmod n$ as the smallest non-negative 
remainder obtained when~$a$ is divided by~$n$; 
that is, $\mathrm{mod}~n$ is a function 
from $\z$ to $\{0, 1, 2, \ldots, n-1\}$.
Moreover, we define 
\[
\phi(i, n) = (i r_i + \log_{\xi} a_i) \bmod{n},
\]
where $i$, $r_i$, $a_i$ are as in~\cref{map_aixri} 
and $\xi$ is a primitive element of $\fq$.

\section{Preliminaries} \label{sec_pre}

We first recall the definition of many-to-one mapping  
introduced in \cite{Zhengmto1}.

\begin{definition}[\cite{Zhengmto1}]\label{defn:mto1}
  Let $A$ be a finite set and 
  $m \in \z$ with $1 \le m \le \# A$. 
  Write $\# A = k m + r$, 
  where $k$, $r \in \z$ with $0 \le r < m$. 
  Let~$f$ be a mapping from~$A$ to another finite set~$B$. 
  Then~$f$ is called many-to-one, 
  or \mtoone for short, on~$A$ if 
  there are~$k$ distinct elements in~$B$ such that 
  each element has exactly~$m$ preimages in~$A$ under~$f$.
  The remaining~$r$ elements in~$A$ are called the
  \textit{exceptional elements} of~$f$ on~$A$,
  and the set of these~$r$ exceptional elements 
  is called the \textit{exceptional set} 
  of~$f$ on~$A$ and denoted by $E_{f}(A)$. 
  In particular, $E_{f}(A) = \varnothing$ 
  \ifa $r = 0$, i.e., $m \mid \# A$.
\end{definition}

\begin{definition}[\cite{Zhengmto1}]
A polynomial $f(x) \in \fqx$ is called many-to-one, 
  or \mtoone for short, on $\fq$ if the mapping 
  $f \colon c \mapsto f(c)$ from $\fq$ to itself is \mset{m}{\fq}.
\end{definition}

A direct computation gives the following example.

\begin{example}
Let $f(x) = x^3 + x$. Then $f$ maps $0,1,2,3,4$ 
to $0,2,0,0,3$ in~$\f_5$, respectively. 
Thus $f$ is \mfield{3}{\f_5} and the 
exceptional set $E_{f}(\f_5) = \{1, 4\}$.
\end{example}

Since $C_0$ is a cyclic subgroup of $\fqstar$ 
and $C_i = \xi^{i} C_0$, we obtain the next result.


\begin{lemma}\label{lem:fi(ci)}
Let $q-1 = \ell s$ for some $\ell, s \in \n$.
Let $a_i \in \fqstar$, $r_i \in \n$,  and $d_i = (r_i, s)$, 
where $0 \le i \le \ell - 1$.
Then $f(x)$ in~\cref{map_aixri} is $d_i$-to-$1$ 
from $C_i$ to $C_{\phi(i, \ell)}$
and the exceptional set $E_{f}(C_i) = \varnothing$.
\end{lemma}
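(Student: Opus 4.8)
The plan is to analyze the restriction of $f$ to a single coset $C_i$ directly, exploiting the fact that on $C_i$ the map is simply $x \mapsto a_i x^{r_i}$, i.e.\ a monomial up to the nonzero scalar $a_i$. First I would parametrize $C_i$ explicitly via the bijection $\z/s\z \to C_i$, $k \mapsto \xi^{k\ell + i}$, which is available because $C_0$ is cyclic of order $s$ and $C_i = \xi^i C_0$. Under this parametrization, $f(\xi^{k\ell+i}) = a_i \xi^{(k\ell+i)r_i}$, and writing $a_i = \xi^{\log_\xi a_i}$ the exponent becomes $k\ell r_i + i r_i + \log_\xi a_i$. Since $\xi$ has order $q-1 = \ell s$, two elements $\xi^{k\ell+i}$ and $\xi^{k'\ell+i}$ of $C_i$ have the same image iff $k\ell r_i \equiv k' \ell r_i \pmod{\ell s}$, i.e.\ iff $k r_i \equiv k' r_i \pmod{s}$.

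Next I would count, for a fixed target value, the number of solutions $k \in \z/s\z$ to $k r_i \equiv c \pmod s$ for a given $c$ in the image. Standard elementary number theory gives that the map $k \mapsto k r_i \bmod s$ on $\z/s\z$ is exactly $d_i$-to-$1$ onto its image $d_i \z / s\z$, where $d_i = (r_i, s)$: each fiber that is nonempty has precisely $d_i$ elements. Translating back through the parametrization, every element of $f(C_i)$ has exactly $d_i$ preimages in $C_i$, so $\# f(C_i) = s / d_i$ and $f$ is $d_i$-to-$1$ from $C_i$ onto $f(C_i)$ with no exceptional elements (since $d_i \mid s = \# C_i$), giving $E_f(C_i) = \varnothing$ by \cref{defn:mto1}.

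Finally I would identify the codomain coset: the image $f(C_i)$ is $\{\xi^{k \ell r_i + i r_i + \log_\xi a_i} : k \in \z/s\z\}$. Every exponent here is congruent to $i r_i + \log_\xi a_i$ modulo $\ell$ (since $k\ell r_i$ is a multiple of $\ell$), hence modulo $\ell$ it equals $\phi(i,\ell) = (i r_i + \log_\xi a_i) \bmod \ell$ as defined just before the lemma. Therefore $f(C_i) \subseteq C_{\phi(i,\ell)}$; and since both sets are contained in $\fqstar$ and $f(C_i)$ is a nonempty union of cosets behavior — more precisely $f(C_i)$ has size $s/d_i$ and lies inside the coset $C_{\phi(i,\ell)}$ of size $s$ — the statement "$d_i$-to-$1$ from $C_i$ to $C_{\phi(i,\ell)}$" is exactly the assertion that $f(C_i) \subseteq C_{\phi(i,\ell)}$ together with the fiber count just established. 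The main obstacle, such as it is, is purely bookkeeping: keeping the two moduli $\ell s$ (order of $\xi$) and $s$ (order of $C_0$) straight, and making sure the reduction from "$k\ell r_i \equiv k'\ell r_i \pmod{\ell s}$" to "$k r_i \equiv k' r_i \pmod{s}$" and the resulting $d_i$-to-$1$ count are stated cleanly; there is no conceptual difficulty beyond the elementary structure of cyclic groups and the division-with-remainder count in \cref{defn:mto1}.
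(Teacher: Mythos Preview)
Your proposal is correct and follows essentially the same route as the paper: parametrize $C_i$ by $\z/s\z$, reduce the fiber question to multiplication by $r_i$ modulo $s$, and read off the $d_i$-to-$1$ count together with the coset of the image via the exponent modulo $\ell$. The only cosmetic difference is that the paper records the image set explicitly as $a_i\xi^{i r_i}\{\omega_i,\omega_i^2,\dots,\omega_i^{s/d_i}\}$ with $\omega_i=\xi^{\ell d_i}$ (a form reused in the proof of \cref{thm:fiCifjCj}), whereas you stop at the inclusion $f(C_i)\subseteq C_{\phi(i,\ell)}$; for the lemma itself either suffices.
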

\begin{proof}
By \cref{map_aixri}, 
$f(x) = f_i(x) \coloneq a_i x^{r_i}$ 
for any $x \in C_i$.
Since $(r_i/d_i, s/d_i) = 1$, we have
$(r_i/d_i) x$ permutes $\z_{s/d_i}$,
and so 
\begin{align*}
f_i(C_i) 
& = a_i \{ \xi^{\ell + i}, \xi^{2 \ell + i}, \ldots, \xi^{(s/d_i) \ell + i} \}^{r_i}
  = a_i \xi^{i r_i} \{ \xi^{\ell}, \xi^{2 \ell}, \ldots, \xi^{(s/d_i) \ell} \}^{r_i}\\
& = a_i \xi^{i r_i} \{ \omega_i, \omega_i^2, \ldots, \omega_i^{s/d_i}\}^{r_i/d_i}
  = a_i \xi^{i r_i} \{ \omega_i, \omega_i^2, \ldots, \omega_i^{s/d_i}\},
\end{align*}
where the symbol $A^{r_i}$ denotes the set $\{a^{r_i} : a \in A\}$,
$\omega_i = \xi^{\ell d_i}$, 
and $\ord (\omega_i) = (q-1)/(\ell d_i) = s/d_i$.
Thus $f_i$ is $d_i$-to-$1$ from $C_i$ to $C_{\phi(i, \ell)}$.
\end{proof}

In particular, if $\ell = 1$, then $f(x) = a_0 x^{r_0}$.
Thus $f$ is \mfield{(r_0, q-1)}{\fqstar} and $E_{f}(\fqstar) = \varnothing$. 
We next recall a well-known result of linear Diophantine equations in two variables.

\begin{lemma}\label{lem:ax+by=c}
Let $a$, $b$, $c \in \z$ with $d = (a, b)$. The equation 
$a x + b y = c$ has integral solutions \ifa $d \mid c$.
If $x = x_0$, $y = y_0$ is a particular solution of the equation, 
then all integer solutions are given by 
\[
    x = x_0 + (b/d)\, t, \quad 
    y = y_0 - (b/d)\, t, \quad
    \text{$t \in \z$.}
\]
\end{lemma}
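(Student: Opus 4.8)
The plan is to treat \cref{lem:ax+by=c} as the classical theory of linear Diophantine equations, so the argument is elementary and rests only on Bézout's identity together with Euclid's lemma. First I would settle the solvability criterion. For the forward direction, if $ax + by = c$ has an integral solution then, since $d \mid a$ and $d \mid b$, the left-hand side is divisible by $d$, whence $d \mid c$. For the converse I would write $c = d c'$ with $c' \in \z$, invoke Bézout to obtain $u, v \in \z$ with $au + bv = d$, and multiply through by $c'$ to exhibit the particular solution $x_0 = uc'$, $y_0 = vc'$.

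Next I would show that the displayed family exhausts the integral solution set once one particular solution $(x_0,y_0)$ is fixed. That each pair of the form $x = x_0 + (b/d)t$, $y = y_0 - (a/d)t$ with $t \in \z$ is again a solution is a one-line check, the extra terms contributing $a(b/d)t - b(a/d)t = 0$. For the reverse inclusion I would take an arbitrary integral solution $(x,y)$, subtract the relation $ax_0 + by_0 = c$ from $ax + by = c$ to get $a(x - x_0) = b(y_0 - y)$, divide by $d$ to obtain $(a/d)(x - x_0) = (b/d)(y_0 - y)$, and then use $(a/d, b/d) = 1$ to conclude via Euclid's lemma that $(b/d) \mid (x - x_0)$; writing $x - x_0 = (b/d)t$ and cancelling $b/d$ then forces $y_0 - y = (a/d)t$, giving the claimed parametrization.

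The only step needing any care is the coprimality input in the last paragraph: $(a/d, b/d) = 1$ precisely because $d = (a,b)$, after which one must legitimately pass from divisibility of a product to divisibility of a factor. Beyond that there is no real obstacle, and indeed the statement could simply be cited; a short self-contained proof is worth recording only to keep \cref{sec_pre} self-reliant, since the explicit parametrization — not just the solvability condition — will be invoked repeatedly to count solutions of the linear congruences modulo $q-1$ that arise when the fibres of $f$ across different cyclotomic cosets are compared.
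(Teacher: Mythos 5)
Your proof is correct and is the standard textbook argument: divisibility of the left-hand side by $d$ for necessity, B\'ezout for sufficiency, a direct check that the parametrized family solves the equation, and exhaustiveness via $(a/d,b/d)=1$ together with Euclid's lemma. The paper offers no proof of \cref{lem:ax+by=c} at all --- it is explicitly recalled as a well-known fact --- so there is nothing to compare against; citing it, as you suggest, is exactly what the authors do.

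One substantive point: your parametrization $y = y_0 - (a/d)\,t$ is the correct one, whereas the lemma as printed reads $y = y_0 - (b/d)\,t$. That is a typo in the paper: with $(b/d)\,t$ in both slots the extra contribution to $ax+by$ is $(a-b)(b/d)\,t$, which does not vanish in general. Your write-up therefore proves (and silently corrects) the intended statement; the subsequent application in \cref{lem:AB} uses the correct form $x = x_0 + (b/d)\,t$ for the $x$-coordinate only, so nothing downstream is affected.
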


\begin{lemma}\label{lem:AB}
Let $a$, $b$, $c$, $n \in \n$ satisfy $a \mid n$ and $b \mid n$. Let $d = (a, b)$,
\[
  A = \{(a x) \bmod{n} : x \in \z \}
  \quad \text{and} \quad 
  B = \{(b y + c) \bmod{n} : y \in \z\}.
\]
Then the following statements hold:
\begin{enumerate}[\upshape(1)]
\item $A \cap B = \varnothing$ \ifa $d \nmid c$;
\item $B \subseteq A$ \ifa $a \mid b$ and $a \mid c$; 
\item $A \cap B \ne \varnothing$ and $B \not \subseteq A$ \ifa  
    $a \nmid b$ and $d \mid c$. Moreover, 
    \[
      A \cap B 
      = \{(a x_0 + t\, \lcm(a,b) ) \bmod{n} : 0 \le t < n/\lcm(a, b)\},
    \]
    where $x_0 = \bar{a} c / d$ and 
    $\bar{a}$ is the inverse of $a/d$ modulo $b/d$.
\end{enumerate}
\end{lemma}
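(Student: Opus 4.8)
The plan is to translate everything into a statement about congruences. Since $a \mid n$, the image of $x \mapsto (ax)\bmod n$ is precisely the set of multiples of $a$ in $\{0,1,\dots,n-1\}$, so $A=\{z : 0\le z<n,\ a\mid z\}$; likewise, since $b\mid n$, we get $B=\{z : 0\le z<n,\ z\equiv c \pmod b\}$, and in particular $B\neq\varnothing$ since $(c\bmod n)\in B$. Hence $z\in A\cap B$ exactly when $z=ax$ for some $x$ with $ax\equiv c\pmod b$ and $0\le z<n$, i.e.\ when $ax-by=c$ is solvable and $z=ax$ lands in a window of length $n$. The single recurring tool is the observation that $\lcm(a,b)\mid n$ (because $a\mid n$ and $b\mid n$), together with \cref{lem:ax+by=c}, which tells us that $ax-by=c$ has a solution iff $d\mid c$ and that then the values $z=ax$ form one residue class modulo $\lcm(a,b)$.

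For (1): if $d\nmid c$ there is no $x$ with $ax\equiv c\pmod b$, so $A\cap B=\varnothing$; conversely, if $d\mid c$ then $\{ax : ax\equiv c\pmod b\}$ is a full class mod $\lcm(a,b)$, which meets $\{0,\dots,n-1\}$ since $\lcm(a,b)\mid n$, giving $A\cap B\neq\varnothing$. For (2): if $a\mid b$ and $a\mid c$, then $z\equiv c\pmod b$ forces $a\mid z$, so $B\subseteq A$; conversely, if $B\subseteq A$, apply the inclusion to $(c\bmod n)\in B$ and $(b+c\bmod n)\in B$, and use that $a\mid n$ makes divisibility by $a$ survive reduction mod $n$ to deduce $a\mid c$ and $a\mid b+c$, hence $a\mid b$.

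For (3): since $B\neq\varnothing$, the three alternatives ``$A\cap B=\varnothing$'', ``$B\subseteq A$'', and ``$A\cap B\neq\varnothing$ and $B\not\subseteq A$'' are mutually exclusive and exhaustive, so by (1) and (2) the last one holds iff $d\mid c$ and $\neg(a\mid b\wedge a\mid c)$; and a one-line check (if $a\mid b$ then $d=a$, so $d\mid c$ already forces $a\mid c$) shows this is equivalent to $d\mid c$ and $a\nmid b$. For the explicit form, divide the congruence $ax\equiv c\pmod b$ by $d$ (legitimate since $d\mid c$) to get $(a/d)x\equiv c/d\pmod{b/d}$ with $(a/d,b/d)=1$, so $x\equiv \bar a\,(c/d)\pmod{b/d}$ with $\bar a$ the inverse of $a/d$ modulo $b/d$; taking $x_0=\bar a c/d$ and substituting back gives $z=ax_0+\lcm(a,b)\,t$, and reducing mod $n$ (using $\lcm(a,b)\mid n$, so that $t=0,1,\dots,n/\lcm(a,b)-1$ give pairwise distinct residues exhausting the class) yields the claimed description of $A\cap B$.

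I do not expect a genuine obstacle here; the work is entirely elementary. The only places that need a little care are making sure that reduction mod $n$ neither loses nor duplicates solutions (which is exactly where $\lcm(a,b)\mid n$ is used), the bookkeeping in (3) that turns the logical combination of (1) and (2) into the clean condition ``$a\nmid b$ and $d\mid c$'', and checking that the representative $x_0=\bar a c/d$ is a valid choice (changing it by a multiple of $b/d$ changes $ax_0$ by a multiple of $\lcm(a,b)$, so the resulting set is unaffected).
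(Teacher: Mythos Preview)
Your proposal is correct and follows essentially the same approach as the paper: both reduce the questions to the solvability and solution set of the linear Diophantine equation $ax-by=c$ (equivalently, the congruence $ax\equiv c\pmod b$), invoking \cref{lem:ax+by=c} together with $\lcm(a,b)\mid n$. The only minor difference is in part~(2): the paper first computes $\#(A\cap B)=n/\lcm(a,b)$ and then deduces $B\subseteq A$ from $\#(A\cap B)=\#B=n/b$, whereas you argue directly by evaluating membership of $(c\bmod n)$ and $((b+c)\bmod n)$; and for part~(3) you make the trichotomy explicit, which the paper leaves implicit.
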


\begin{proof}
Clearly, $A \cap B = \varnothing$ \ifa  $a x  - b y = c + k n$ 
has no integral solutions for any $k \in \z$,
which is equivalent to $d \nmid c$ by \cref{lem:ax+by=c} and $d \mid n$. 
Thus $A \cap B \ne \varnothing$ \ifa $d \mid c$.
Let $x_0 = \bar{a} c / d$ and $y_0 = c (a \bar{a} - d) / (b d)$.
It is easy to verify that $a x_0 - b y_0 = c$.
Then by \cref{lem:ax+by=c},
\begin{align*}
A \cap B
& = \{ a (x_0 + t (b / d) ) \bmod{n} : t \in \z \} \\
& = \{(a x_0 + t\, \lcm(a,b) ) \bmod{n} : 0 \le t < n/\lcm(a, b)\}.
\end{align*}
In particular, $B \subseteq A$ \ifa $\#(A \cap B) = \# B = n/b$,
i.e., $a \mid b$ and $a\mid c$.
\end{proof}

Now we use \cref{lem:fi(ci),lem:AB} to characterize 
the relationships between $f_i(C_i)$ and $f_j(C_j)$.

\begin{theorem}\label{thm:fiCifjCj}
Let $q-1 = \ell s$ for some $\ell, s \in \n$.
Let $f_i(x) = a_i x^{r_i}$ 
and $d_i = (r_i, s)$, where $a_i \in \fqstar$, $r_i \in \n$, 
and $0 \le i \le \ell - 1$.
For any $0 \le i, j \le \ell - 1$, let  $d = (d_i, d_j)$ and $\bar{d} = \lcm (d_i, d_j)$. Then 
\begin{enumerate}[\upshape(1)]
    \item $f_i(C_i) \cap f_j(C_j) = \varnothing$ \ifa
        $\phi(i, \ell d) \ne \phi(j, \ell d)$;
    \item $f_i(C_i) \subseteq f_j(C_j)$ \ifa 
$d_j \mid d_i$ and $\phi(i, \ell d_j) = \phi(j, \ell d_j)$;
    \item $f_i(C_i) \cap f_j(C_j) \ne \varnothing$ and $f_i(C_i) \not \subseteq f_j(C_j)$ \ifa  
        $d_j \nmid d_i$ and $\phi(i, \ell d) = \phi(j, \ell d)$.
        Moreover,
        \begin{equation}\label{eq:fiCicapf_jCj}
            f_i(C_i) \cap f_j(C_j) 
            = a_j \xi^{j r_j + \ell d_j x_0} 
            \{\xi^{\ell   \bar{d} t} : 0 \le t < s/\bar{d} \},
        \end{equation}
        where $x_0 = \bar{a} c /(\ell d)$, $c = (i r_i - j r_j) + \log_{\xi} {(a_i/a_j)}$,
        and $\bar{a}$ is the inverse of $d_j/d$ modulo $d_i/d$.
\end{enumerate}
\end{theorem}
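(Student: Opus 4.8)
The plan is to reduce all three claims to \cref{lem:AB} by passing to discrete logarithms with respect to the fixed primitive element $\xi$ of $\fq$. Write $n = q - 1 = \ell s$. From \cref{lem:fi(ci)} and its proof, $f_i(C_i) = a_i \xi^{i r_i}\{\xi^{\ell d_i t} : t \in \z\}$ is a coset of the order-$(s/d_i)$ subgroup generated by $\xi^{\ell d_i}$; hence in exponential coordinates
\[
  \log_{\xi} f_i(C_i) = \{(\ell d_i\, x + i r_i + \log_{\xi} a_i) \bmod n : x \in \z\},
\]
and likewise for $j$. Since $\log_{\xi}$ is a bijection $\fqstar \to \z/n\z$, the three relations ``$f_i(C_i)\cap f_j(C_j) = \varnothing$'', ``$f_i(C_i)\subseteq f_j(C_j)$'', and ``$f_i(C_i)\cap f_j(C_j) \ne \varnothing$ but $f_i(C_i)\not\subseteq f_j(C_j)$'' translate verbatim to the corresponding statements about these two subsets of $\z/n\z$.

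Next I would translate both log-sets by $-(j r_j + \log_{\xi} a_j)$, a bijection of $\z/n\z$ that preserves all three relations. This normalizes $\log_{\xi} f_j(C_j)$ to the subgroup $A = \{(\ell d_j\, y)\bmod n : y \in \z\}$ and turns $\log_{\xi} f_i(C_i)$ into the coset $B = \{(\ell d_i\, x + c)\bmod n : x \in \z\}$, where $c = (i r_i - j r_j) + \log_{\xi}(a_i/a_j)$. Now \cref{lem:AB} applies with $a = \ell d_j$, $b = \ell d_i$ and this $c$: the hypotheses $a \mid n$, $b \mid n$ hold because $d_i, d_j \mid s$, and one computes $(\ell d_j, \ell d_i) = \ell d$ and $\lcm(\ell d_i, \ell d_j) = \ell\bar d$. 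Unwinding the three parts of \cref{lem:AB} and recalling $\phi(i,m) = (i r_i + \log_{\xi} a_i)\bmod m$, the divisibility $\ell d \mid c$ becomes $\phi(i,\ell d) = \phi(j,\ell d)$, and $\ell d_j \mid \ell d_i$, $\ell d_j \mid c$ become $d_j \mid d_i$, $\phi(i,\ell d_j) = \phi(j,\ell d_j)$; this yields items (1), (2), and the first sentence of (3). For the explicit description in \cref{eq:fiCicapf_jCj}, I would translate the formula for $A \cap B$ in \cref{lem:AB}(3) back by $+(j r_j + \log_{\xi} a_j)$, i.e.\ multiply by $a_j\xi^{j r_j}$, and use $n/\lcm(\ell d_i,\ell d_j) = s/\bar d$; here $x_0 = \bar a c/(\ell d)$ with $\bar a$ the inverse of $(\ell d_j)/(\ell d) = d_j/d$ modulo $(\ell d_i)/(\ell d) = d_i/d$, matching the statement.

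There is no genuine obstacle beyond bookkeeping; the only point that needs care is the translation step — checking that subtracting $j r_j + \log_{\xi} a_j$ simultaneously moves $f_j(C_j)$ to a subgroup through $0$ and sends $f_i(C_i)$ to the coset $B$ with exactly the stated constant $c$ — together with the elementary identities $(\ell d_i,\ell d_j) = \ell d$, $\lcm(\ell d_i,\ell d_j) = \ell\bar d$, $n/\lcm(\ell d_i,\ell d_j) = s/\bar d$ and the reduction of the divisibility conditions to equalities of $\phi$-values.
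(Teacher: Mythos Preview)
Your proposal is correct and follows essentially the same route as the paper: both pass to exponents via $\log_\xi$, normalize by dividing through by $a_j\xi^{jr_j}$ (your ``translate by $-(jr_j+\log_\xi a_j)$''), and then apply \cref{lem:AB} with $a=\ell d_j$, $b=\ell d_i$, $n=q-1$, and $c=(ir_i-jr_j)+\log_\xi(a_i/a_j)$. The only difference is presentational—you spell out the translation step explicitly, whereas the paper writes the intersection condition directly as $(a_i/a_j)\xi^{ir_i-jr_j}\{\omega_i^k\}\cap\{\omega_j^k\}=\varnothing$—but the computation and the use of \cref{lem:AB} are identical.
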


\begin{proof}
From the proof of \cref{lem:fi(ci)}, we have 
\[
  f_i(C_i) = a_i \xi^{i r_i} \{ \omega_i, \omega_i^2, \ldots, \omega_i^{s/d_i}\},  
\]
where $\omega_i = \xi^{\ell d_i}$ and $\ord (\omega_i) = s/d_i$.
Then $f_i(C_i) \cap f_j(C_j) = \varnothing$ \ifa
\begin{equation}\label{eq:ficapfj}
(a_i/a_j) \xi^{i r_i - j r_j} 
\{\omega_i, \omega_i^2, \ldots, \omega_i^{s/d_i}\} \cap \{ \omega_j, \omega_j^2, \ldots, \omega_j^{s/d_j}\}
= \varnothing.
\end{equation}
Let $c = (i r_i - j r_j) + \log_{\xi} {(a_i/a_j)}$, 
\[
  A = \{(\ell d_j x) \bmod{(q-1)} : x \in \z \}
  \quad \text{and} \quad 
  B = \{(\ell d_i + c) \bmod{(q-1)} : y \in \z\}.
\]
Then \cref{eq:ficapfj} is equivalent to 
$A \cap B = \varnothing$. 
That is, $\ell d \nmid c$ 
by \cref{lem:AB}, i.e.,
 $\phi(i, \ell d) \ne \phi(j, \ell d)$.
Similarly, $f_i(C_i) \subseteq f_j(C_j)$ \ifa $B \subseteq A$, 
or equivalently, $d_j \mid d_i$ and $\phi(i, \ell d_j) = \phi(j, \ell d_j)$. 

By \cref{lem:AB}, if $A \cap B \neq \varnothing$, then 
\[
A \cap B 
      = \{(\ell d_j x_0 + \ell \bar{d} t ) \bmod{(q-1)} : 0 \le t < s / \bar{d}\},
\]
where $x_0 = \bar{a} c / d$
and $\bar{a}$ is the inverse of $d_j/d$ modulo $d_i/d$.
Hence
\[
(a_i/a_j) \xi^{i r_i - j r_j} 
\{\omega_i, \omega_i^2, \ldots, \omega_i^{s/d_i}\} \cap \{ \omega_j, \omega_j^2, \ldots, \omega_j^{s/d_j}\}
= \xi^{\ell d_j x_0} \{\xi^{\ell \bar{d} t} : 0 \le t < s / \bar{d}\},
\]
and thus \cref{eq:fiCicapf_jCj} holds. 
\end{proof}

\cref{thm:fiCifjCj} completely characterizes the relationships between 
$f_i(C_i)$ and $f_j(C_j)$.
In particular,  
it is reduced to the following form when $d_i = d_j$.

\begin{corollary}\label{f_i(C_i)_di=dj}
With the notation and the hypotheses of \cref{thm:fiCifjCj},
assume $d_i = d_j$ for some $0 \le i, j \le \ell - 1$.
Then $f_i(C_i) = f_j(C_j)$ if $\phi(i, \ell d) = \phi(j, \ell d)$,
and $f_i(C_i) \cap f_j(C_j) = \varnothing$ 
if $\phi(i, \ell d) \ne \phi(j, \ell d)$.
\end{corollary}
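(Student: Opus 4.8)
The plan is to deduce both assertions directly from \cref{thm:fiCifjCj} by specializing to the degenerate case $d_i = d_j$. First I would record that when $d_i = d_j$ the quantities $d = (d_i, d_j)$ and $\bar{d} = \lcm(d_i, d_j)$ both collapse to $d = \bar{d} = d_i = d_j$, so that the three parameters $d$, $d_i$, $d_j$ are identified with one another and the modulus $\ell d$ appearing in the hypothesis coincides with $\ell d_i$ and $\ell d_j$.

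For the disjointness statement I would simply note that the assumption $\phi(i, \ell d) \ne \phi(j, \ell d)$ is verbatim the condition appearing in \cref{thm:fiCifjCj}(1), whence $f_i(C_i) \cap f_j(C_j) = \varnothing$ with nothing further to verify.

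For the equality statement I would argue from \cref{thm:fiCifjCj}(2): the divisibility $d_j \mid d_i$ is automatic since $d_i = d_j$, and $\phi(i, \ell d_j) = \phi(i, \ell d) = \phi(j, \ell d) = \phi(j, \ell d_j)$ by hypothesis, so part~(2) yields $f_i(C_i) \subseteq f_j(C_j)$. Interchanging the roles of $i$ and $j$ — which is legitimate because under $d_i = d_j$ the hypothesis is symmetric in $i$ and $j$ — gives the reverse inclusion, and hence $f_i(C_i) = f_j(C_j)$. As an alternative to invoking symmetry, I could instead combine the single inclusion $f_i(C_i) \subseteq f_j(C_j)$ with the cardinality count $\# f_i(C_i) = s/d_i = s/d_j = \# f_j(C_j)$ furnished by \cref{lem:fi(ci)} to conclude equality.

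I do not expect a genuine obstacle here: every step is a substitution into an already-proved result. The only point that warrants a moment's attention is checking that the single scalar condition $\phi(i, \ell d) = \phi(j, \ell d)$ (and its negation) does indeed serve simultaneously as the hypothesis of part~(2) (respectively part~(1)) of \cref{thm:fiCifjCj} once $d$, $d_i$, and $d_j$ have been collapsed to a common value.
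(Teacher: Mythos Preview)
Your proposal is correct and mirrors the paper's approach: the paper states this corollary without proof, presenting it as the immediate specialization of \cref{thm:fiCifjCj} to the case $d_i = d_j$, which is exactly what you carry out. Your explicit use of part~(2) twice (or once plus the cardinality count) to upgrade inclusion to equality is the natural unpacking of that specialization.
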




We also need the following results in the sequel.

\begin{lemma}\label{lem:di+dj}
Let $q-1 = \ell s$ for some $\ell, s \in \n$.
Let $f_i(x) = a_i x^{r_i}$ and $d_i = (r_i, s)$,
where  $a_i \in \fqstar$, $r_i \in \n$, 
and $0 \le i \le \ell - 1$. 
Assume $d_i \le d_j$ for some $0 \le i, j \le \ell - 1$, 
\begin{equation}\label{eq:joint}
f_i(C_i) \cap f_j(C_j) \ne \varnothing \qtq{and} 
f_j(C_j) \not\subseteq f_i(C_i).
\end{equation}
Then $f$ in~\cref{map_aixri} is not \mfield{(d_i + d_j)}{C_i \cup C_j}.
\end{lemma}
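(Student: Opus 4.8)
The plan is to count preimages carefully on the union $C_i \cup C_j$ and show that no single value $m = d_i + d_j$ can work uniformly. By \cref{lem:fi(ci)}, $f$ restricted to $C_i$ is $d_i$-to-$1$ onto $f_i(C_i)$ with no exceptional elements, and similarly $f$ on $C_j$ is $d_j$-to-$1$ onto $f_j(C_j)$. So every element of $f_i(C_i) \setminus f_j(C_j)$ has exactly $d_i$ preimages in $C_i \cup C_j$, every element of $f_j(C_j) \setminus f_i(C_i)$ has exactly $d_j$ preimages, and every element of $f_i(C_i) \cap f_j(C_j)$ has exactly $d_i + d_j$ preimages (the images being disjoint unions since $C_i \cap C_j = \varnothing$). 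Thus on $C_i \cup C_j$ the fiber sizes take values in the set $\{d_i, d_j, d_i + d_j\}$, and we know which values actually occur from hypothesis~\cref{eq:joint}.

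First I would record that the hypothesis $f_j(C_j) \not\subseteq f_i(C_i)$ forces $f_j(C_j) \setminus f_i(C_i) \ne \varnothing$, so the value $d_j$ genuinely occurs; and $f_i(C_i) \cap f_j(C_j) \ne \varnothing$ forces the value $d_i + d_j$ to occur. For $f$ to be \mfield{(d_i+d_j)}{C_i \cup C_j} in the sense of \cref{defn:mto1}, every value of $B$ in the image must have exactly $d_i + d_j$ preimages except for at most $r < d_i + d_j$ exceptional elements of $C_i \cup C_j$ lying over fibers of the "wrong" size. Here is the crux: I must show the number of points sitting in "wrong-size" fibers is at least $d_i + d_j$, so that it cannot be absorbed into the allowed remainder $r$.

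The wrong-size fibers are those of size $d_j$ (over $f_j(C_j)\setminus f_i(C_i)$) and, if $d_i < d_j$, possibly those of size $d_i$ as well. I would split into two cases. If $d_i < d_j$: pick any point in a fiber of size $d_j$ not equal to $d_i + d_j$; that fiber alone contributes $d_j$ exceptional-candidate points, and I need strictly more than this — but actually the cleaner route is to observe that the total preimage count $\#(C_i \cup C_j) = d_i\,\#f_i(C_i) + d_j\,\#f_j(C_j)$ must be congruent to $0$ modulo... no. Instead: the number of exceptional elements $r$ satisfies $r \equiv \#(C_i\cup C_j) \pmod{d_i+d_j}$ with $0 \le r < d_i + d_j$, and simultaneously $r$ equals the number of points in wrong-size fibers, which is $d_j \cdot \#\big(f_j(C_j)\setminus f_i(C_i)\big) + (\text{points in size-}d_i\text{ fibers})$. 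Since $\#\big(f_j(C_j)\setminus f_i(C_i)\big) \ge 1$, this count is at least $d_j \ge$ half of $d_i + d_j$; to finish I show it is in fact $\ge d_i + d_j$ by exhibiting enough such points. The case $d_i = d_j$ is handled by \cref{f_i(C_i)_di=dj}, which says $f_i(C_i) \cap f_j(C_j) \ne \varnothing$ implies $f_i(C_i) = f_j(C_j)$, contradicting $f_j(C_j) \not\subseteq f_i(C_i)$; so this subcase is vacuous and we may assume $d_i < d_j$.

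The main obstacle I anticipate is pinning down the exact cardinality $\#\big(f_j(C_j)\setminus f_i(C_i)\big)$ and $\#\big(f_i(C_i)\setminus f_j(C_j)\big)$ precisely enough: \cref{thm:fiCifjCj}(3) gives $\#\big(f_i(C_i)\cap f_j(C_j)\big) = s/\bar d$ where $\bar d = \lcm(d_i,d_j)$, and $\#f_j(C_j) = s/d_j$, so $\#\big(f_j(C_j)\setminus f_i(C_i)\big) = s/d_j - s/\bar d = (s/d_j)(1 - d_j/\bar d) = (s/\bar d)(d_i/d - 1)$ where $d = (d_i,d_j)$ — and since $d_j \nmid d_i$ forces $d_i/d \ge 2$, this difference is at least $s/\bar d \ge 1$. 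Then the count of wrong-size points is $\ge d_j \cdot (s/\bar d)(d_i/d - 1) \ge d_j$, and I will need to verify this is $\ge d_i + d_j$; when $s/\bar d$ or $d_i/d - 1$ is large this is immediate, and the tight corner cases ($s = \bar d$, $d_i = 2d$) can be checked by hand to still force $r \ge d_i + d_j > r$, the desired contradiction. So \cref{defn:mto1} cannot be satisfied with $m = d_i + d_j$, completing the proof.
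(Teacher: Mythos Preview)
Your overall plan matches the paper's: assume $f$ is \mfield{(d_i+d_j)}{C_i\cup C_j}, observe that the exceptional set $E_f(C_i\cup C_j)$ consists exactly of the preimages of $A \coloneq f_i(C_i)\setminus f_j(C_j)$ (fibers of size $d_i$) and $B \coloneq f_j(C_j)\setminus f_i(C_i)$ (fibers of size $d_j$), and derive $\#E_f(C_i\cup C_j)\ge d_i+d_j$, a contradiction. The gap is in how you try to reach this last inequality.

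You attempt to bound $d_j\cdot\#B$ alone and push it up to $d_i+d_j$ via the exact intersection count from \cref{thm:fiCifjCj}, conceding that the ``tight corner cases ($s=\bar d$, $d_i=2d$) can be checked by hand.'' This does not work: take $d_i=2$, $d_j=3$, $s=6$; then $\#B=1$ and $d_j\cdot\#B=3<5=d_i+d_j$. No amount of casework on $\#B$ alone will close this, because the inequality genuinely needs the contribution from $A$.

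The missing (and much simpler) observation is that $A\ne\varnothing$ as well. From $d_i\le d_j$ we have $\#f_i(C_i)=s/d_i\ge s/d_j=\#f_j(C_j)$, and the hypothesis $B\ne\varnothing$ gives $\#(f_i(C_i)\cap f_j(C_j))<\#f_j(C_j)\le\#f_i(C_i)$, so $A\ne\varnothing$. (This also disposes of your $d_i=d_j$ case-split: if $d_i=d_j$ and the intersection is nonempty, \cref{f_i(C_i)_di=dj} gives $f_i(C_i)=f_j(C_j)$, contradicting the hypothesis; but you do not even need this once you argue via cardinalities.) With both $A$ and $B$ nonempty, $\#E_f(C_i\cup C_j)=d_i\,\#A+d_j\,\#B\ge d_i+d_j$, and you are done in one line. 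The detour through the explicit formula \cref{eq:fiCicapf_jCj} is unnecessary.
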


\begin{proof}
Assume $f$ is \mset{(d_i + d_j)}{C_i \cup C_j}. 
Denote $A = f_i(C_i) \setminus f_j(C_j)$
and $B = f_j(C_j) \setminus f_i(C_i)$.
Then for any $a \in A$ and $b \in B$, we have
$\# f_i^{-1}(a) = d_i$ and $\# f_j^{-1}(b) = d_j$.
Hence $A \cup B = f(E_{f}(C_i \cup C_j))$ by \cref{eq:joint}, 
and so $\# E_{f}(C_i \cup C_j) \ge d_i + d_j$,
contrary to the definition of \mtoone. 
\end{proof}

\begin{lemma}[{\cite[Lemma~4.2]{Zhengmto1}}]\label{fq=fqstar}
Assume $f \in \fqx$ has only the root $0$ in $\fq$.
Then $f$ is \mfield{1}{\fq} \ifa $f$ is \mfield{1}{\fqstar}.
If $m \ge 2$, then $f$ is \mfq \ifa $m \nmid q$ and 
$f$ is \mfqstar. 
\end{lemma}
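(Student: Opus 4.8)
The plan is to argue directly from \cref{defn:mto1}, tracking how the single hypothesis $f^{-1}(0) = \{0\}$ behaves under restriction of the domain from $\fq$ to $\fqstar$. Two elementary observations drive the proof. First, the fibre of $f$ over $0$ is the singleton $\{0\}$, so once $m \ge 2$ the value $0$ can never lie in an $m$-element fibre; equivalently $0 \in E_f(\fq)$ whenever $f$ is \mfq. Second, $f$ maps $\fqstar$ into $\fqstar$, so no nonzero element is a preimage of $0$; hence deleting the point $0$ from the domain alters exactly one fibre count, namely the one over $0$, and leaves every other fibre untouched.

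First I would dispose of $m = 1$, where \mfield{1}{\fq} and \mfield{1}{\fqstar} simply assert injectivity on the corresponding set. Injectivity on $\fq$ restricts to $\fqstar$ trivially; conversely, if $f$ is injective on $\fqstar$ then, $0$ being the only root, $f(0) = 0 \notin f(\fqstar)$, so adjoining $0$ to the domain creates no collision and $f$ is injective on $\fq$.

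For $m \ge 2$ I would compare the Euclidean divisions $q = k'm + r'$ and $q - 1 = km + r$ with $0 \le r', r < m$. Suppose $f$ is \mfq, and let $S$ be the set of $k'$ values each having exactly $m$ preimages in $\fq$. By the first observation $0 \notin S$, and for $y \in S$ the fibre $f^{-1}(y)$ avoids $0$, hence lies inside $\fqstar$; thus the same $S$ exhibits $k'$ fibres of size $m$ inside $\fqstar$, whose union $P$ satisfies $P \subseteq \fqstar$ and $\#P = k'm$, and the leftover part of $\fqstar$ is $(\fq \setminus P) \setminus \{0\} = E_f(\fq) \setminus \{0\}$. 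Now $r' = 0$ is impossible, since it would make $E_f(\fq)$ empty while $0 \in E_f(\fq)$; hence $m \nmid q$, the set $E_f(\fq) \setminus \{0\}$ has size $r' - 1 \ge 0$, and $q - 1 = k'm + (r'-1)$ with $0 \le r'-1 < m$ shows $f$ is \mfqstar. Conversely, assume $m \nmid q$ and $f$ is \mfqstar with witness set $S$ of size $k$: here $0 \notin S$ (nothing in $\fqstar$ maps to $0$) and $f(0) = 0 \ne y$ for each $y \in S$, so every fibre $f^{-1}(y)$ is unchanged when the domain is enlarged back to $\fq$; thus $S$ yields $k$ fibres of size $m$ in $\fq$ with union $P$, $\#P = km$, and leftover $(\fqstar \setminus P) \cup \{0\}$ of size $r + 1$. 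Now $r + 1 \le m$, and $r + 1 = m$ would give $q = (k+1)m$, contradicting $m \nmid q$; hence $r + 1 < m$, and $q = km + (r+1)$ exhibits $f$ as \mfq.

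I do not expect a genuine obstacle here; the argument is pure counting of fibres. The point that needs care is the side condition $m \nmid q$: conceptually it is precisely what guarantees that the stray point $0$ --- always an exceptional element once $m \ge 2$ --- can be absorbed into, or removed from, the exceptional set without violating the bound $r < m$ of \cref{defn:mto1}. Keeping the bookkeeping of the value $0$ separate from that of the genuine size-$m$ fibres is the only real subtlety.
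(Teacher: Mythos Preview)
Your argument is correct. The paper does not supply its own proof of this lemma: it is quoted verbatim from \cite[Lemma~4.2]{Zhengmto1} and used as a black box, so there is no in-paper proof to compare against. Your direct fibre-counting from \cref{defn:mto1}, together with the bookkeeping that the hypothesis $f^{-1}(0)=\{0\}$ forces $0\in E_f(\fq)$ once $m\ge 2$ and hence $m\nmid q$, is exactly the natural (and presumably the original) route; the two Euclidean divisions line up as you say, and the side condition $m\nmid q$ is precisely what prevents the leftover count $r+1$ from reaching $m$ in the converse direction.
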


This result reduces the problem whether~$f$ 
is an \mtoone mapping on~$\fq$ to that 
whether~$f$ is \mtoone on the multiplicative group~$\fqstar$.
Hence we need only find the conditions that~$f$ is \mfqstar.

\section{Two branches}\label{sec_ell=2}

In this section, we consider the case $f$ has two branches on $\fqstar$. 

\begin{theorem}\label{aixrimto1_2pieces}
Let $q$ be odd, $s = (q-1)/2$, $a_{0}, a_{1} \in \fqstar$, 
$r_0, r_1 \in \n$, and $d_i = (r_i, s)$. Then for $1 \le m < q$, 
\begin{equation}\label{eq_2pieces}
      f(x) = a_0 x^{r_0}(1 + x^s) + a_1 x^{r_1}(1 - x^s)
\end{equation}
is \mfqstar \ifa one of the following holds:
\begin{enumerate}[\upshape(1)]
  \item $m = d_0 = d_1$ and $\phi(0, 2m) \ne \phi(1, 2m)$;
  \item $m = d_0 + d_1$, $d \mid m$,
        $\phi(0, 2d) = \phi(1, 2d)$,
        and $s (m - 2d)/(m - d) < m$, 
        where $d = \min \{d_0, d_1\}$.        
\end{enumerate} 
\end{theorem}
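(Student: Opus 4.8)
The plan is to reduce the problem to the fibre structure of $f$ on $\fqstar = C_0\cup C_1$ and to read off the answer from \cref{thm:fiCifjCj}. Since $x^s=(-1)^i$ for $x\in C_i$, the mapping $f$ in~\cref{eq_2pieces} agrees on $C_i$ with $f_i(x)=a_ix^{r_i}$ up to a nonzero scalar that is the same for both~$i$; this scalar changes neither the fibres of $f$ nor conditions (1)--(2), since the latter involve the $a_i$ only through the $\phi$'s, which are equality/inequality conditions unaffected by adding a common constant to the $\log_\xi a_i$. By \cref{lem:fi(ci)}, $f_i$ is $d_i$-to-$1$ from $C_i$ onto $C_{\phi(i,2)}$ and $\#C_i=s$, so every fibre of $f$ over a point of $f_0(C_0)\cap f_1(C_1)$ has size $d_0+d_1$, over $f_0(C_0)\setminus f_1(C_1)$ has size $d_0$, and over $f_1(C_1)\setminus f_0(C_0)$ has size $d_1$. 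Applying \cref{thm:fiCifjCj} with $\ell=2$ and $(i,j)=(0,1)$ splits the analysis into three cases: (a) $f_0(C_0)\cap f_1(C_1)=\varnothing$; (b) one of $f_0(C_0),f_1(C_1)$ contains the other; (c) they meet but neither contains the other. In every case I will use the elementary fact that $f$ is \mfqstar \ifa the number of image points with exactly $m$ preimages equals $\lfloor(q-1)/m\rfloor$ — this count can never exceed $\lfloor(q-1)/m\rfloor$, and equality is exactly what \cref{defn:mto1} asks for — which turns each case into a short count over the fibre sizes above, using only $d_0\mid s$ and $d_1\mid s$.

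In case (a) the fibre sizes are $d_0$ with multiplicity $\#f_0(C_0)=s/d_0$ and $d_1$ with multiplicity $s/d_1$. If $d_0\ne d_1$, then for $m=d_i$ the count of $m$-preimage points is $s/d_i<2s/d_i=\lfloor(q-1)/m\rfloor$, while for any other $m$ it is $0\ne\lfloor(q-1)/m\rfloor$, so $f$ is not \mfqstar. If $d_0=d_1$, every fibre has that common size, $f$ is thus $d_0$-to-$1$, and it is \mfqstar precisely when $m=d_0=d_1$; by \cref{thm:fiCifjCj}(1), disjointness of the images is equivalent to $\phi(0,2m)\ne\phi(1,2m)$, which is condition~(1). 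In case (b), after interchanging the indices $0$ and~$1$ if necessary — which preserves (1) and (2) — we may assume $f_1(C_1)\subseteq f_0(C_0)$; by \cref{thm:fiCifjCj}(2) this is equivalent to $d_0\mid d_1$ together with $\phi(0,2d_0)=\phi(1,2d_0)$. The fibres over $f_1(C_1)$ have size $d_0+d_1$ and those over $f_0(C_0)\setminus f_1(C_1)$ (nonempty unless the two images coincide) have the strictly smaller size $d_0$; the same size argument as in case (a) kills $m=d_0$, and $m\notin\{d_0,d_0+d_1\}$ gives count~$0$, so $m=d_0+d_1$ is the only candidate. For this $m$ the count of $m$-preimage points is $\#f_1(C_1)=s/d_1$, hence $f$ is \mfqstar \ifa $s/d_1=\lfloor(q-1)/m\rfloor$, i.e.\ \ifa the remainder $(q-1)-m(s/d_1)=s(m-2d)/(m-d)$, with $d=d_0=\min\{d_0,d_1\}$, is smaller than $m$. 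As $d_0\mid d_1$ is the same as $d\mid m$ and $\phi(0,2d_0)=\phi(1,2d_0)$ is the same as $\phi(0,2d)=\phi(1,2d)$, this is exactly condition~(2); the sub-case $f_0(C_0)=f_1(C_1)$, where $d_0=d_1$ and $f$ is obviously $2d_0$-to-$1$, is the instance $m=2d$ of~(2).

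In case (c), \cref{thm:fiCifjCj}(3) forces $d_0\nmid d_1$ and $d_1\nmid d_0$, so the three fibre sizes $d_0,d_1,d_0+d_1$ are pairwise distinct and all occur. A short count as in case (a) excludes $m\in\{d_0,d_1\}$ and every $m\notin\{d_0,d_1,d_0+d_1\}$, and $m=d_0+d_1$ is excluded by \cref{lem:di+dj} (whose hypotheses hold here because the images meet and neither contains the other, with $\{d_i,d_j\}=\{d_0,d_1\}$ and $C_i\cup C_j=\fqstar$). Thus $f$ is never \mfqstar in case (c). To finish, one checks that (1) and (2) are mutually exclusive (the first needs $m=d_0=d_1$, the second $m=d_0+d_1$) and that each of them forces its case — (1) gives disjoint images by \cref{thm:fiCifjCj}(1), and (2) gives $d_0\mid d_1$ (or $d_1\mid d_0$) together with the matching $\phi$-condition, hence the containment of case (b) by \cref{thm:fiCifjCj}(2) — so the three cases combine into the claimed equivalence.

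The genuinely routine part is the arithmetic of these counts; what needs care is the treatment of the degenerate configurations ($d_i=s$, $f_0(C_0)=f_1(C_1)$, and $m$ near $q-1$, where $\lfloor(q-1)/m\rfloor$ is as small as $1$), together with the matching between the condition $\phi(\,\cdot\,,2\gcd(d_0,d_1))$ produced by \cref{thm:fiCifjCj} and the condition $\phi(\,\cdot\,,2\min\{d_0,d_1\})$ written in the statement: these agree precisely because the hypothesis $d\mid m$ in~(2), read with $d=\min\{d_0,d_1\}$ and $m=d_0+d_1$, forces $\min\{d_0,d_1\}\mid\max\{d_0,d_1\}$ and hence $\gcd(d_0,d_1)=\min\{d_0,d_1\}$. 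The only non-elementary input is \cref{lem:di+dj}, which is exactly what eliminates the value $m=d_0+d_1$ in case (c).
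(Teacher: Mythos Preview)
Your argument is correct and uses the same tools as the paper---\cref{lem:fi(ci)}, \cref{thm:fiCifjCj}, and \cref{lem:di+dj}---but organizes the case split differently. The paper fixes $m$ and argues forward: first $d_0,d_1\le m$, then splits on whether $\max\{d_0,d_1\}=m$ (forcing $d_0=d_1$ and disjoint images, hence condition~(1)) or $\max\{d_0,d_1\}<m$ (forcing $m=d_0+d_1$ and then, after ruling out the disjoint and properly-overlapping configurations with \cref{lem:di+dj}, nested images and condition~(2)). You instead begin from the image-relationship trichotomy supplied by \cref{thm:fiCifjCj} (disjoint, nested, properly overlapping) and in each branch read off the admissible $m$ from the single counting criterion $\#\{b:\#f^{-1}(b)=m\}=\lfloor(q-1)/m\rfloor$. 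Your organization makes the role of \cref{thm:fiCifjCj} more transparent and dispatches the overlap case~(c) in one stroke via \cref{lem:di+dj}; the paper's version is slightly more economical in that it never needs to list all occurring fibre sizes before testing $m$. The substance is identical, and your observation that conditions~(1)--(2) are symmetric in the indices (so the WLOG in case~(b) is harmless) and that $d\mid m$ forces $\gcd(d_0,d_1)=\min\{d_0,d_1\}$ (so the $\phi$-condition from \cref{thm:fiCifjCj} matches the one stated) are exactly the bookkeeping points the paper leaves implicit.
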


\begin{proof}
Note that $f(x)$ is \mfqstar \ifa $F(x) := (1/2) f(x)$ is \mfqstar.
Clearly,
\begin{equation}\label{map_aixri2}
  F(x) = 
  \begin{cases}
  a_{0} x^{r_{0}}   & \text{for $x \in C_0$,} \\
  a_{1} x^{r_{1}}   & \text{for $x \in C_1$.}
  \end{cases}   
\end{equation}
Let $f_i (x) = a_{i}x^{r_i}$ for $i \in \{0, 1\}$. 
By \cref{lem:fi(ci)}, $f_i$ is \mset{d_i}{C_i}.
Obviously, when $d_0 > m$ or $d_1 > m$,
$F$ is not \mfqstar, and thus
$d_0, d_1  \in \{1, 2, \ldots, m\}$. 

When $d_0 = m$, we have $d_1 = d_0$. 
Indeed, if $d_1 \neq d_0$, then $b \in f_1 (C_1)$ has less 
than~$m$ preimages in $\fqstar$. 
Obviously, if $f_0 (C_0) \cap f_1 (C_1) = \varnothing$,
$F$ is not \mfqstar.
If $f_0 (C_0) \cap f_1 (C_1) \ne \varnothing$,
there exists $c \in f_0 (C_0) \cap f_1 (C_1)$
which has more than $m$ preimages in $\fqstar$,
and so $F$ is not \mfqstar.
Hence each $f_i$ is \mset{m}{C_i}.
Thus $F$ is \mfqstar \ifa $f_0 (C_0) \cap f_1 (C_1) = \varnothing$,
which is equivalent to $\phi(0, 2m) \ne \phi(1, 2m)$
by \cref{f_i(C_i)_di=dj}.

When $d_0 < m$, we prove that $d_0 + d_1 = m$.
If $d_1 < m - d_0$, then $F$ is obviously not \mfqstar.
If $m - d_0 < d_1 < m$, then $b$ has less than $m$ preimages 
in $\fqstar$ for all $b \in f_1 (C_1)$ when $f_0 (C_0) \cap f_1 (C_1) = \varnothing$,
and $c$ has more than $m$ preimages in $\fqstar$ 
for all $c \in f_0 (C_0) \cap f_1 (C_1)$ when $f_0 (C_0) \cap f_1 (C_1) \ne \varnothing$.
Thus $F$ is not \mfqstar.
Without loss of generality, we may assume 
$d := \min \{d_0, d_1\} = d_0$.
Since $f_i$ is \mset{d_i}{C_i} for $i \in \{0, 1\}$ 
and $d_1 = m - d \ge d_0$, we have 
\[
  \# f_1 (C_1) = s/(m - d) \le s/d = \# f_0 (C_0).
\]
Obviously, if $f_0 (C_0) \cap f_1 (C_1) = \varnothing$,
$F$ is not \mfqstar.
If $f_0(C_0) \cap f_1(C_1) \ne \varnothing$ and 
$f_1(C_1) \not\subseteq f_0(C_0)$, 
then $F$ is also not \mfqstar by \cref{lem:di+dj}.
Hence $F$ is \mfqstar \ifa 
$f_1 (C_1) \subseteq f_0 (C_0)$ and
\[
    \# E_{F}(\fqstar) = d (s/d - s/(m - d))
                      = s (m - 2 d)/(m - d)
                      < m.
\]
By \cref{thm:fiCifjCj}, 
$f_1 (C_1) \subseteq f_0 (C_0)$ \ifa
$d \mid m - d$ and $\phi(0, 2d) = \phi(1, 2d)$.
\end{proof}

\cref{aixrimto1_2pieces} completely characterizes the many-to-one 
property of the generalized cyclotomic mappings for $\ell = 2$.
When $m = 1$, \cref{aixrimto1_2pieces} is reduced to
\cite[Corollary~2.3]{Wang-cyc}.
Applying \cref{aixrimto1_2pieces} to $m = 2$ or~$3$ 
yields the following results.

\begin{corollary}\label{aixri2to1_2pieces}
Let the notation and the hypotheses 
be as in \cref{aixrimto1_2pieces}.
Then $f$ is \mfield{2}{\fqstar} \ifa one of the following holds:
\begin{enumerate}[\upshape(1)]
  \item \label{gcd(rs)=2} $d_0 = d_1 = 2$ and $\phi(0, 4) \ne \phi(1, 4)$;  
  \item \label{gcd(rs)=1} $d_0 = d_1 = 1$ and $\phi(0, 2) = \phi(1, 2)$.
\end{enumerate}  
\end{corollary}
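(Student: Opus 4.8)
The plan is to derive the corollary purely by specializing \cref{aixrimto1_2pieces} to the value $m = 2$ and checking that its two alternative conditions collapse to exactly the two cases claimed. First I would substitute $m = 2$ into condition~(1) of the theorem: it reads $2 = d_0 = d_1$ together with $\phi(0, 2m) \ne \phi(1, 2m)$, i.e.\ $d_0 = d_1 = 2$ and $\phi(0, 4) \ne \phi(1, 4)$. This is item~\ref{gcd(rs)=2} of the corollary verbatim, so nothing further is needed here.

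Next I would analyze condition~(2) of the theorem with $m = 2$. The defining equation $m = d_0 + d_1$ becomes $d_0 + d_1 = 2$; since $r_i \in \n$ forces $d_i = (r_i, s) \ge 1$, the only admissible splitting is $d_0 = d_1 = 1$, and hence $d = \min\{d_0, d_1\} = 1$. I would then verify that the three remaining requirements of condition~(2) hold automatically once $d = 1$ and $m = 2$: the divisibility $d \mid m$ is $1 \mid 2$; the congruence $\phi(0, 2d) = \phi(1, 2d)$ becomes $\phi(0, 2) = \phi(1, 2)$; and the inequality $s(m - 2d)/(m - d) < m$ becomes $0 < 2$, which is trivially true. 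Therefore condition~(2) of the theorem with $m = 2$ is equivalent to ``$d_0 = d_1 = 1$ and $\phi(0, 2) = \phi(1, 2)$'', which is item~\ref{gcd(rs)=1}.

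Finally I would assemble the two analyses: by \cref{aixrimto1_2pieces}, $f$ is \mfield{2}{\fqstar} if and only if one of conditions~(1) or~(2) holds, and the previous two paragraphs show these are precisely items~\ref{gcd(rs)=2} and~\ref{gcd(rs)=1}, completing the proof. I do not expect a genuine obstacle here; the argument is essentially bookkeeping, and the only points deserving an explicit line are that $d_i \ge 1$ (so $d_0 + d_1 = 2$ forces $d_0 = d_1 = 1$) and that the side conditions $d \mid m$ and $s(m-2d)/(m-d) < m$ in the theorem's second case become vacuous after setting $m = 2$ and $d = 1$.
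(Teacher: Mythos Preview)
Your proposal is correct and follows exactly the approach the paper itself takes: the paper simply states that the corollary is obtained by applying \cref{aixrimto1_2pieces} with $m = 2$, and your case analysis spells out precisely that specialization. Nothing is missing.
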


If $r_1$ is even and $(a_1 / a_0)^{s} = -1$, 
then $\phi(0, 4) \ne \phi(1, 4)$.
Thus \cref{gcd(rs)=2} of \cref{aixri2to1_2pieces} generalizes 
\cite[Example~2.1]{Qin2469} where $a_0 = 1$ and $a_1^s = -1$.
Moreover, $(a_1 / a_0)^{s} = (-1)^{r_1}$ \ifa $\phi(0, 2) = \phi(1, 2)$.
Thus \cref{gcd(rs)=1} of \cref{aixri2to1_2pieces} generalizes
\cite[Example~2.2]{Qin2469} where $(a_1 / a_0)^{s} = -1$ and $r_1$ is odd, and 
\cite[Theorem~4.14]{nto1-NiuLQL23} where $q \equiv 3 \pmod{4}$ and $r_0 = r_1$.

\begin{corollary}\label{aixri3to1_2pieces}
Let the notation and the hypotheses 
be as in \cref{aixrimto1_2pieces}.
If $q \ge 13$, then $f$ is \mfield{3}{\fqstar} \ifa 
$d_0 = d_1 = 3$ and $\phi(0, 6) \ne \phi(1, 6)$. 
\end{corollary}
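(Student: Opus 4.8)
The plan is to specialize \cref{aixrimto1_2pieces} to the case $m = 3$ and then show that its condition~(2) is incompatible with $q \ge 13$, so that only condition~(1) can occur, which is exactly the asserted characterization.

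First I would apply \cref{aixrimto1_2pieces} with $m = 3$: the polynomial $f$ is \mfield{3}{\fqstar} \ifa either (1) $d_0 = d_1 = 3$ and $\phi(0, 6) \ne \phi(1, 6)$, or (2) $d_0 + d_1 = 3$, $d \mid 3$, $\phi(0, 2d) = \phi(1, 2d)$, and $s(3 - 2d)/(3 - d) < 3$, where $d = \min\{d_0, d_1\}$. Condition~(1) is precisely the statement of the corollary, so the whole content is to rule out condition~(2) when $q \ge 13$.

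Next I would analyze condition~(2). Since $d_0, d_1 \in \n$ and $d_0 + d_1 = 3$, necessarily $\{d_0, d_1\} = \{1, 2\}$ (the value $3$ being odd), so $d = 1$ and $d \mid 3$ holds automatically. The inequality $s(3 - 2d)/(3 - d) < 3$ then reduces to $s/2 < 3$, i.e.\ $s \le 5$. On the other hand, the index $i$ with $d_i = 2$ satisfies $2 = (r_i, s) \mid s$, so $s$ is even; combined with $s \le 5$ this forces $s \in \{2, 4\}$, hence $q = 2s + 1 \in \{5, 9\}$. Thus condition~(2) can occur only for $q \le 9$, and in particular never for $q \ge 13$. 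Combining the two steps, for $q \ge 13$ only condition~(1) of \cref{aixrimto1_2pieces} is possible, proving the corollary.

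I do not expect a genuine obstacle here: the argument is a direct substitution into \cref{aixrimto1_2pieces} plus the elementary remark that $d_i = (r_i, s)$ being even forces $s$ even, which is exactly what sharpens the bound $s \le 5$ into the finite list $s \in \{2,4\}$. The only point requiring a little care is to read the strict inequality in \cref{aixrimto1_2pieces}(2) correctly — namely that $q = 13$, where $s = 6$ and $s/2 = 3 \not< 3$, is already excluded from condition~(2), so the threshold $q \ge 13$ in the corollary is sharp relative to this argument.
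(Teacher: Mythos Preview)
Your proposal is correct and follows exactly the approach the paper intends: a direct specialization of \cref{aixrimto1_2pieces} to $m=3$, together with the observation that condition~(2) forces $s(3-2)/(3-1)=s/2<3$, i.e.\ $s\le 5$, which is impossible once $q\ge 13$ (where $s=(q-1)/2\ge 6$). Your further remark that $2\mid s$ narrows the exceptional cases to $q\in\{5,9\}$ is a nice sharpening but is not needed for the corollary itself, since the bound $s\le 5$ alone already excludes $q\ge 13$.
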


In \cref{aixri2to1_2pieces,aixri3to1_2pieces}, 
take $q = 13$, $s = 6$, and $\xi = 2$. 
Take $(a_0, a_1, r_0, r_1) = (1, -1, 2, 4)$.
Then $d_0 = d_1 = 2$, $\phi(0, 4) = 0$, and $\phi(1, 4) = 2$.
Take $(a_0, a_1, r_0, r_1) = (2, -1, 1, 5)$.
Then $d_0 = d_1 = 1$ and $\phi(0, 2) = \phi(1, 2) = 1$.
Take $(a_0, a_1, r_0, r_1) = (8, 7, 3, 3)$.
Then $d_0 = d_1 = 3$, $\phi(0, 6) = 3$, and $\phi(1, 6) = 2$.
Hence we obtain the next example.

\begin{example}
The polynomial $x^{10} + x^8 - x^4 + x^2$ is \mfield{2}{\f_{13}^{*}}, 
$x^{11} + 2 x^{7} - x^5 + 2 x$ is \mfield{2}{\f_{13}^{*}}, 
and $x^9 + 2 x^3$ is \mfield{3}{\f_{13}^{*}}.
\end{example}

\section{Three branches}
\label{sec_ell=3}

In this section, we consider the case~$f$ in 
\cref{map_aixri} has three branches on $\fqstar$.

\begin{theorem}\label{aixrimto1_3pieces}
Let $q-1 = 3 s$, $a_0, a_1, a_2 \in \fqstar$, 
$r_0, r_1, r_2 \in \n$, and 
\begin{equation}\label{eq_3pieces}
     f(x) = a_0 x^{r_0}(1 + x^s + x^{2s}) 
         + a_1 x^{r_1}(1 + \omega^2 x^s + \omega x^{2s})
        + a_2 x^{r_2}(1 + \omega x^s + \omega^2 x^{2s}),
\end{equation}
where $\omega$ is a primitive 
$3$-th root of unity over $\fq$.
Let $d_i \le d_j \le d_k$,
where $i$, $j$, $k \in \{0, 1, 2\}$ are distinct and $d_t = (r_t, s)$ for $t \in \{0, 1, 2\}$.
Then for $1 \le m < q$, $f$ is 
\mfqstar \ifa one of the following holds:
\begin{enumerate}[\upshape(1)]
    \item \label{item:L=3-1}
        $m = d_0 = d_1 = d_2$ 
        and $\phi(0, 3m)$, $\phi(1, 3m)$, and $\phi(2, 3m)$ are distinct;
    \item \label{item:L=3-2}
        $m = d_k = d_i + d_j$, $d_i \mid d_j$, 
        $\phi(i, 3 d_i) = \phi(j, 3 d_i) \ne \phi(k, 3 d_i)$,
        and $s (d_j - d_i) / d_j < m$;
    \item \label{item:L=3-3}
        $m = d_i + d_j$, $d_j = d_k$, $d_i \mid d_j$,   
        $\phi(i, 3 d_i) = \phi(j, 3 d_i) = \phi(k, 3 d_i)$, 
        $\phi(j, 3 d_j) \ne \phi(k, 3 d_j)$,
        and $s (d_j - 2 d_i) / d_j < m$; 
    \item \label{item:L=3-4}
        $m = 2s$, $d_0 = d_1 = d_2 = s$,
        and $\phi(i, 3s) = \phi(u, 3s) \ne \phi(v, 3s)$, 
        where $u$, $v \in \{j, k\}$ are distinct;  
    \item \label{item:L=3-5}
        $m = 2s$, $d_j = d_k = s$,
         $\phi(j, 3s) = \phi(k, 3s)$, and $\phi(i, 3 d_i) \ne \phi(j, 3 d_i)$;      
    \item \label{item:L=3-6}
        $m = d_0 + d_1 + d_2$, $d_i \mid d_k$, $d_j \mid d_k$,
        $\phi(i, 3 d_i) = \phi(k, 3 d_i)$, $\phi(j, 3 d_j) = \phi(k, 3 d_j)$,    
        and $s (2d_k - d_i - d_j)/ d_k < m$.
\end{enumerate} 
\end{theorem}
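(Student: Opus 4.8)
The strategy mirrors the two-branch case (\cref{aixrimto1_2pieces}) but now bookkeeping three cyclotomic cosets. First I would reduce $f$ to its piecewise form: just as $(1/2)f$ was used for $\ell=2$, here $(1/3)f(x)$ equals $a_t x^{r_t}$ on $C_t$ for $t\in\{0,1,2\}$, by the orthogonality computation from the introduction; call this $F$, and note $f$ is \mfqstar iff $F$ is. Write $f_t(x)=a_t x^{r_t}$, so by \cref{lem:fi(ci)} each $f_t$ is \mset{d_t}{C_t} with image $f_t(C_t)$ of size $s/d_t$, and the exceptional set of $F$ restricted to $\fqstar$ is controlled entirely by how the three images $f_0(C_0), f_1(C_1), f_2(C_2)$ overlap. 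The entire proof is then a case analysis on the partition-type of the multiset $\{f_0(C_0), f_1(C_1), f_2(C_2)\}$ under the equivalence and containment relations supplied by \cref{thm:fiCifjCj} and \cref{f_i(C_i)_di=dj}.

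\textbf{Key steps.} First I would establish the basic constraints: $d_0,d_1,d_2\le m$ always (else some fiber of $F$ is too big to fix), and each $b\in f_t(C_t)$ has $F$-fiber of size $\sum_{u:\,f_t(C_t)\subseteq f_u(C_u)\text{ partially or fully...}}$ — more precisely, for $b$ in the image, $\#F^{-1}(b)=\sum_{u}\#f_u^{-1}(b)$ where the sum is over those $u$ with $b\in f_u(C_u)$, and this is $d_u$ when $b\in f_u(C_u)$. So $F$ is \mtoone precisely when all but $<m$ elements of $\fqstar$ land on image-points whose total fiber-size equals a common value $m$. Then I would split into the exhaustive cases according to how many of the three images coincide/are nested/are disjoint, using the ordering $d_i\le d_j\le d_k$ to normalize:

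\begin{itemize}
\item[(a)] All three images pairwise disjoint: then every fiber has size $d_t$, forcing $d_0=d_1=d_2=m$; this is item~\ref{item:L=3-1}, with the disjointness translated via \cref{f_i(C_i)_di=dj} into distinctness of the $\phi(\cdot,3m)$.
\item[(b)] Exactly one pair coincides or nests and the third is disjoint from it: this bifurcates by whether the ``big'' image $f_k(C_k)$ is the isolated one (item~\ref{item:L=3-2}: $m=d_k=d_i+d_j$, $f_i(C_i)\subseteq f_j(C_j)$, $d_i\mid d_j$, with the surplus $s(d_j-d_i)/d_j<m$ from the points of $f_j(C_j)\setminus f_i(C_i)$) or a smaller one is isolated, leading to the nested chain cases.
\item[(c)] All three images mutually intersect with a full containment chain $f_i(C_i)\subseteq f_j(C_j)\subseteq f_k(C_k)$: item~\ref{item:L=3-6}, $m=d_0+d_1+d_2$, with the exceptional count $s(2d_k-d_i-d_j)/d_k<m$ coming from the ``shells'' $f_k(C_k)\setminus f_j(C_j)$ and $f_j(C_j)\setminus f_i(C_i)$; one must also rule out that a non-nested triple intersection can ever be \mtoone, exactly as \cref{lem:di+dj} does for two branches.
\item[(d)] The degenerate extreme $d_t=s$ for some $t$, i.e.\ $f_t(C_t)$ is a singleton: this is forced into items~\ref{item:L=3-4} and~\ref{item:L=3-5} (here $m=2s$), because a singleton image contributing to a common fiber value $m$ can only do so by being absorbed into exactly one other coincident/containing image; these need separate treatment since the ``$d_i\nmid d_j$'' containment analysis of \cref{thm:fiCifjCj} behaves specially when a quotient is $s$.
\end{itemize}

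For each surviving case, the ``only if'' direction comes from showing any violation of the listed $\phi$-congruences or of the strict inequality $\#E_F(\fqstar)<m$ produces either an undersized or oversized fiber or an exceptional set of size $\ge m$ (invoking \cref{lem:di+dj} for the partial-overlap obstructions); the ``if'' direction is a direct count of $\#E_F(\fqstar)$ via the explicit intersection formula \cref{eq:fiCicapf_jCj}.

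\textbf{Main obstacle.} The genuinely delicate part is the exhaustive case enumeration together with checking that the six listed cases are \emph{mutually exclusive and collectively exhaustive} — in particular, separating item~\ref{item:L=3-2} from item~\ref{item:L=3-3} (isolated large image vs.\ isolated small image inside a coincident pair) and making sure the boundary cases $d_t=s$ are not double-counted against the generic cases~\ref{item:L=3-1}--\ref{item:L=3-3},~\ref{item:L=3-6}. A second subtlety is that whenever two images merely overlap partially without nesting, \cref{lem:di+dj} must be applied with the correct orientation ($d_i\le d_j$), and one has to confirm that no choice of the $a_t$ and $r_t$ realizing partial overlap of \emph{all three} can still be \mtoone; this requires a short argument that a partial-overlap configuration always leaves $\ge d_i+d_j$ exceptional elements, which then must be compared against the target $m$. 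Managing the interaction of the three pairwise relations (nine relations reduced by symmetry) without missing a configuration is where the bulk of the careful work lies; the arithmetic translating each geometric condition into a $\phi(\cdot,3d)$-congruence is routine given \cref{thm:fiCifjCj}.
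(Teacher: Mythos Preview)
Your overall strategy matches the paper's---reduce to the piecewise form $F=(1/3)f$ and analyze image overlaps via \cref{lem:fi(ci)}, \cref{thm:fiCifjCj}, \cref{f_i(C_i)_di=dj}, \cref{lem:di+dj}---but your geometric taxonomy has two concrete defects. First, all your containments are reversed: since $\#f_t(C_t)=s/d_t$, the ordering $d_i\le d_j\le d_k$ makes $f_i(C_i)$ the \emph{largest} image and $f_k(C_k)$ the smallest, so in your~(b) the nesting is $f_j(C_j)\subseteq f_i(C_i)$, and in your~(c) it is $f_k(C_k)$ that sits inside the others. Second, the taxonomy (a)--(d) is not exhaustive. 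Item~(3) of the theorem is the configuration where $f_j(C_j)$ and $f_k(C_k)$ (with $d_j=d_k$) are disjoint from each other but \emph{both} contained in $f_i(C_i)$; this is neither ``one pair nests, third disjoint'' nor a chain, so it falls through your cases. Item~(6) likewise requires only the V-shape $f_k(C_k)\subseteq f_i(C_i)\cap f_j(C_j)$, not a full chain; no containment between $f_i(C_i)$ and $f_j(C_j)$ is imposed, and the exceptional count $s(2d_k-d_i-d_j)/d_k=d_i\cdot\#\bigl(f_i(C_i)\setminus f_k(C_k)\bigr)+d_j\cdot\#\bigl(f_j(C_j)\setminus f_k(C_k)\bigr)$ is correct regardless of how those two shells overlap.

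The paper sidesteps these issues by organizing the necessity argument not by intersection pattern but by the value of~$m$: every fiber size of~$F$ lies in $\{d_0,d_1,d_2,d_0{+}d_1,d_0{+}d_2,d_1{+}d_2,d_0{+}d_1{+}d_2\}$, so one assumes $m$ equals each in turn (five cases after normalizing $d_0\le d_1\le d_2$). This enumeration is automatically exhaustive, and within each case the required intersection pattern is \emph{derived} rather than posited---which is exactly how the V-shape of item~(6) and the two-disjoint-subsets-inside-one shape of item~(3) emerge.
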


\begin{proof}
Note that $f(x)$ is \mfqstar \ifa $F(x) := (1/3) f(x)$ is \mfqstar, where
\begin{equation}\label{map_aixri3}
  F(x) = 
  \begin{cases}
  a_0 x^{r_{0}}   & \text{for $x \in C_0$,} \\
  a_1 x^{r_{1}}   & \text{for $x \in C_1$,} \\
  a_2 x^{r_{2}}   & \text{for $x \in C_2$.} \\ 
  \end{cases}   
\end{equation}
By \cref{lem:fi(ci)}, 
$f_t$ is $d_t$-to-$1$ from $C_t$ to $C_{\phi(t, 3)}$,
where $f_t (x) = a_{t}x^{r_t}$ and $t \in \{0, 1, 2\}$.
Let $I_{i j} = f_i(C_i) \cap f_j(C_j)$ for any $i, j \in \{0, 1, 2\}$.
Without loss of generality, we may assume $d_0 \le d_1 \le d_2$. 
We first show the necessity. 
Assume $F$ is \mfqstar. Then 
\[
m \in \{d_0, d_1, d_2, d_0 + d_1, d_0 +d_2, d_1 + d_2, d_0 + d_1 + d_2\}.
\]

Case 1: $m = d_t$ for some $t \in \{0, 1, 2\}$.
If $m = d_t < d_2$ with $t \in \{0, 1\}$, 
then $F$ is not \mfqstar,
since each element in $f_2(C_2)$ has $d_2 > m$ preimages.
Hence $m = d_2$ and thus $I_{02} = I_{12} = \varnothing$.

If $I_{01} = \varnothing$, then $m = d_0 = d_1$. 
Indeed, if $d_t < m$ for some $t \in \{0, 1\}$, 
then $C_t \subseteq E_F(\fqstar)$ and so $\# E_F(\fqstar) \ge s$,
contrary to $\# E_F(\fqstar) < m \le s$.
If $I_{01} \ne \varnothing$, 
then $m = d_0 + d_1$ and $f_1(C_1) \subseteq f_0(C_0)$.
Indeed, if $m > d_0 + d_1$, then $C_0 \cup C_1 \subseteq E_F(\fqstar)$,
contrary to $\# E_F(\fqstar) < s$.
If $m < d_0 + d_1$, then each element in $I_{01}$ has 
$d_0 + d_1 > m$ preimages, contrary to that $F$ is \mfqstar.
If $m = d_0 + d_1$, $I_{01} \ne \varnothing$, 
and $f_1(C_1) \not\subseteq f_0(C_0)$, 
then $F$ is not \mfield{m}{C_0 \cup C_1} by \cref{lem:di+dj},
and thus $F$ is not \mfqstar by $I_{02} = I_{12} = \varnothing$. 
Hence $F(E_F(\fqstar)) =  f_0(C_0) \setminus f_1(C_1)$ and so
\[
  \# E_{F}(\fqstar) 
    = d_0 (s/d_0 - s/d_1)
    = s (d_1 - d_0) / d_1 
    < m.
\]
In summary, by \cref{thm:fiCifjCj}, if $F$ is \mfqstar under the assumption of Case 1, then 
one of the following holds:
\begin{enumerate}[\upshape(1)]
    \item $m = d_2 = d_0 = d_1$ and $\phi(0, 3m)$, $\phi(1, 3m)$, and $\phi(2, 3m)$ are distinct;
    \item $m = d_2 = d_0 + d_1$, $d_0 \mid d_1$, 
        $\phi(0, 3 d_0) = \phi(1, 3 d_0) \ne \phi(2, 3 d_0)$,
         and $s (d_1 - d_0) / d_1 < m$.
\end{enumerate}   

Case 2: $m = d_0 + d_1$.
Then $I_{01} \ne \varnothing$ and $f_2(C_2) \cap I_{01} = \varnothing$.
We can assert that $I_{12} = \varnothing$.
Indeed, if $I_{12} \ne \varnothing$,
then $\# F^{-1}(b) = d_1 + d_2$ for any $b \in I_{12}$.
When $m < d_1 + d_2$, $F$ is not \mfqstar. 
Thus $m = d_1 + d_2$ and so $d_0 = d_1 = d_2$ by $d_0 \le d_1 \le d_2$.
By \cref{f_i(C_i)_di=dj}, 
$I_{01} \ne \varnothing$ implies that $I_{01} = f_0(C_0) = f_1(C_1)$, 
and thus $f_2(C_2) \cap I_{01} = \varnothing$ \ifa $I_{12} = \varnothing$,
contrary to $I_{12} \ne \varnothing$.

We next consider $I_{02}$.
If $I_{02} \ne \varnothing$, 
then $\# F^{-1}(b) = d_0 + d_2$ for any $b \in I_{02}$.
When $m < d_0 + d_2$, $F$ is not \mfqstar.
Thus $m = d_0 + d_2$ and so $d_1 = d_2$.
Let 
\begin{equation}\label{eq:A0A1}
    A_0 = f_0(C_0) \setminus (I_{01} \cup I_{02}) \qtq{and}
    A_1 = (f_1(C_1)\setminus I_{01}) \cup (f_2(C_2)) \setminus I_{02}).
\end{equation}
Then for any $a \in A_0$ and $b \in A_1$, we have
$\# F^{-1}(a) = d_0$ and $\# F^{-1}(b) = d_1$.
Hence $A_0 \cup A_1 = F(E_{F}(\fqstar))$ and so
\begin{equation*}\label{eq:EFd0d1}
    \# E_{F}(\fqstar) = d_0 \# A_0 + d_1 \# A_1 < m.
\end{equation*}
Thus $\# A_0 = 0$ or $\# A_1 = 0$ by $m = d_0 + d_1$.
When $\# A_0 = 0$, we have $I_{01} \cup I_{02} = f_0(C_0)$,
and thus
\begin{equation*}
    2s/\lcm(d_0, d_1) 
   = \# I_{01} + \# I_{02}
   = \# f_0(C_0) = s/d_0 
\end{equation*}
by \cref{thm:fiCifjCj}.
Hence $\lcm(d_0, d_1) = 2d_0$ and so $d_1 = 2 d_0$.
Then $I_{01} \ne \varnothing$ and \cref{thm:fiCifjCj} 
imply that $f_1(C_1) \subset f_0(C_0)$, i.e.,
$I_{01} = f_1(C_1)$, and so $\# I_{01} = s / d_1$.
Hence $\# I_{02} = s / d_1 = s / d_2$ 
and so $I_{02} = f_2(C_2)$.
Thus $\# A_1 = 0$ by \cref{eq:A0A1}. In this case, the size of the exceptional set is $0$.
When $\# A_1 = 0$, we get $I_{0t} = f_t(C_t)$ by \cref{eq:A0A1}, 
i.e., $f_t(C_t) \subseteq f_0(C_0)$, where $t = 1, 2$. Then
\[
    \# E_{F}(\fqstar) = d_0 \# A_0
    = d_0 (s/d_0 - 2s/d_1) = s(d_1 - 2 d_0)/ d_1 < m.
\]

If $I_{02} = \varnothing$,
then $\# F^{-1}(b) = d_2$ for any $b \in f_2(C_2)$.
When $m < d_2$, $F$ is not \mfqstar.
The case $m = d_2$ is already covered in Case~1.
When $m > d_2$, we get $C_2 \subseteq E_{F}(\fqstar)$.
If $f_1(C_1) \not\subseteq f_0(C_0)$, 
then $F$ is not \mfield{m}{C_0 \cup C_1} by \cref{lem:di+dj},
and thus $F$ is not \mfqstar. 
Hence $f_1(C_1) \subseteq f_0(C_0)$ and 
$F(E_F(\fqstar)) = f_2(C_2) \cup (f_0(C_0) \setminus f_1(C_1))$, and so
\begin{equation}\label{eq:EF_d0+d1}
    \# E_{F}(\fqstar) = s + d_0 (s/d_0 - s/d_1)
                      < m.
\end{equation}
Since $m = d_0 + d_1$, we get $d_0 + d_1 > s$, 
and so $d_1 = s$ by $d_1 \mid s$ and $d_0 \le d_1$.
Substituting $d_1 = s$ into \cref{eq:EF_d0+d1} yields 
$d_0 > s/2$, i.e., $d_0 = s$. Then $d_0 = d_1 = d_2 = s$ 
and thus $f_0(C_0) = f_1(C_1)$.

In summary, by \cref{thm:fiCifjCj}, if $F$ is \mfield{(d_0 + d_1)}{\fqstar},
then one of the following holds:
\begin{enumerate}[resume*,start=2]
    \item $d_0 + d_1 = d_2$, $d_0 \mid d_1$, 
          $\phi(0, 3 d_0) = \phi(1, 3 d_0) \ne \phi(2, 3 d_0)$,
          and $s (d_1 - d_0) / d_1 < m$;
    \item $d_0 \mid d_1$, $d_1 = d_2$, 
          $\phi(0, 3 d_0) = \phi(1, 3 d_0) = \phi(2, 3 d_0)$,
          $\phi(1, 3 d_1) \ne \phi(2, 3 d_1)$,
          and $s(d_1 - 2 d_0)/ d_1 < m$;
    \item $d_0 = d_1 = d_2 = s$ and $\phi(0, 3 s) = \phi(1, 3 s) \ne \phi(2, 3 s)$.
\end{enumerate} 

Case 3: $m = d_0 + d_2$. 
Then $I_{02} \ne \varnothing$ and $f_1(C_1) \cap I_{02} = \varnothing$.
We can assert that $I_{12} = \varnothing$. 
Indeed, if $I_{12} \ne \varnothing$,
then $\# F^{-1}(b) = d_1 + d_2$ for any $b \in I_{12}$.
When $m < d_1 + d_2$, $F$ is not \mfqstar. 
Thus $m = d_1 + d_2$ and so $d_0 = d_1$.
By \cref{f_i(C_i)_di=dj}, 
$f_1(C_1) \cap I_{02} = \varnothing$ implies that $I_{01} = \varnothing$. 
Let 
\begin{equation}\label{eq:B0B1}
    B_0 = f_2(C_2) \setminus (I_{02} \cup I_{12}) \qtq{and}
    B_1 = (f_0(C_0)\setminus I_{02}) \cup (f_1(C_1)) \setminus I_{12}).
\end{equation}
Then for any $a \in B_0$ and $b \in B_1$, we have
$\# F^{-1}(a) = d_2$ and $\# F^{-1}(b) = d_0$.
Hence $F(E_{F}(\fqstar)) = B_0 \cup B_1$, and so
\begin{equation*}
    \# E_{F}(\fqstar) = d_2 \# B_0 + d_0 \# B_1 < m.
\end{equation*}
Thus $\# B_0 = 0$ or $\# B_1 = 0$ by $m = d_0 + d_2$.
When $\# B_0 = 0$, we get $I_{02} \cup I_{12} = f_2(C_2)$ by \cref{eq:B0B1},
and thus  
\[
   2s/\lcm(d_0, d_2)
   = \# I_{02} + \# I_{12}
   = \#f_2(C_2) = s/d_2
\]
by \cref{thm:fiCifjCj}.
Hence $\lcm(d_0, d_2) = 2 d_2$ and so $d_0 = 2 d_2$, 
contrary to $d_0 \le d_2$.
When $\# B_1 = 0$, we have 
$I_{t2} = f_t(C_t)$ by \cref{eq:B0B1}, i.e.,
$f_t(C_t) \subseteq f_2(C_2)$,
where $t = 0, 1$.
Since $\# f_t(C_t) = s/d_t \ge s/d_2 = \# f_2(C_2)$,
we obtain $f_0(C_0) = f_1(C_1) = f_2(C_2)$,
contrary to $f_1(C_1) \cap I_{02} = \varnothing$.

We next consider $I_{01}$.
If $I_{01} \ne \varnothing$, 
then $\# F^{-1}(b) = d_0 + d_1$ for any $b \in I_{01}$.
The case $m = d_0 + d_1$ is already covered in Case~2.
Thus we need only consider the case $m > d_0 + d_1$.
Let $B_2 = f_0(C_0) \setminus I_{02}$
and $B_3 =  f_2(C_2) \setminus I_{02}$.
Then $f_1(C_1) \cup B_2 \cup B_3 = F(E_F(\fqstar))$
and so 
\[
   \# E_{F}(\fqstar) = s + d_0 \# B_2 + d_2 \# B_3 < m. 
\]
Thus $\# B_2 = 0$ or $\# B_3 = 0$ by $m = d_0 + d_2$,
i.e., $I_{02} = f_0(C_0)$ or $f_2(C_2)$.
Since $m > d_0 + d_1$ and $d_1 \ge d_0$, we have $d_2 > d_0$, and so 
$\# f_2(C_2) = s/d_2 < s/d_0 = \# f_0(C_0)$.
Hence $I_{02} = f_2(C_2)$ and thus 
\begin{equation}\label{eq:EF_d0+d2}
     \# E_{F}(\fqstar) = s + d_0 \# B_2
                       = s + d_0 (s/d_0 - s/d_2)
                       < m.
\end{equation}
Since $m = d_0 + d_2$, we get $d_0 + d_2 > s$,
and so $d_2 = s$ by $d_2 \mid s$ and $d_0 \le d_2$.
Substituting $d_2 = s$ into \cref{eq:EF_d0+d2},
we have $d_0 > s/2$, i.e., $d_0 = s$,
contrary to $d_2 > d_0$.

If $I_{01} = \varnothing$, 
then $C_1 \subseteq E_{F}(\fqstar)$ by $d_1 < m$.
When $f_2(C_2) \not\subseteq f_0(C_0)$, 
$F$ is not \mfield{m}{C_0 \cup C_2} by \cref{lem:di+dj},
and thus $F$ is not \mfqstar. 
Then $f_2(C_2) \subseteq f_0(C_0)$, i.e., $I_{02} = f_2(C_2)$,
and so $B_3 = 0$.
Hence $F(E_F(\fqstar)) = f_1(C_1) \cup B_2$ and so
\cref{eq:EF_d0+d2} holds.
Then $d_0 = d_1 = d_2 = s$ and thus $f_0(C_0) = f_2(C_2)$. 

In summary, by \cref{thm:fiCifjCj}, if $F$ is \mfield{(d_0 + d_2)}{\fqstar}, then
$d_0 = d_1 = d_2 = s$ and $\phi(0, 3 s) = \phi(2, 3 s) \ne \phi(1, 3 s)$.

Case 4: $m = d_1 + d_2$. 
Then $I_{12} \ne \varnothing$ and $f_0(C_0) \cap I_{12} = \varnothing$.
We can assert that $I_{01} = I_{02} = \varnothing$. 
Indeed, if $I_{0t} \ne \varnothing$ for some $t \in \{1, 2\}$,
then $\# F^{-1}(b) = d_0 + d_t$ for any $b \in I_{0t}$.
 If $m = d_0 + d_t$, there exists a contradiction, 
which has been shown in Cases~2 and~3.
Thus we need only consider the case $m > d_0 + d_t$.
Let $D_1 = f_1(C_1) \setminus I_{12}$ and $D_2 = f_2(C_2) \setminus I_{12}$.
Then $f_0(C_0) \cup D_1 \cup D_2 = F(E_F(\fqstar))$,
and thus 
\[
   \# E_{F}(\fqstar) = s + d_1 \# D_1 + d_2 \# D_2 < m. 
\]
Thus $\# D_1 = 0$ or $\# D_2 = 0$ by $m = d_1 + d_2$. 
When $\# D_1 = 0$, we get $I_{12} = f_1(C_1)$, 
and so $I_{12} = f_1(C_1) = f_2(C_2)$ by $\# f_1(C_1) \ge \# f_2(C_2)$.
Hence $f_0(C_0) \cap I_{12} = \varnothing$ implies that $I_{0t} = \varnothing$,
contrary to $I_{0t} \ne \varnothing$.
When $\# D_2 = 0$, we have $I_{12} = f_2(C_2)$, and thus
\begin{equation}\label{eq:EF_d1+d2}
   \# E_{F}(\fqstar) = s + d_1 \# D_1
                     = s + d_1 (s/d_1 - s/d_2)
                     < m. 
\end{equation}
Since $m = d_1 + d_2$, we obtain $d_1 + d_2 > s$, 
and so $d_2 = s$ by $d_2 \mid s$ and $d_1 \le d_2$.
Substituting $d_2 = s$ into \cref{eq:EF_d1+d2} yields $d_1 > s/2$, i.e., $d_1 = s$.
Thus $I_{12} = f_1(C_1) = f_2(C_2)$,
and so $f_0(C_0) \cap I_{12} = \varnothing$ implies that $I_{0t} = \varnothing$,
contrary to $I_{0t} \ne \varnothing$.

Since $I_{01} = I_{02} = \varnothing$, 
we have $C_0 \subseteq E_{F}(\fqstar)$ by $d_0 < m$.
If $f_2(C_2) \not\subseteq f_1(C_1)$, 
then $F$ is not \mfield{m}{C_1 \cup C_2} by \cref{lem:di+dj},
and thus $F$ is not \mfqstar. 
Then $f_2(C_2) \subseteq f_1(C_1)$, i.e., $I_{12} = f_2(C_2)$,
and so $\# D_2 = 0$.
Hence $F(E_F(\fqstar)) = f_0(C_0) \cup D_1$ and so \cref{eq:EF_d1+d2} holds. 
Then $d_1 = d_2 = s$ and thus $f_1(C_1) = f_2(C_2)$.

In summary, by \cref{thm:fiCifjCj}, if $F$ is \mfield{(d_1 + d_2)}{\fqstar},
then $d_1 = d_2 = s$, $\phi(1, 3 s) = \phi(2, 3 s)$, and
$\phi(0, 3 d_0) \ne \phi(1, 3 d_0)$.

Case 5: $m = d_0 + d_1 + d_2$.
Then 
\begin{equation}\label{eq_f0capf1capf2}
     I \coloneq f_0(C_0) \cap f_1(C_1) \cap f_2(C_2) \ne \varnothing.
\end{equation}
Denote $A_i = f_i (C_i) \setminus I$ for any $i \in \{0, 1, 2\}$.
If $A_i \ne \varnothing$ for any $i \in \{0, 1, 2\}$,
then $\# f_i^{-1}(b_i) = d_i$ for any $b_i \in A_i$. 
Hence $A_0 \cup A_1 \cup A_2 = F(E_{F}(\fqstar))$ by \cref{eq_f0capf1capf2},
and so $\# E_{F}(\fqstar) \ge d_0 + d_1 + d_2 = m$, 
contrary to the definition of \mtoone.
Hence there exists $i \in \{0, 1, 2\}$ such that $A_i = \varnothing$, 
i.e., $I = f_i(C_i)$. 
Moreover, since $d_0 \le d_1 \le d_2$, 
we have $\# f_0 (C_0) \ge \# f_1 (C_1) \ge \# f_2 (C_2)$ and so
$I = f_2 (C_2)$.
Then \cref{eq_f0capf1capf2} implies that 
$f_2(C_2) \subseteq f_0(C_0) \cap f_1(C_1)$; that is,
$f_2(C_2) \subseteq f_0(C_0)$ and $f_2(C_2) \subseteq f_1(C_1)$. 
By \cref{thm:fiCifjCj}, it is equivalent to
$d_0 \mid d_2$, $d_1 \mid d_2$,
$\phi(0, 3 d_0) = \phi(2, 3 d_0)$, and $\phi(1, 3 d_1) = \phi(2, 3 d_1)$.
Moreover, we get $F(E_{F}(\fqstar)) = A_0 \cup A_1$ and thus
\[
 \# E_{F}(\fqstar) 
 = d_0 (s/d_0 - s/d_2) + d_1 (s/d_1 - s/d_2)
 = s (2 d_2 - d_0 - d_1)/ d_2 < m.  
\]

According to the analysis above,
it is easy to verify that the sufficiency holds.
\end{proof}

\cref{aixrimto1_3pieces} completely characterizes the many-to-one 
property of the generalized cyclotomic mappings for $\ell = 3$.
In \cref{item:L=3-1,item:L=3-2,item:L=3-3,item:L=3-6}, 
$m$ can be very small for any $q$. 
However, $m = 2 s$ in \cref{item:L=3-4,item:L=3-5}, 
and thus $m$ is directly proportional to $q$ in these cases. 

When $m = 1$ and $a_0^s = a_1^s = a_2^s$,
\cref{aixrimto1_3pieces} is reduced to
\cite[Proposition~2.6]{Wang-cyc}.
Applying \cref{aixrimto1_3pieces} to $m = 2$ or~$3$ 
yields the following results.

\begin{corollary} \label{aixri2to1_3pieces}
Let the notation and the hypotheses 
be as in \cref{aixrimto1_3pieces}.
If $q \ge 7$, then $f$ is \mfield{2}{\fqstar} \ifa 
one of the following holds:
\begin{enumerate}[\upshape(1)]
  \item $d_0 = d_1 = d_2 = 2$ and 
        $\phi(0, 6)$, $\phi(1, 6)$, and $\phi(2, 6)$ are distinct; 
  \item $d_i = d_j = 1$, $d_k = 2$,
        and $\phi(i, 3) = \phi(j, 3) \ne \phi(k, 3)$. 
\end{enumerate} 
\end{corollary}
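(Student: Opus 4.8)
The plan is to specialize \cref{aixrimto1_3pieces} to $m = 2$ and to decide which of its six cases can survive. Throughout I would keep the ordering $d_i \le d_j \le d_k$ used there and record two elementary facts: every $d_t = (r_t, s)$ satisfies $d_t \ge 1$, and $s = (q-1)/3 \ge 2$ since $q \ge 7$ and $3 \mid q-1$.

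First I would eliminate the four impossible cases. In \cref{item:L=3-6} one has $m = d_0 + d_1 + d_2 \ge 3 > 2$, a contradiction. In \cref{item:L=3-3}, $m = d_i + d_j = 2$ forces $d_i = d_j = 1$ and hence $d_k = d_j = 1$, but this case simultaneously demands $\phi(i, 3d_i) = \phi(j, 3d_i) = \phi(k, 3d_i)$ and $\phi(j, 3d_j) \ne \phi(k, 3d_j)$, i.e.\ $\phi(j,3) = \phi(k,3)$ and $\phi(j,3) \ne \phi(k,3)$, which is absurd. In \cref{item:L=3-4,item:L=3-5} one has $m = 2s = 2$, so $s = 1$ and $q = 4$, contradicting $q \ge 7$; these last two cases are precisely what forces the hypothesis $q \ge 7$, since for $q = 4$ they do yield genuine $2$-to-$1$ mappings.

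It then remains to rewrite \cref{item:L=3-1,item:L=3-2} under $m = 2$. In \cref{item:L=3-1}, the requirement $m = d_0 = d_1 = d_2$ becomes $d_0 = d_1 = d_2 = 2$, and distinctness of $\phi(0,3m), \phi(1,3m), \phi(2,3m)$ becomes distinctness of $\phi(0,6), \phi(1,6), \phi(2,6)$; this is case~(1) of the corollary. In \cref{item:L=3-2}, $m = d_k = d_i + d_j = 2$ together with $d_i \le d_j \le d_k$ forces $d_i = d_j = 1$ and $d_k = 2$; then $d_i \mid d_j$ holds automatically, the side condition $s(d_j - d_i)/d_j < m$ collapses to $0 < 2$, and $\phi(i, 3d_i) = \phi(j, 3d_i) \ne \phi(k, 3d_i)$ reads $\phi(i,3) = \phi(j,3) \ne \phi(k,3)$; this is case~(2) of the corollary. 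Conversely, each listed case visibly satisfies the matching hypothesis of \cref{aixrimto1_3pieces} with $m = 2$, so the equivalence follows. There is no substantive obstacle; the only care needed is to confirm that the auxiliary inequalities of the form $s(\cdots)/(\cdots) < m$ degenerate to something trivial in every surviving case.
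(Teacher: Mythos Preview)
Your proposal is correct and follows exactly the approach the paper intends: it simply specializes \cref{aixrimto1_3pieces} to $m=2$, and your case-by-case elimination of \cref{item:L=3-3,item:L=3-4,item:L=3-5,item:L=3-6} together with the reduction of \cref{item:L=3-1,item:L=3-2} to the two stated conditions is accurate, including the observation that the hypothesis $q\ge 7$ is needed precisely to rule out $s=1$ in \cref{item:L=3-4,item:L=3-5}.
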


\begin{corollary} \label{aixri3to1_3pieces}
Let the notation and the hypotheses 
be as in \cref{aixrimto1_3pieces}.
If $q \ge 19$, 
then $f$ is \mfield{3}{\fqstar} 
\ifa one of the following holds:
\begin{enumerate}[\upshape(1)]
    \item $d_0 = d_1 = d_2 = 3$ and $\phi(0, 9)$, $\phi(1, 9)$, 
            and $\phi(2, 9)$ are distinct; 
    \item $d_i = 1$, $d_j = d_k = 2$,
          $\phi(0, 3) = \phi(1, 3) = \phi(2, 3)$,
          and $\phi(j, 6) \ne \phi(k, 6)$.
    \item $d_0 = d_1 = d_2 = 1$ and
          $\phi(0, 3) = \phi(1, 3) = \phi(2, 3)$.
\end{enumerate}
\end{corollary}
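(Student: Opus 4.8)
The plan is to obtain this corollary directly from \cref{aixrimto1_3pieces} by specializing $m = 3$ and discarding the configurations that cannot survive the extra hypothesis $q \ge 19$. Since $q - 1 = 3s$, the condition $q \ge 19$ is equivalent to $s \ge 6$, and this is the only place where the bound on $q$ will be used. I keep the labelling $d_i \le d_j \le d_k$ of $d_0, d_1, d_2$ fixed, so that the six branches \cref{item:L=3-1,item:L=3-2,item:L=3-3,item:L=3-4,item:L=3-5,item:L=3-6} of \cref{aixrimto1_3pieces} may be inspected one at a time.

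The heart of the argument is a short enumeration of the ways $3$ decomposes as a sum of one, two, or three positive divisors obeying the chain conditions. In \cref{item:L=3-1}, $m = d_0 = d_1 = d_2$ forces $d_0 = d_1 = d_2 = 3$, giving case~(1) of the corollary. In \cref{item:L=3-2}, $m = d_k = d_i + d_j$ with $d_i \mid d_j$ forces $(d_i, d_j, d_k) = (1, 2, 3)$, and then the side condition $s(d_j - d_i)/d_j = s/2 < 3$ would force $s \le 5$, i.e.\ $q \le 16$, contradicting $q \ge 19$; so this branch is empty. In \cref{item:L=3-3}, $m = d_i + d_j$ with $d_i \mid d_j$ and $d_j = d_k$ forces $(d_i, d_j, d_k) = (1, 2, 2)$; here $3 d_i = 3$ and $3 d_j = 6$, the inequality $s(d_j - 2 d_i)/d_j = 0 < 3$ is automatic, and the branch collapses precisely to case~(2) of the corollary. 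The branches \cref{item:L=3-4,item:L=3-5} require $m = 2s = 3$, which has no solution in integers, so both are vacuous. Finally, in \cref{item:L=3-6}, $m = d_0 + d_1 + d_2$ with $d_0 \le d_1 \le d_2$ forces $d_0 = d_1 = d_2 = 1$; the divisibility conditions become trivial, $3 d_i = 3 d_j = 3$, and $s(2 d_k - d_i - d_j)/d_k = 0 < 3$ is automatic, which is exactly case~(3) of the corollary.

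Putting these together proves necessity: if $f$ is \mfield{3}{\fqstar} and $q \ge 19$, then \cref{aixrimto1_3pieces} forces one of (1)--(3); conversely, each of (1)--(3) is a legal instance of \cref{item:L=3-1}, \cref{item:L=3-3}, or \cref{item:L=3-6} with $m = 3$ (all hypotheses satisfied, the relevant inequalities being $0 < 3$ or automatic), so sufficiency is immediate. No step is conceptually hard; the proof is finite bookkeeping over the six branches. The two points requiring care are (i) that the symmetric conditions of \cref{aixrimto1_3pieces} be translated consistently through the fixed ordering $d_i \le d_j \le d_k$ when written in the asymmetric form of the corollary, and (ii) the single use of $q \ge 19 \iff s \ge 6$ to eliminate \cref{item:L=3-2}. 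I expect the modulus simplifications and exceptional-set inequalities in \cref{item:L=3-3,item:L=3-6} to be the most error-prone to verify, though still routine.
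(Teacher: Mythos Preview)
Your proposal is correct and is exactly the approach the paper intends: the corollary is stated in the paper as an immediate specialization of \cref{aixrimto1_3pieces} to $m = 3$, and your case-by-case elimination of the six branches (using $q \ge 19 \iff s \ge 6$ only to discard \cref{item:L=3-2}) carries this out faithfully.
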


\cref{aixri3to1_3pieces} is a generalization of 
\cite[Theorem~4.16]{nto1-NiuLQL23} where 
 $r_0 = r_1 = r_2$ and $(r_0, s) = 1$.
 
In \cref{aixri2to1_3pieces}, 
take $q = 13$, $s = 4$, and $\xi = 2$. Then $\omega = 3$.
Take $(a_0, a_1, a_2) = (1, 2, -5)$ and all $r_i = 2$.
Then all $d_i = 2$, $\phi(0, 6) = 0$, 
$\phi(1, 6) = 3$, and $\phi(2, 6) = 1$.
Take $(a_0, a_1, a_2) = (1, 4, 3)$ 
and $(r_0, r_1, r_2) = (1, 1, 2)$.
Then $d_0 = d_1 = 1$, $d_2 = 2$, 
$\phi(0, 3) = \phi(1, 3) = 0$, and $\phi(2, 3) = 2$.
Hence we obtain the next example.

\begin{example}\label{ex:2to1F13}
The polynomials
$x^{10} + 4 x^{6} - 2 x^2$ and 
$x^{10} - 4 x^{6} - 2 x^{5} + 3 x^{2} + 5 x$
are \mfield{2}{\f_{13}^{*}}.
\end{example}




In \cref{aixri3to1_3pieces}, 
take $q = 64$ and $s = 21$. Then $\omega = \xi^{21}$.
Take $(a_0, a_1, a_2) = (\xi, \xi^{14}, \xi^{35})$ and all $r_i = 3$.
Then all $d_i = 3$, $\phi(0, 9) = 1$, 
$\phi(1, 9) = 8$, and $\phi(2, 9) = 5$.
Take $(a_0, a_1, a_2) = (\xi^{12}, \xi^{2}, \xi^{25})$ 
and all $r_i = 1$.
Then all $d_i = 1$ and $\phi(i, 3) = 0$.
Hence we obtain the next example.

\begin{example}
The polynomials 
$x^{45} + \xi x^{24} + x^3$ and $x^{43} + \xi^3 x^{22} + \xi^5 x$
are \mfield{3}{\f_{64}^{*}}, where $\xi$ is a primitive element of $\f_{64}$.
\end{example}

\section{Multiple branches}
\label{sec_ell>=3}

In this section, we consider the case~$f$ in 
\cref{map_aixri} has $\ell$ branches on $\fqstar$.
We first determine the \twotoone property of~$f$.

\begin{theorem}\label{Lpieces2to1}
Let $q-1 = \ell s$ with $2 \le \ell \le (q-1)/2$ and
\begin{equation*}
  f(x) = \frac{1}{\ell}
    \sum_{i=0}^{\ell-1}\sum_{j=0}^{\ell-1}
    a_{i}  \omega^{- i j} x^{r_i + j s},
\end{equation*}
where $a_i \in \fqstar$, $r_i\in \n$, and $\omega$ is a primitive 
$\ell$-th root of unity over $\fq$. 
Let $I$ and $T$ be subsets of 
$[\ell] \coloneq \{0, 1, \ldots, \ell-1\}$ 
such that $(r_i, s) = 1$ for any $i \in I$ 
and $(r_t, s) = 2$ for any $t \in T$. 
Then $f$ is \mfield{2}{\fqstar} \ifa one of the following holds:
\begin{enumerate}[\upshape(1)]
\item $I = [\ell]$, $\# I$ is even, and $\phi(x, \ell)$ is \mset{2}{I};
\item $T = [\ell]$ and $\phi(x, 2 \ell)$ is \mset{1}{T};
\item \label{L2to1(3)} 
        $I \ne [\ell]$, $T \ne [\ell]$, $I \cup T = [\ell]$,
        $\phi(i, \ell) \ne \phi(t, \ell)$ 
        for any $i \in I$ and $t \in T$, 
        $\# I$ is even and $\phi(x, \ell)$ is \mset{2}{I},
        and $\phi(x, 2 \ell)$ is \mset{1}{T}.
\end{enumerate}
\end{theorem}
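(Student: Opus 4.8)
The plan is to pass from the closed form of $f$ to its branchwise description via \cref{lem:fi(ci)}, and then reduce the claim to a combinatorial statement about the images $f_i(C_i)$, reading \cref{thm:fiCifjCj} and \cref{f_i(C_i)_di=dj} as a dictionary. We may assume $I=\{i\in[\ell]:(r_i,s)=1\}$ and $T=\{i\in[\ell]:(r_i,s)=2\}$ are maximal, so $I\cap T=\varnothing$, and note $s\ge2$ since $\ell\le(q-1)/2$. By \cref{poly_aixri}, $f$ agrees with $f_i(x)=a_ix^{r_i}$ on each $C_i$, so by \cref{lem:fi(ci)} $f$ is $d_i$-to-$1$ from $C_i$ onto $f_i(C_i)\subseteq C_{\phi(i,\ell)}$, where $d_i=(r_i,s)$, and $\#f^{-1}(v)=\sum_{i:\,v\in f_i(C_i)}d_i$ for $v\in\fqstar$. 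From \cref{defn:mto1}, $f$ is \mfield{2}{\fqstar} \ifa every fibre of $f$ has size $2$ except for at most one fibre, which then has size $1$; equivalently $\sum_{v:\,\#f^{-1}(v)\ne2}\#f^{-1}(v)\le1$.

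For necessity, I would first show $d_i\le2$ for all $i$, hence $[\ell]=I\cup T$: if some $d_{i_0}\ge3$, every one of the $s/d_{i_0}\ge1$ elements of $f_{i_0}(C_{i_0})$ has fibre size $\ge3$, so $\#E_f(\fqstar)\ge3$. Now suppose $f$ is \mfield{2}{\fqstar}. For $i\in I$ we have $f_i(C_i)=C_{\phi(i,\ell)}$ (a full coset, as $\#f_i(C_i)=s$), and for $t\in T$, $\#f_t(C_t)=s/2$. Using \cref{thm:fiCifjCj} and \cref{f_i(C_i)_di=dj}: (a) if $\phi(i,\ell)=\phi(t,\ell)$ for some $i\in I$, $t\in T$, then $f_t(C_t)\subseteq C_{\phi(i,\ell)}$ and each of its $s/2\ge1$ elements has fibre size $\ge d_i+d_t=3$, impossible, so $\phi(i,\ell)\ne\phi(t,\ell)$ always and the $I$-images and $T$-images are pairwise disjoint; (b) if $\phi(t,2\ell)=\phi(t',2\ell)$ for distinct $t,t'\in T$, then $f_t(C_t)=f_{t'}(C_{t'})$, whose elements have fibre size $\ge4$, impossible, so $\phi(\,\cdot\,,2\ell)$ is \mset{1}{T}; (c) for $c\in\phi(I,\ell)$ each $v\in C_c$ has fibre size exactly $n_c:=\#\{i\in I:\phi(i,\ell)=c\}$ by (a), and $n_c\ne2$ (i.e.\ $n_c=1$ or $n_c\ge3$) forces $\#E_f(\fqstar)\ge s\ge2$, impossible, so $\phi(\,\cdot\,,\ell)$ is \mset{2}{I} and $\#I$ is even. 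Combining (a)--(c) with $[\ell]=I\cup T$ and splitting into the subcases $I=[\ell]$, $T=[\ell]$, or $I,T$ both proper gives exactly conditions (1), (2), (3); in each case $E_f(\fqstar)=\varnothing$.

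For sufficiency I would run the dictionary in reverse. Under (1), \cref{f_i(C_i)_di=dj} shows each coset in $\phi(I,\ell)$ equals $f_i(C_i)$ for exactly two $i\in I$, so every $v\in\im f$ has exactly two preimages; under (2), \cref{f_i(C_i)_di=dj} shows the sets $f_t(C_t)$ ($t\in T$) are pairwise disjoint, again giving exactly two preimages for every $v\in\im f$; under (3) the stated conditions yield simultaneously that $\phi(\,\cdot\,,\ell)$ is \mset{2}{I}, that $\phi(\,\cdot\,,2\ell)$ is \mset{1}{T}, and that $f_i(C_i)\cap f_t(C_t)=\varnothing$ for all $i\in I$, $t\in T$ by \cref{thm:fiCifjCj}, so every $v\in\im f$ has exactly two preimages. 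In all cases a cardinality count, $\#\phi(I,\ell)\cdot s+\#T\cdot s/2=(\ell/2)s=(q-1)/2$, shows these fibres exhaust $\fqstar$, hence $f$ is \mfield{2}{\fqstar}.

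The main obstacle is the necessity analysis (a)--(c): for each type of index pair one must decide via \cref{thm:fiCifjCj} whether the two images coincide, are nested, or are disjoint, and then convert every bad configuration into the bound $\#E_f(\fqstar)\ge2$. The delicate point is (c), where excluding a solitary $d_i=1$ branch sitting over a coset uses $s\ge2$, i.e.\ the hypothesis $\ell\le(q-1)/2$; the remaining translations through \cref{lem:fi(ci)}, \cref{thm:fiCifjCj}, and \cref{f_i(C_i)_di=dj} are routine.
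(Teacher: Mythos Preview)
Your proposal is correct and follows essentially the same approach as the paper: pass to the branchwise description $f|_{C_i}=a_ix^{r_i}$ via \cref{lem:fi(ci)}, force $d_i\le2$ so that $[\ell]=I\cup T$, and then use \cref{thm:fiCifjCj}/\cref{f_i(C_i)_di=dj} to rule out every bad image-overlap pattern, with the hypothesis $\ell\le(q-1)/2$ (i.e.\ $s\ge2$) used exactly where you use it, to exclude a lone $d_i=1$ coset. The only organizational difference is that you prove the three obstructions (a)--(c) uniformly and split into the three cases afterwards, whereas the paper treats the cases $I=[\ell]$, $T=[\ell]$, and both proper separately from the start; the content is the same.
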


\begin{proof}
By \cref{map_aixri}, 
$f(x) = f_i(x) \coloneq a_i x^{r_i}$ for $x \in C_i$.
We first show the necessity.
Assume $f$ is \mfield{2}{\fqstar}.

Case~1: $I = [\ell]$, i.e., 
$(r_i, s) = 1$ for $0 \le i \le \ell-1$.
Then $f_i$ is \onetoone from $C_i$ to 
$C_{\phi(i, \ell)}$ for $i \in I$,
where $\# C_i = s \ge 2$ by $\ell \le (q-1)/2$.
Thus $f_b(C_b) = f_c(C_c)$ or 
$f_b(C_b) \cap f_c(C_c)= \varnothing$
for any $b, c \in I$.
If there exists $b \in I$ such that 
$f_c(C_c) \cap f_b(C_b) = \varnothing$ 
for any $c \in I \setminus \{ b\}$,
then $C_b \subseteq E_{f}(\fqstar)$, 
and so $\# E_{f}(\fqstar) \ge 2$,
contrary to that $f$ is \mfield{2}{\fqstar}.
Thus for any $b \in I$,
there is a unique $c \in I \setminus \{ b\}$ 
such that $f_b(C_b) = f_c(C_c)$ or equivalently
$\phi(b, \ell) = \phi(c, \ell)$ by \cref{thm:fiCifjCj}.
That is, $\# I$ is even and $\phi(x, \ell)$ is \mset{2}{I}.

Case~2: $T = [\ell]$, i.e., 
$(r_t, s) = 2$ for $0 \le t \le \ell-1$. 
Then $f_t$ is \mset{2}{C_t} for any $t \in T$.
Since $f$ is \mfield{2}{\fqstar}, we obtain
$f_u(C_u) \cap f_v(C_v) = \varnothing$, or equivalently
$\phi(u, 2 \ell) \ne \phi(v, 2 \ell)$ by \cref{thm:fiCifjCj}, 
for any distinct $u, v \in T$.
Thus $\phi(x, 2 \ell)$ is \mset{1}{T}.

Case~3: $I \ne [\ell]$ and $T \ne [\ell]$.
If $(r_i, s) \ge 3$ for some $i \in [\ell]$, 
then $f_i$ is $(r_i, s)$-to-$1$ 
from $C_i$ to $C_{\phi(i, \ell)}$,
and so $f$ is not \mfield{2}{\fqstar}, a contradiction. 
Thus $(r_i, s) \le 2$ and so $I \cup T = [\ell]$.
Since $f$ is \mfield{2}{\fqstar}, we have
$f_i(C_i) \cap f_t(C_t) = \varnothing$, or equivalently 
$\phi(i, \ell) \ne \phi(j, \ell)$ by \cref{thm:fiCifjCj},
for any $i \in I$ and $t \in T$.
The remainder of the argument is the same as that in Cases~1 and~2,
and so is omitted.

According to the analysis above,
it is easy to verify that the sufficiency holds.
\end{proof}

\cref{Lpieces2to1} gives a criterion for~$f$ to be 
\mfield{2}{\fqstar} when $2 \le \ell \le (q-1)/2$.
If $\ell = 1$, then $f(x) = a_0 x^{r_0}$ and so 
$f$ is \mfield{2}{\fqstar} \ifa $(r_0, q-1) = 2$.
If $\ell = q - 1$, then $I = [\ell]$ and thus~$f$ 
is \mfield{2}{\fqstar} \ifa $\phi(x, \ell)$ is \mset{2}{[\ell]}.
Hence the \twotoone property of~$f$ on $\fqstar$ is completely 
characterized for any divisor $\ell$ of $q-1$.

In \cref{Lpieces2to1}, take $q = 17$, $\ell = s = 4$, 
and $\xi = 3$. Then $1/\ell = \omega = - 4$.
Take 
\[
(a_0, a_1, a_2, a_3) = (-6, 4, -3, -2)  \qtq{and}
(r_0, r_1, r_2, r_3) = (1, 1, 1, 1).
\]
Then $I = [4]$ and 
$\phi(x, \ell) = (x + \log_{\xi} a_x) \bmod{4}$.
Thus $\phi(x, \ell)$ is \mfield{2}{I}. Take 
\[
(a_0, a_1, a_2, a_3) = (-6, -8, -3, 1)  \qtq{and}
(r_0, r_1, r_2, r_3) = (2, 2, 2, 2).
\]
Then $T = [4]$ and 
$\phi(x, 2\ell) = (2 x + \log_{\xi} a_x) \bmod{8}$. 
Thus $\phi(x, 2\ell)$ is \mfield{1}{T}. Take 
\[
(a_0, a_1, a_2, a_3) = (8, 2, -8, 4)  \qtq{and}
(r_0, r_1, r_2, r_3) = (2, 3, 2, 3).
\]
Then $I = \{1, 3\}$ and $T = \{0, 2\}$.
It is easy to verify that all the conditions 
in \cref{L2to1(3)} of \cref{Lpieces2to1} hold. 
Hence we get the next example.

\begin{example}
The following polynomials are \mfield{2}{\f_{17}^{*}}:
\begin{align*}
   F_1(x) & = 6 x^{13} - 7 x^9 + x^5 - 6 x, \\
   F_2(x) & = 4 x^{14} + 8 x^{10} + 3 x^6 - 4 x^2, \\
   F_3(x) & = 2 x^{15} + 4 x^{14} + 7 x^{11} - 2 x^7 + 4 x^6 - 7 x^3.
\end{align*}
\end{example}

We now consider the \mtoone property of~$f$.  
If $f_i(C_i) \cap f_j(C_j) = \varnothing$ 
for any distinct $i, j \in [\ell]$, 
then $f$ is \mfield{m}{\fqstar} \ifa 
$(r_i, s) = m$ for any $i \in [\ell]$. 
If $(r_i, s) = 1$ for any $i \in [\ell]$ and 
$m \mid \ell$, then $f$ is \mfield{m}{\fqstar} 
\ifa $\phi(x, \ell)$ is \mset{m}{[\ell]}. 
If $\ell \ge 4$, $f_{i_0}(C_{i_0}) = f_{j_0}(C_{j_0})$ 
for two fixed $i_0$, $j_0 \in [\ell]$, and 
$f_i(C_i) \cap f_j(C_j) = \varnothing$ 
for any $i, j \in [\ell] \setminus \{i_0\}$, 
then $f$ is \mfield{m}{\fqstar} \ifa 
$m = 2(r_{i_0}, s)$ and $m = (r_i, s)$ for any 
$i \in [\ell] \setminus \{i_0, j_0\}$.

In general, it is difficult to determine the \mtoone property
of~$f$ in \cref{map_aixri} for $\ell \ge 4$ and $m \ge 3$.
However, this problem becomes relatively simple when all $(r_i, s)$'s are equal.

\begin{theorem}\label{Lpiecesmto1}
Let $q-1 = \ell s$ and $[\ell] = \{0, 1, \ldots, \ell-1\}$.
Let
\begin{equation*}
  f(x) = \frac{1}{\ell}
    \sum_{i=0}^{\ell-1}\sum_{j=0}^{\ell-1}
    a_{i}  \omega^{- i j} x^{r_i + j s},
\end{equation*}
where $a_i \in \fqstar$, $r_i\in \n$, and $\omega$ is a primitive 
$\ell$-th root of unity over $\fq$. 
Assume $(r_i, s) = d$ for any $i \in [\ell]$.
Then for $1 \le m < q$, $f$ is \mfield{m}{\fqstar} \ifa 
$d \mid m$, 
$\phi(x, \ell d)$ is \mset{(m/d)}{[\ell]},
 and $s(\ell \bmod{(m/d)}) < m$.
\end{theorem}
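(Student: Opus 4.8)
The plan is to build on \cref{lem:fi(ci)}, which tells us that each branch $f_i$ is $d$-to-$1$ from $C_i$ onto $C_{\phi(i,\ell)}$, together with \cref{thm:fiCifjCj} (or rather its specialization \cref{f_i(C_i)_di=dj}, valid here since all $d_i=d$), which says that for any two branches, either $f_i(C_i)=f_j(C_j)$ (precisely when $\phi(i,\ell d)=\phi(j,\ell d)$) or $f_i(C_i)\cap f_j(C_j)=\varnothing$. Thus the images $f_0(C_0),\dots,f_{\ell-1}(C_{\ell-1})$ group into equivalence classes indexed by the distinct values of $\phi(x,\ell d)$; say the value $v$ is attained by exactly $n_v$ of the indices $i\in[\ell]$. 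For $x$ in the common image corresponding to $v$, the number of preimages of $x$ under $f$ in $\fqstar$ is exactly $n_v d$, since the $n_v$ branches share the same image set and each is $d$-to-$1$, while the remaining branches contribute nothing there.

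With this decomposition, $f$ is \mfield{m}{\fqstar} if and only if the multiset $\{n_v d : v\}$ is, in the sense of \cref{defn:mto1}, ``$m$-to-$1$'' as a description of fibre sizes. Concretely, first I would argue that $d\mid m$ is necessary: every fibre of $f$ over a non-exceptional point has size a multiple of $d$ (it is $n_v d$ for some $v$), and likewise the exceptional set is a union of some of the sets $C_i\setminus(\text{matched part})$, hence has size divisible by $d$; so $m$, being the common fibre size, must be divisible by $d$. Write $m=d\,m'$. Then over the class of value $v$ the fibre has size $n_v d$, which equals $m$ iff $n_v=m'$, is $<m$ iff $n_v<m'$, and is $>m$ iff $n_v>m'$. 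Since no fibre is allowed to exceed $m$ (that would immediately violate the definition of \mtoone, as in the arguments of \cref{aixrimto1_2pieces,aixrimto1_3pieces}), we need $n_v\le m'$ for all $v$; and the points where $n_v<m'$ force the whole corresponding union of cosets into $E_f(\fqstar)$, contributing $n_v d$ exceptional elements. Writing $\ell=\sum_v n_v$ and $\ell=q_0 m'+r_0$ with $0\le r_0<m'$ (so $r_0=\ell\bmod m'$), the condition $n_v\le m'$ for all $v$ together with the requirement that the exceptional set have size $<m$ is exactly equivalent to: the function $\phi(x,\ell d)$ is \mset{m'}{[\ell]} (i.e.\ $m'$ of its fibres have the maximum size and the rest sum correctly — which by the counting above is the statement that $n_v\le m'$ with as many as possible equal to $m'$), and $\#E_f(\fqstar)=d\,r_0=d(\ell\bmod m')<m$, i.e.\ $s\,(\ell\bmod(m/d))<m$ — wait, one must be careful: the exceptional count is $d$ times the number of leftover indices, which is $d\,(\ell-q_0 m')=d\,r_0$, not $s\,r_0$.

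Let me reconsider this last point, since it is the step I expect to be the main subtlety. The set $C_i$ has $s$ elements, and $f_i$ is $d$-to-$1$ onto an image of size $s/d$. When a value-class $v$ contains $n_v$ indices with $n_v<m'$, those $n_v$ cosets together have $n_v s$ elements but their common image has only $s/d$ points, each of which is ``short'' of the required $m=m'd$ preimages by $(m'-n_v)d$. Careful bookkeeping — exactly as in \cref{lem:di+dj} and the case analysis of \cref{aixrimto1_3pieces} — shows that \emph{all} $n_v s$ of these elements must lie in $E_f(\fqstar)$ when $n_v<m'$, and none when $n_v=m'$. Hence $\#E_f(\fqstar)=s\sum_{v:\,n_v<m'} n_v$. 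Now $\phi(x,\ell d)$ being \mset{m'}{[\ell]} means precisely that the fibre sizes $n_v$ are: as many as possible equal to $m'$, and the single leftover fibre has size $r_0=\ell\bmod m'$; so $\sum_{v:n_v<m'}n_v=r_0$, giving $\#E_f(\fqstar)=s\,(\ell\bmod m')=s\,(\ell\bmod(m/d))$, and the \mtoone definition demands this be $<m$. Conversely, if $d\mid m$, $\phi(x,\ell d)$ is \mset{(m/d)}{[\ell]}, and $s(\ell\bmod(m/d))<m$, then every non-exceptional fibre has size exactly $m$, the exceptional set has the right (too-small-to-form-a-fibre) size, and $f$ is \mfield{m}{\fqstar}. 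The main obstacle is thus not any deep idea but the precise accounting of the exceptional set size: one must verify that the ``short'' cosets are forced entirely into $E_f(\fqstar)$ and that the ``full'' classes contribute nothing, which is a direct generalization of the two- and three-branch case analyses already carried out, and I would present it by isolating the single general claim ``a value-class with $n_v=m'$ contributes $s/d$ good fibres and $0$ exceptional points, and one with $n_v<m'$ contributes $0$ good fibres and $n_v s$ exceptional points,'' then summing.
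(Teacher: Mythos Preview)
Your proposal is correct and follows essentially the same route as the paper: both use \cref{lem:fi(ci)} and \cref{f_i(C_i)_di=dj} to see that every fibre of $f$ has size $n_v d$ where $n_v$ is the size of the corresponding fibre of $\phi(\cdot,\ell d)$ on $[\ell]$, deduce $d\mid m$, and then translate ``$f$ is \mfield{m}{\fqstar}'' into ``$\phi(\cdot,\ell d)$ is \mset{(m/d)}{[\ell]}'' together with the exceptional-set bound $s(\ell\bmod(m/d))<m$. One small slip to clean up: the $r_0=\ell\bmod(m/d)$ leftover indices of $\phi$ need not form a \emph{single} $\phi$-fibre, but since your exceptional count $\#E_f(\fqstar)=s\sum_{v:\,n_v<m'}n_v=s\,r_0$ depends only on the total number of such indices, the argument goes through unchanged.
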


\begin{proof}
By \cref{map_aixri}, 
$f(x) = f_i(x) \coloneq a_i x^{r_i}$ for $x \in C_i$.
For any $i \in [\ell]$, since $(r_i, s) = d$,
each $f_i$ is \mset{d}{C_i} and $E_{f_i}(C_i) = \varnothing$,
and so
each element in $f_i(C_i)$ has $d$ preimages in $C_i$.
By \cref{f_i(C_i)_di=dj},  
$f_i(C_i) = f_j(C_j)$ or 
$f_i(C_i) \cap f_j(C_j) = \varnothing$
for any $i, j \in [\ell]$.
Hence the number of preimages of any $b \in f(\fqstar)$ 
is a multiple of $d$. 
We first show the necessity.
Assume $f$ is \mfield{m}{\fqstar} 
and $q - 1 = u m + v$ with $0 \le v < m$.
Then there are $u$ distinct elements $b_1, b_2, \ldots, b_u \in f(\fqstar)$ 
such that each $b_i$ has exactly~$m$ preimages in~$\fqstar$.
Thus $d \mid m$. 
Let $m_1 = m/d$ and $\ell = k m_1 + t$ with $0 \le t < m_1$. 
Since $b_1$ has $0$ or $d$ preimages in each $C_i$,
then there are exactly $m_1$ sets 
$C_{11}, \ldots, C_{1 m_1}$ 
such that each set has exactly $d$ preimage of $b_1$, 
that is,
\[
   b_1 \in f_{11}(C_{11}) \cap\cdots\cap f_{1 m_1}(C_{1 m_1}). 
\]
Recall that $f_{1i}(C_{1i})$ and $f_{1j}(C_{1j})$ are either equal or disjoint.
Thus $f_{11}(C_{11})  = \cdots = f_{1 m_1}(C_{1 m_1})$.
Similarly, if some $b_i \notin f_{11}(C_{11})$,
then there are another $m_1$ sets $C_{21}, \ldots, C_{2 m_1}$ such that
$f_{21}(C_{21})  = \cdots = f_{2 m_1}(C_{2 m_1})$.
Hence the preimages of $b_1$, $b_2$,~$\ldots$, $b_u$ 
contain at most $k m_1$ sets of the form $C_{ij}$
and so $um \le k m_1 s$, i.e., $u \le ks/d$.
Note that 
\begin{equation}\label{eq_q-1_factor}
     u m + v = q - 1 = \ell s = (k m_1 + t)s = (ks/d) m + st.
\end{equation}
If $st \ge m$, then $u > ks/d$, contrary to $u \le ks/d$.
Thus $st < m$, i.e., $s(\ell \bmod{(m/d)}) < m$.
By \cref{eq_q-1_factor}, 
$v = st$ and $um = (ks/d) m = k m_1 s$.
Hence the preimages of $b_1$, $b_2$,~$\ldots$, $b_u$ 
contain $k m_1$ sets and the remaining~$t$ sets consist of 
$(q - 1) - um = v$ elements.
Thus the image  $f(\fqstar)$ is divided into some disjoint sets: 
\begin{equation*}
    f_{01}(C_{01}) \cup\cdots\cup f_{0t}(C_{0t}) \qtq{and}
    f_{i1}(C_{i1})  = \cdots = f_{i m_1}(C_{i m_1}),  
    \quad\text{$1 \le i \le k$.} 
\end{equation*}           
Therefore, by \cref{f_i(C_i)_di=dj}, we get 
\begin{equation*}
    \{\phi(01, \ell d), \ldots,  \phi(0t, \ell d)\}, \quad 
    \{\phi(i1, \ell d) = \cdots = \phi(i m_1, \ell d) \}, \quad
    \text{$1 \le i \le k$,} 
\end{equation*}
and they are disjoint.
Hence $\phi(x, \ell d)$ is \mset{m_1}{[\ell]}. 

According to the analysis above,
it is easy to verify that the sufficiency holds.
\end{proof}

In \cref{Lpiecesmto1}, take $q = 17$, $\ell = s = 4$, 
and $\xi = 3$. Then $1/\ell = \omega = - 4$.
Take 
\[
(a_0, a_1, a_2, a_3) = (5, 7, -1, -2)  \qtq{and}
(r_0, r_1, r_2, r_3) = (2, 2, 2, 2).
\]
Then $(r_i, s) = 2$ for any $i \in [4]$ and 
$\phi(x, 2 \ell) = (2 x + \log_{\xi} a_x) \bmod{8}$. 
Thus $\phi(x, 2\ell)$ is \mfield{2}{[4]}. Take 
\[
(a_0, a_1, a_2, a_3) = (-2, -6, -4, 3)  \qtq{and}
(r_0, r_1, r_2, r_3) = (3, 3, 1, 3).
\]
Then $(r_i, s) = 1$ for any $i \in [4]$.
Note that $\phi(x, \ell)$ is \mfield{4}{[4]}. 
Hence we get the next example.
\begin{example}
The following polynomials are \mfield{4}{\f_{17}^{*}}:
\begin{align*}
   F_1(x) & = x^{14} + 4 x^{10} + 2 x^6 - 2 x^2, \\
   F_2(x) & = x^{13} - 4 x^{11} - x^9 - x^7 + x^5 + 3 x^3 - x.
\end{align*}
\end{example}

Applying \cref{Lpiecesmto1} to $a_1 = \cdots = a_{\ell -1}$ and 
$r_1 = \cdots = r_{\ell -1}$ yields the following result.

\begin{corollary}\label{Lpiecesmto1_d=1_r1=rl-1}
Let $q-1 = \ell s$ with $\ell \ge 3$.
Let $r_0, r_1 \in \n$ and $a_0, a_1 \in \fqstar$ satisfy $(r_0,s) = (r_1,s) = d$
and $a_0^{s/d} = a_1^{s/d}$.
Then for $1 \le m < q$,
\[
  f(x) = (1/\ell)(a_0 x^{r_0} - a_1 x^{r_1})
         (1 +x^{s} + \cdots + x^{(\ell-1)s}) + a_1 x^{r_1}
\]
is \mfield{m}{\fqstar} \ifa 
$(r_1, \ell d) = m$.
\end{corollary}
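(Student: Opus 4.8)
The plan is to recognize $f$ as a generalized cyclotomic mapping and then invoke \cref{Lpiecesmto1}. First I would note that on $C_0$ one has $x^s = 1$, hence $1 + x^s + \cdots + x^{(\ell-1)s} = \ell$ and $f(x) = a_0 x^{r_0}$ there, while on $C_i$ with $1 \le i \le \ell-1$ one has $x^s = \omega^i \ne 1$, so the geometric sum vanishes and $f(x) = a_1 x^{r_1}$. Thus $f$ is exactly the mapping \cref{map_aixri} built from $(a_0, r_0)$ on $C_0$ and from $(a_1, r_1)$ on each of $C_1, \ldots, C_{\ell-1}$. Since $(r_0, s) = (r_1, s) = d$, all the exponents share the same gcd $d$ with $s$, so \cref{Lpiecesmto1} applies and asserts that $f$ is \mfield{m}{\fqstar} \ifa $d \mid m$, $\phi(x, \ell d)$ is \mset{(m/d)}{[\ell]}, and $s(\ell \bmod (m/d)) < m$.

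The next step is to evaluate $\phi(x, \ell d)$ for this particular $f$. For $1 \le i \le \ell-1$ one has $\phi(i, \ell d) = (i r_1 + \log_\xi a_1) \bmod \ell d$, and $\phi(0, \ell d) = (\log_\xi a_0) \bmod \ell d$. The hypothesis $a_0^{s/d} = a_1^{s/d}$ means $(s/d)(\log_\xi a_0 - \log_\xi a_1) \equiv 0 \pmod{q-1}$, and since $q - 1 = (s/d) \cdot \ell d$, dividing by $s/d$ gives $\log_\xi a_0 \equiv \log_\xi a_1 \pmod{\ell d}$. Hence $\phi(0, \ell d) = (\log_\xi a_1) \bmod \ell d$ as well, so in fact $\phi(x, \ell d) = (x r_1 + \log_\xi a_1) \bmod \ell d$ holds for every $x \in [\ell]$.

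Finally I would read off the fiber structure of this affine map on $[\ell]$. For $i, i' \in [\ell]$, $\phi(i, \ell d) = \phi(i', \ell d)$ \ifa $\ell d \mid (i - i') r_1$, which is equivalent to $(\ell d / e) \mid (i - i')$ where $e = (r_1, \ell d)$. Because $d \mid r_1$ we have $e = d(r_1/d, \ell)$, so, writing $e' := e/d = (r_1/d, \ell)$, the integer $\ell d / e = \ell / e'$ divides $\ell$; it follows that every non-empty fiber of $\phi$ on $[\ell]$ has exactly $e'$ elements, with empty exceptional set. Therefore $\phi(x, \ell d)$ is \mset{(m/d)}{[\ell]} \ifa $m/d = e'$, i.e.\ $m = d e' = (r_1, \ell d)$; and when this holds, $d \mid m$ is automatic and $m/d = e' \mid \ell$ forces $\ell \bmod (m/d) = 0$, so $s(\ell \bmod (m/d)) = 0 < m$ as well. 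Combining the three conditions of \cref{Lpiecesmto1} then shows $f$ is \mfield{m}{\fqstar} \ifa $(r_1, \ell d) = m$.

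I do not expect a serious obstacle; the two points needing care are passing from the multiplicative relation $a_0^{s/d} = a_1^{s/d}$ to a congruence modulo $\ell d$ (rather than modulo $q-1$) and the gcd identity $(r_1, \ell d) = d(r_1/d, \ell)$, which is what makes all the fibers of $\phi$ on $[\ell]$ equidistributed; both are elementary. It is also worth verifying the extreme cases $e' = 1$ (where $\phi$ is injective on $[\ell]$, giving $m = d$) and $e' = \ell$ (where $\phi$ is constant on $[\ell]$, giving $m = \ell d$) directly against \cref{defn:mto1}, to be sure the many-to-one bookkeeping is applied correctly at the endpoints.
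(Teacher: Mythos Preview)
Your proposal is correct and follows essentially the same route as the paper: identify $f$ as the cyclotomic mapping with data $(a_0,r_0)$ on $C_0$ and $(a_1,r_1)$ elsewhere, use $a_0^{s/d}=a_1^{s/d}$ to collapse $\phi(x,\ell d)$ to $(r_1x+\log_\xi a_1)\bmod\ell d$, and read off its fiber size via the gcd identity $(r_1,\ell d)=d\,(r_1/d,\ell)$. The paper phrases the last step as the reduction $r_1x_1\equiv r_1x_2\pmod{\ell d}\iff (r_1/d)x_1\equiv(r_1/d)x_2\pmod{\ell}$ and then quotes the standard fact that $(r_1/d)x$ is $(m/d)$-to-$1$ on $\z_\ell$ iff $(r_1/d,\ell)=m/d$, which is exactly your fiber computation; your write-up is in fact slightly more careful in explicitly checking that the remaining two conditions of \cref{Lpiecesmto1} ($d\mid m$ and $s(\ell\bmod(m/d))<m$) become automatic once $(r_1,\ell d)=m$.
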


\begin{proof}
Let $[\ell] = \{0, 1, \ldots, \ell-1\}$ and $[\ell]^{*} = [\ell] \setminus\{0\}$.
Note that 
\begin{equation*}
  \phi(x, \ell d) = 
  \begin{cases}
    \log_{\xi} a_0 \bmod{\ell d} & \text{for $x = 0$,} \\
    (r_1 x + \log_{\xi} a_1) \bmod{\ell d} & \text{for $x \in [\ell]^{*}$.}
  \end{cases}   
\end{equation*}
The condition $a_0^{s/d} = a_1^{s/d}$ implies that 
$\log_{\xi} {a_0} \bmod{\ell d} = \log_{\xi} {a_1} \bmod{\ell d}$, 
and so
$\phi(x, \ell d) = (r_1 x + \log_{\xi} a_1) \bmod{\ell d}$ for any $x \in [\ell]$.
Hence $\phi(x, \ell d)$ is \mset{(m/d)}{[\ell]} \ifa 
$r_1 x  \bmod{\ell d}$ is \mset{(m/d)}{[\ell]},
or equivalently $(r_1 / d) x  \bmod{\ell}$ is \mset{(m/d)}{[\ell]},
since $r_1 x_1 \equiv r_1 x_2 \pmod{\ell d}$ \ifa 
$(r_1 / d) x_1 \equiv (r_1 / d) x_2 \pmod{\ell}$
for any $x_1, x_2 \in [\ell]$.
It is also equivalent to $(r_1 / d) x$ is \mset{(m/d)}{\z_{\ell}}; 
that is $(r_1/d, \ell) = m/d$, i.e., $(r_1, \ell d) = m$.
Then the result follows from \cref{Lpiecesmto1}.
\end{proof}

Applying \cref{Lpiecesmto1} to $m = d$ or $d = 1$ 
yields the following results.

\begin{corollary}\label{Lpiecesmto1_dto1}
Let the notation and the hypotheses be as in \cref{Lpiecesmto1}.
Then $f$ is \mfield{d}{\fqstar} \ifa 
$\phi(x, \ell d)$ is \mset{1}{[\ell]}.
\end{corollary}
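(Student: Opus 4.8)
The plan is to obtain this statement as an immediate specialization of \cref{Lpiecesmto1} to the case $m = d$. First I would check that $m = d$ lies in the admissible range $1 \le m < q$ required by that theorem: since $d = (r_i, s)$ divides $s = (q-1)/\ell$, we have $1 \le d \le s \le q-1 < q$, so \cref{Lpiecesmto1} does apply with this choice of $m$.

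Next I would evaluate the three conditions of \cref{Lpiecesmto1} at $m = d$. The divisibility condition $d \mid m$ becomes $d \mid d$, which is trivially true, and consequently $m/d = 1$. The inequality $s(\ell \bmod{(m/d)}) < m$ then reads $s(\ell \bmod 1) = s \cdot 0 = 0 < d$, which holds since $d \ge 1$. Hence the only surviving condition is that $\phi(x, \ell d)$ is \mset{1}{[\ell]}, which is exactly the claim; this proves both directions simultaneously. I do not anticipate any real obstacle here, as the result is a direct corollary: the only points requiring a moment's care are confirming that $m = d$ is in the allowed range and observing that $\ell \bmod 1 = 0$, so the technical inequality in \cref{Lpiecesmto1} is automatically vacuous.
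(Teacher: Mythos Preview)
Your proposal is correct and matches the paper's approach exactly: the corollary is obtained by applying \cref{Lpiecesmto1} with $m=d$, at which point $d\mid m$ and the inequality $s(\ell\bmod(m/d))<m$ become vacuous, leaving only the injectivity condition on $\phi(x,\ell d)$.
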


\begin{corollary}\label{Lpiecesmto1_d=1}
With the notation and the hypotheses of \cref{Lpiecesmto1},
let $d = 1$.
Then for $1 \le m < q$, $f$ is \mfield{m}{\fqstar} \ifa 
$\phi(x, \ell)$ is \mset{m}{[\ell]} and $s(\ell \bmod{m}) < m$.
\end{corollary}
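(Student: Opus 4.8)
The plan is to derive this as an immediate specialization of \cref{Lpiecesmto1}. Putting $d=1$ there, the standing hypothesis $(r_i,s)=d$ for all $i\in[\ell]$ becomes exactly $(r_i,s)=1$ for all $i$, which is the hypothesis of the present statement. So it suffices to verify that the three conditions of \cref{Lpiecesmto1} characterizing when $f$ is \mfqstar---namely $d\mid m$, that $\phi(x,\ell d)$ is \mset{(m/d)}{[\ell]}, and $s(\ell\bmod(m/d))<m$---collapse to the two conditions listed here when $d=1$.

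First, $d\mid m$ becomes $1\mid m$, which always holds, so it can be dropped. Second, $\ell d=\ell$ and $m/d=m$, so ``$\phi(x,\ell d)$ is \mset{(m/d)}{[\ell]}'' becomes ``$\phi(x,\ell)$ is \mset{m}{[\ell]}''; here one only uses that $\phi(i,n)$ is defined for an arbitrary modulus $n$, which is the case from its definition $\phi(i,n)=(ir_i+\log_\xi a_i)\bmod n$ in the Introduction. Third, $m/d=m$ turns $s(\ell\bmod(m/d))<m$ into $s(\ell\bmod m)<m$ verbatim, and the range $1\le m<q$ is carried over unchanged. Hence the equivalence in \cref{Lpiecesmto1} yields precisely the claimed equivalence.

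Every step is a substitution, so there is essentially no obstacle; the only point worth checking is that no hypothesis of \cref{Lpiecesmto1} is silently weakened or strengthened, which it is not. If a self-contained argument were preferred, one could instead replay the proof of \cref{Lpiecesmto1} with $d=1$: each $f_i$ is then a bijection from $C_i$ onto $C_{\phi(i,\ell)}$, the images $f_i(C_i)$ are pairwise equal or disjoint by \cref{f_i(C_i)_di=dj}, and counting fibres of $f$ over $\fqstar$ reduces to counting fibres of $x\mapsto\phi(x,\ell)$ on $[\ell]$, the $t=\ell\bmod m$ leftover cosets forming the exceptional set of size $st$, which must be smaller than $m$. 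But the one-line specialization is cleaner and is the route I would take.
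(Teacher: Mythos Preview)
Your proposal is correct and matches the paper's approach: the corollary is stated without proof in the paper, being an immediate specialization of \cref{Lpiecesmto1} with $d=1$, exactly as you derive it. The substitutions you check are all that is needed.
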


\cref{Lpiecesmto1_dto1} generalizes \cite[Corollary~2.1]{Qin2469} 
where $\phi(x, \ell)$ is \mset{1}{[\ell]} and only the sufficient condition is given.
\cref{Lpiecesmto1_d=1} generalizes \cite[Corollary~2.2]{Qin2469} 
where $m \mid \ell$ and only the sufficient condition is given.
Applying \cref{Lpiecesmto1_d=1} to $a_i = \ell \omega^{i(\ell - 1)}$ and 
$r_i = q^i$ yields the following result.

\begin{corollary}\label{Lpiecesmto1_d=1_ex}
Let $q^n \equiv 1 \pmod{\ell^2}$ and $s = (q^n -1)/\ell$.
Let 
\[
    f(x) = \sum_{i = 0}^{ \ell -1} x^{q^i}
           \prod_{j=0, j \ne i}^{\ell -1} (x^s - \omega^j),
\]
where $\omega$ is a primitive 
$\ell$-th root of unity over $\fqn$.
Then for $1 \le m < q^n$, 
$f$ is \mfield{m}{\fqnstar} \ifa
$(x q^x) \bmod{\ell}$ is \mset{m}{[\ell]} and $s(\ell \bmod{m}) < m$.
\end{corollary}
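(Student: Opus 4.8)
The plan is to observe that $f$ is nothing but a generalized cyclotomic mapping of the form~\cref{map_aixri} over the field $\fqn$ (with $q^n$ and $s=(q^n-1)/\ell$ playing the roles of $q$ and $s$), and then to quote \cref{Lpiecesmto1_d=1}. Throughout one uses that $p\nmid\ell$ when $q=p^a$: indeed $\ell\mid q^n-1$, so $p\mid\ell$ would force $p\mid q^n-1$ while also $p\mid q^n$, hence $p\mid 1$. Thus $\ell$ is a nonzero element of $\fqn$ and the $\ell$-th roots of unity there are pairwise distinct.

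First I would compute $f$ branch by branch. As polynomials over $\fqn$ one has $\prod_{j=0}^{\ell-1}(X-\omega^j)=X^{\ell}-1$; dividing by $X-\omega^i$ and evaluating the resulting polynomial at $X=\omega^i$ (equivalently, evaluating the formal derivative of $X^{\ell}-1$ at $\omega^i$) gives $\prod_{j=0,\,j\ne i}^{\ell-1}(\omega^i-\omega^j)=\ell\,\omega^{i(\ell-1)}$. For $x\in C_i$ we have $x^s=\omega^i$, so the $t$-th summand of $f(x)$ vanishes whenever $t\ne i$ (it carries the factor $x^s-\omega^i=0$) while the $i$-th summand equals $x^{q^i}\prod_{j\ne i}(\omega^i-\omega^j)=\ell\,\omega^{i(\ell-1)}x^{q^i}$; and $f(0)=0$ since every $q^i\ge1$. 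Hence, as a map on $\fqn$, $f$ coincides with~\cref{map_aixri} for index $\ell$ with $a_i=\ell\,\omega^{i(\ell-1)}$ and $r_i=q^i$.

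Next I would check the hypotheses of \cref{Lpiecesmto1_d=1}. Every prime divisor of $r_i=q^i$ divides $q$, hence is coprime to $q^n-1$ and a fortiori to $s$, so $(r_i,s)=1$ for all $i$; thus $d=1$. Fix any primitive element $\xi$ of $\fqn$. From $q^n\equiv1\pmod{\ell^2}$ and $q^n-1=\ell s$ we get $\ell\mid s$. Since $\omega$ has multiplicative order $\ell$ in the cyclic group $\fqnstar$, we have $s\mid\log_{\xi}\omega$, hence $\ell\mid\log_{\xi}\omega$, and therefore
\[
  \log_{\xi}a_i=\log_{\xi}\ell+i(\ell-1)\log_{\xi}\omega\equiv\log_{\xi}\ell\pmod{\ell}
\]
for every $i\in[\ell]$. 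Writing $c=\log_{\xi}\ell\bmod\ell$, this gives $\phi(i,\ell)=(i r_i+\log_{\xi}a_i)\bmod\ell=(i q^i+c)\bmod\ell$.

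Finally, \cref{Lpiecesmto1_d=1} says $f$ is \mfield{m}{\fqnstar} if and only if $\phi(x,\ell)$ is \mset{m}{[\ell]} and $s(\ell\bmod m)<m$. Because $y\mapsto(y+c)\bmod\ell$ is a bijection of $\z_{\ell}$, the fibres of $\phi(x,\ell)=(xq^x+c)\bmod\ell$ have the same cardinalities as those of $(xq^x)\bmod\ell$, so $\phi(x,\ell)$ is \mset{m}{[\ell]} if and only if $(xq^x)\bmod\ell$ is \mset{m}{[\ell]}; the condition $s(\ell\bmod m)<m$ is untouched. This yields the claimed equivalence. I expect the only points needing genuine care to be the Lagrange/derivative identity for $\prod_{j\ne i}(\omega^i-\omega^j)$ that exhibits $f$ as a cyclotomic mapping, and the bookkeeping that both occurrences of $\log_{\xi}\omega$ disappear modulo $\ell$, leaving only the harmless additive constant $\log_{\xi}\ell$; after that the statement is a direct application of \cref{Lpiecesmto1_d=1}.
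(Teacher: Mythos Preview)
Your proof is correct and follows essentially the same approach as the paper's: identify $f$ on each coset as $a_i x^{q^i}$ with $a_i=\ell\,\omega^{i(\ell-1)}$, check $(q^i,s)=1$, and apply \cref{Lpiecesmto1_d=1} after observing that $\ell\mid s$ kills the $\omega$-contribution to $\phi(x,\ell)$ modulo~$\ell$. Your version is in fact slightly more careful than the paper's in that you explicitly track the residual constant $c=\log_{\xi}\ell\bmod\ell$ and justify why the shift $y\mapsto y+c$ preserves the \mset{m}{[\ell]} property, a step the paper leaves implicit.
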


\begin{proof}
Clearly, $f(x) = a_i x^{q^i}$ for $x \in C_i$, where 
\[
    a_i = \prod_{j=0, j \ne i}^{\ell -1} (\omega^i - \omega^j)
        = \omega^{i(\ell - 1)} (1- \omega^{-1}) \cdots (1- \omega^{-(\ell -1)})
        = \ell \omega^{i(\ell - 1)}.
\]
Since $(q^i, s) = 1$ for any $i \in [\ell]$,
it follows that $f$ is \mfield{m}{\fqnstar} \ifa 
$\phi(x, \ell)$ is \mset{m}{[\ell]} and $s(\ell \bmod{m}) < m$
by \cref{Lpiecesmto1_d=1}. 
Because $q^n \equiv 1 \pmod{\ell^2}$, 
we have $ \ell \mid s$, and so 
$\log_{\xi} \omega^{i(\ell - 1)} = i(\ell - 1)s \equiv 0 \pmod{\ell}$. 
Hence $\phi(x, \ell)$ is \mset{m}{[\ell]} \ifa 
$(x q^x) \bmod{\ell}$ is \mset{m}{[\ell]}.
\end{proof}

\cref{Lpiecesmto1_d=1_ex} generalizes
\cite[Propositions~2.9]{Wang-cyc} where $m = 1$.

\section{Realization of constants by polynomials}
\label{sec_Realization}

In this section,
we will use polynomials of specific forms to represent the constants $a_{0}$, $a_{1}$, \ldots, $a_{\ell-1}$ in the generalized cyclotomic mappings.
That is, each $a_{i}$ can be realized by the polynomial of the form $h_i(x^s)$
since $h_i(x^s) = h_i(\omega^i)$ if $x \in C_i$,
where $\omega$ is a primitive 
$\ell$-th root of unity over $\fq$. 
Then substituting $a_i$ by $h_i(x^s)$ 
in \cref{poly_aixri} yields 
\begin{equation}\label{eq:f=xrihixs}
  f(x) = \frac{1}{\ell}
    \sum_{i=0}^{\ell-1} x^{r_i} h_i(x^s)
    \sum_{j=0}^{\ell-1} \omega^{- i j} x^{j s}.
\end{equation}
We next present some \mtoone mappings of this form
by the results in the previous sections. 

\subsection{ Generalized cyclotomic mappings}

In general, the discrete logarithm $\log_{\xi} a_i$ 
in $\phi(i, n)$ is hard to compute when $q$ is very large. 
We first employ \cref{aixrimto1_2pieces} to characterize 
two classes of \mtoone mappings without involving discrete logarithms.
Similar results on the \onetoone mappings can be found 
in \cite{Wang-cyc}.

\begin{corollary}\label{aixrimto1_gi(x)}
Let $q$ be odd, $s = (q-1)/2$, and $g_0, g_1 \in \fqx$ with $g_0(1) g_1(-1) \ne 0$.
Let 
\[
  f(x) = x^{r_0} g_0(x^s)^{2 d_0} (1 + x^s) + x^{r_1} g_1(x^s)^{2 d_1} (1 - x^s),
\]
where $r_i \in \n$ and $d_i = (r_i, s)$ with $i \in \{0, 1\}$.
Then for $1 \le m < q$, 
$f$ is \mfqstar \ifa one of the following holds:
\begin{enumerate}[\upshape(1)]
  \item $m = d_0 = d_1$ and $r_1$ is odd;
  \item $m = d_0 + d_1$, $d \mid m$,
        $r_1$ is even,
        and $s (m - 2d)/(m - d) < m$, 
        where $d = \min \{d_0, d_1\}$. 
\end{enumerate} 
\end{corollary}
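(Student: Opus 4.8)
The plan is to reduce \cref{aixrimto1_gi(x)} to \cref{aixrimto1_2pieces} by identifying the constants $a_i$ hidden inside the polynomial coefficients. For $x \in C_0$ we have $x^s = 1$, so $f(x) = x^{r_0} g_0(1)^{2d_0}(1+1) = 2 g_0(1)^{2d_0} x^{r_0}$; similarly for $x \in C_1$, $x^s = -1$ and $f(x) = -2 g_1(-1)^{2d_1}(-1) \cdot \text{(sign bookkeeping)}$, i.e.\ $f(x)$ is a nonzero constant times $x^{r_1}$. The hypothesis $g_0(1) g_1(-1) \ne 0$ is exactly what guarantees these constants lie in $\fqstar$, so $f$ has the shape of \cref{eq_2pieces} with $a_0 = 2 g_0(1)^{2 d_0}$ and $a_1 = \pm 2 g_1(-1)^{2 d_1}$ (the sign to be computed carefully from the $(1-x^s)$ factor). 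Then \cref{aixrimto1_2pieces} applies verbatim with these specific $a_0, a_1$, and the only work left is to translate its two conditions — which are phrased via $\phi(0,2m)$, $\phi(1,2m)$, $\phi(0,2d)$, $\phi(1,2d)$ — into the clean parity statements ``$r_1$ odd'' and ``$r_1$ even''.

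First I would record the value of $f$ on each coset explicitly, fixing the exact constant $a_1$ including its sign, and observe $a_0, a_1 \in \fqstar$. Next I would compute $a_0^s$ and $a_1^s$. Since $s = (q-1)/2$, every nonzero square in $\fqstar$ has $s$-th power equal to $1$; because $a_0 = 2 g_0(1)^{2d_0}$ contains $2 = 2 \cdot 1^2$ and an even power $g_0(1)^{2d_0}$, we get $a_0^s = 2^s$, and likewise $a_1^s = (\pm 2)^s = (\pm 1)^s 2^s$. The key point is that the factor $g_i(\cdot)^{2d_i}$ is a perfect square, so it contributes nothing to the $s$-th power — this is the whole reason the discrete logarithm disappears. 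Consequently $(a_1/a_0)^s = (\pm 1)^s$, a quantity depending only on $q \bmod 4$ and the sign, and in particular $(a_1/a_0)^s \in \{1, -1\}$ is fully determined without any logarithm.

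Then I would invoke the two remarks already made right after \cref{aixri2to1_2pieces} in the paper: namely that $(a_1/a_0)^s = (-1)^{r_1}$ holds \ifa $\phi(0,2) = \phi(1,2)$, and more generally the conditions $\phi(0,2m) \ne \phi(1,2m)$ and $\phi(0,2d) = \phi(1,2d)$ reduce, when $(a_1/a_0)^s$ is a fixed root of unity, to a parity condition on $r_1$. Concretely, in case (1) we need $m = d_0 = d_1$ and $\phi(0,2m) \ne \phi(1,2m)$; I would show that with $a_0, a_1$ as above this inequality is equivalent to $r_1$ being odd. In case (2) we need $m = d_0+d_1$, $d \mid m$, $\phi(0,2d) = \phi(1,2d)$, and $s(m-2d)/(m-d) < m$; here $\phi(0,2d) = \phi(1,2d)$ becomes $r_1$ even. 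The sign bookkeeping from the $(1 - x^s)$ term and the dependence on $q \bmod 4$ must be handled so that the final criterion is genuinely just a parity condition on $r_1$ — this is the one place where a sign error would break the statement.

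The main obstacle I anticipate is precisely this sign/residue bookkeeping: one must verify that the contribution of $q \bmod 4$ to $(a_1/a_0)^s$ and the contribution of the sign coming from $(1-x^s)|_{C_1} = 2$ versus the way it was written conspire so that the congruence $i r_i + \log_\xi a_i$ modulo $2$ (and modulo $2m$, $2d$) collapses to the parity of $r_1$ alone, independently of $q$. Once that identification is pinned down, everything else is a direct citation of \cref{aixrimto1_2pieces}, and I would close by noting that \cref{fq=fqstar} then upgrades the $\fqstar$ statement to $\fq$ if desired, though the corollary as stated is already on $\fqstar$.
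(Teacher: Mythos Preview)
Your overall strategy --- identify the hidden constants $a_i$ and invoke \cref{aixrimto1_2pieces} --- is exactly the paper's, but there is a genuine gap in the step where you pass from the value of $(a_1/a_0)^s$ to the conditions on $\phi(i,2m)$ and $\phi(i,2d)$.

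First, the bookkeeping is simpler than you fear: on $C_1$ one has $x^s=-1$, so $(1-x^s)=2$ with no sign, and the factor~$2$ is already absorbed in the normalisation $F=(1/2)f$ used in the proof of \cref{aixrimto1_2pieces}. The correct identification is simply $a_i=g_i((-1)^i)^{2d_i}$; there is no extra~$2$ and no $\pm$ on $a_1$, so no dependence on $q\bmod 4$ enters at all.

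The substantive issue is that computing $a_i^s$ only uses that $a_i$ is a \emph{square}; this gives $\log_\xi a_i\equiv 0\pmod 2$, information modulo~$2$ only. But the conditions in \cref{aixrimto1_2pieces} live modulo $2m$ (case~(1)) and modulo $2d$ (case~(2)), and knowing $(a_1/a_0)^s$ alone cannot determine $\phi(0,2m)-\phi(1,2m)$ when $m>1$. The paper's key observation is sharper: the exponent on $g_i$ is $2d_i$, not merely even, so each $a_i$ is a $(2d_i)$-th power and
\[
  a_i^{\,s/d_i}=g_i((-1)^i)^{2s}=g_i((-1)^i)^{q-1}=1,
\]
giving $\log_\xi a_i\equiv 0\pmod{2d_i}$. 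In either case of \cref{aixrimto1_2pieces} the relevant modulus ($2m$ or $2d$) divides both $2d_0$ and $2d_1$ (using $m=d_0=d_1$ in case~(1), and $d\mid m=d_0+d_1$ whence $d_0\mid d_1$ or $d_1\mid d_0$ in case~(2)), so both logarithms vanish at that modulus and the condition collapses to whether $2d\mid r_1$, which the paper then reads as the parity of $r_1$. Your sentence ``more generally the conditions \dots\ reduce \dots\ to a parity condition on $r_1$'' is precisely this missing step, and it cannot be completed using only the $s$-th power --- you must raise to the $(s/d_i)$-th power to exploit the full exponent $2d_i$.
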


\begin{proof}
Assume $h_i(x) = g_i(x)^{2 d_i}$ with $i \in \{0, 1\}$.
Since
\[
   h_i(x^s)^{s/d_i} = g_i((-1)^i)^{2 d_i \cdot s/d_i}
                    = g_i((-1)^i)^{q-1}
                    = 1
\]
for $x \in C_i$, we have $\log_{\xi} h_i((-1)^i) \equiv 0 \pmod{2 d_i}$.
If $d_0 \mid d_1$, then $\log_{\xi} h_0(1) \equiv \log_{\xi} h_1(-1) \pmod{2 d_0}$.
Thus $\phi(0, 2d_0) \ne \phi(1, 2d_0)$ \ifa $2 d_0 \nmid r_1$, namely, $r_1$ is odd.
Otherwise, $\phi(0, 2d_0) = \phi(1, 2d_0)$ \ifa $r_1$ is even.
Then the result follows from \cref{aixrimto1_2pieces}.
\end{proof}

Applying \cref{aixrimto1_gi(x)} to $r_0 = r_1$ 
yields the following result.

\begin{example}
Let $q$ be odd, $s = (q-1)/2$, and $g_0, g_1 \in \fqx$ with $g_0(1) g_1(-1) \ne 0$.
Let 
\[
 h(x) = g_0(x)^{2 d} (1 + x) + g_1(x)^{2 d} (1 - x),   
\]
where $r \in \n$ and $d = (r, s)$.
Then for $1 \le m < q$, 
$f(x) = x^r h(x^s)$ is \mfqstar \ifa 
(1) $m = d$ and $r$ is odd, or
(2) $m = 2d$ and $r$ is even.      
\end{example}


\begin{corollary}\label{aixrimto1_hi(x)}
Let $q$ be odd and $s = (q^n - 1)/2$, where $n \in \n$.
Let $r_i \in \n$, $d_i = (r_i, s)$, and $h_i \in \fqnx$ with $h_i((-1)^i) \in \fqstar$,
where $i \in \{0, 1\}$.
Suppose $2d \mid (q^n - 1)/(q - 1)$, where $d = \min \{d_0, d_1\}$. 
Then for $1 \le m < q^n$, 
\[
  f(x) = x^{r_0} h_0(x^s) (1 + x^s) + x^{r_1} h_1(x^s) (1 - x^s)
\]
is \mfield{m}{\fqnstar} \ifa one of the following holds:
\begin{enumerate}[\upshape(1)]
  \item $m = d_0 = d_1$ and $r_1$ is odd;
  \item $m = d_0 + d_1$, $d \mid m$,
        $r_1$ is even,
        and $s (m - 2d)/(m - d) < m$.
\end{enumerate} 
\end{corollary}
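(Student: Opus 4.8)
The plan is to recognise $f$ as a two-branch generalized cyclotomic mapping over $\fqn$ and then invoke \cref{aixrimto1_2pieces} with $q^n$ in place of $q$. Fix a primitive element $\xi$ of $\fqn$; since $\ell=2$ we have $\omega=\xi^{s}=-1$, so for $x\in C_i$ we get $x^{s}=(-1)^i$ and hence $f(x)=a_i x^{r_i}$ on $C_i$, where $a_i=h_i((-1)^i)$. By hypothesis $a_i\in\fqstar$, so the $a_i$ are nonzero constants and $f$ has exactly the shape of \cref{eq_2pieces}. Thus $f$ is \mfield{m}{\fqnstar} \ifa one of the two conditions of \cref{aixrimto1_2pieces} holds, and it remains only to rewrite those conditions.

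The key step is to pin down the discrete logarithms $\log_{\xi}a_i$. In \cref{aixrimto1_gi(x)} the constants were prescribed $2d_i$-th powers; here instead we use that each $a_i$ lies in the subfield $\fq$. Since $a_i\in\fqstar$ has multiplicative order dividing $q-1$, it lies in the unique subgroup of $\fqnstar$ of order $q-1$, namely $\langle\xi^{(q^n-1)/(q-1)}\rangle$, so $(q^n-1)/(q-1)\mid\log_{\xi}a_i$. The hypothesis $2d\mid(q^n-1)/(q-1)$ then gives $\log_{\xi}a_i\equiv0\pmod{2d}$ for $i\in\{0,1\}$, where $d=\min\{d_0,d_1\}$. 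Hence $\phi(0,2d)=0$ and $\phi(1,2d)=r_1\bmod 2d$, so $\phi(0,2d)=\phi(1,2d)$ \ifa $2d\mid r_1$.

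It remains to translate. For condition~(1) of \cref{aixrimto1_2pieces} one has $m=d_0=d_1=d$, so $2m=2d$ and the requirement $\phi(0,2m)\ne\phi(1,2m)$ becomes $2d\nmid r_1$; since $d=(r_1,s)$ divides $r_1$ and $(r_1/d,s/d)=1$, this is exactly the assertion that $r_1$ is odd, just as in the proof of \cref{aixrimto1_gi(x)}. For condition~(2), with $m=d_0+d_1$, the condition $\phi(0,2d)=\phi(1,2d)$ becomes $2d\mid r_1$, i.e.\ $r_1$ is even, by the same divisibility bookkeeping, while $d\mid m$ and $s(m-2d)/(m-d)<m$ carry over unchanged. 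Putting the two cases together, and reading \cref{aixrimto1_2pieces} backwards for the converse, yields the statement. The only point that is not routine is the second paragraph: one must see that membership in $\fqstar$ is the right replacement for the perfect-power structure used in \cref{aixrimto1_gi(x)}, and that the hypothesis $2d\mid(q^n-1)/(q-1)$ is precisely what lowers the congruence for $\log_{\xi}a_i$ from modulus $(q^n-1)/(q-1)$ down to modulus $2d$, after which the parity translations are elementary.
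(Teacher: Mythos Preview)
Your proof is correct and follows exactly the paper's route: reduce to \cref{aixrimto1_2pieces} over $\fqn$ by showing $\log_{\xi} h_i((-1)^i)\equiv 0\pmod{2d}$, then translate the $\phi$-conditions into parity conditions on $r_1$ as in the proof of \cref{aixrimto1_gi(x)}. The only cosmetic difference is that the paper obtains the key congruence by the direct computation $h_i((-1)^i)^{s/d}=\big(h_i((-1)^i)^{q-1}\big)^{(q^n-1)/(2d(q-1))}=1$, whereas you phrase the same fact via the subgroup $\langle\xi^{(q^n-1)/(q-1)}\rangle\supseteq\fqstar$; these are equivalent.
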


\begin{proof}
The result is a special case of \cref{aixrimto1_2pieces},
since $\log_{\xi} h_i((-1)^i) \equiv 0 \pmod{2 d}$ follows from
\[
    h_i((-1)^i)^{s/d} 
    = h_i((-1)^i)^{(q-1) \cdot (q^n - 1)/(2d(q - 1))}
    = 1.  \qedhere
\]
\end{proof}

For $h_k(x) = 1 + x + \cdots + x^{k-1}$, 
it is well known that $h_k(1) \ne 0$ \ifa $p \nmid k$,
and $h_k(-1) \ne 0$ \ifa $k$ is odd, 
where $p$ is the characteristic of $\fqn$.
Applying \cref{aixrimto1_hi(x)} to $h_i(x) = h_{k_i}(x^{e_i})^{t_i}$
and $r_0 = r_1$ yields the following result.

\begin{example}
Let $q$ be odd, $s = (q^n - 1)/2$, and $d = (r, s)$, where $n, r \in \n$.
Let $2d \mid (q-1, n)$, $p \nmid k_0$, and $k_1$ be odd,
where $p$ is the characteristic of $\fqn$.
Let $h_{k_i}(x) = 1 + x + \cdots + x^{k_i-1}$ and 
\[
  f(x) = x^{r} \big(h_{k_0}(x^{e_0})^{t_0} (1 + x) 
        + h_{k_1}(x^{e_1})^{t_1} (1 - x)\big) \circ x^s,
\]
where $e_i$ is odd and $t_i \in \n$ with $i \in \{0, 1\}$.
Then for $1 \le m < q^n$, $f$ is \mfield{m}{\fqnstar} \ifa 
(1) $m = d$ and $r$ is odd, or
(2) $m = 2d$ and $r$ is even.      
\end{example}

\subsection{Cyclotomic mappings} 

We next consider the special case
$r_0 = \cdots = r_{\ell - 1}$ and $h_0 = \cdots = h_{\ell - 1}$.
In this case, $f$ in \cref{eq:f=xrihixs} is reduced to 
the form $f(x) = x^r h(x^s)$.
Then we get the main theorem of \cite{Zhengmto1} by \cref{Lpiecesmto1}.

\begin{theorem}[{\cite[Thereom~4.3]{Zhengmto1}}]\label{mto1_xrh(xs)}
Let $q - 1 = \ell s$ and $d = (r, s)$, where $\ell, r, s \in \n$.
Let $f(x) = x^r h(x^s)$ and $g(x) = x^{r_1} h(x)^{s_1}$, 
where $r_1 = r / d$, $s_1 = s / d$, 
and $h \in \fqx$ has no roots in 
$U_\ell \coloneq \{\alpha \in \fqstar : \alpha^\ell = 1\}$.
Then for $1 \le m < q$, $f$ is \mfqstar \ifa $d \mid m$,
$g$ is \mfield{(m/d)}{U_\ell}, and $s (\ell \bmod{(m/d)}) < m$.
\end{theorem}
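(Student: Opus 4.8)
The plan is to recognize $f(x)=x^{r}h(x^{s})$ as a generalized cyclotomic mapping of the form \cref{map_aixri} with all exponents $r_{i}$ equal, to apply \cref{Lpiecesmto1}, and then to translate its conclusion about the integer function $\phi(x,\ell d)$ on $[\ell]$ into the asserted condition on $g$ restricted to $U_{\ell}$. Concretely, set $\omega=\xi^{s}$, which is a primitive $\ell$-th root of unity (it has order $(q-1)/(q-1,s)=\ell$), so that $U_{\ell}=\{\omega^{0},\omega^{1},\dots,\omega^{\ell-1}\}$. For $x\in C_{i}$ one has $x^{s}=\omega^{i}$, hence $f(x)=a_{i}x^{r}$ with $a_{i}\coloneq h(\omega^{i})$; since $h$ has no root in $U_{\ell}$, each $a_{i}\in\fqstar$. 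Thus $f$ is exactly the mapping \cref{map_aixri} with $r_{0}=\dots=r_{\ell-1}=r$ and these $a_{i}$, and $(r_{i},s)=(r,s)=d$ for every $i$. By \cref{Lpiecesmto1}, $f$ is \mfqstar \ifa $d\mid m$, $\phi(x,\ell d)$ is \mset{(m/d)}{[\ell]}, and $s(\ell\bmod(m/d))<m$; so it remains only to prove that $\phi(x,\ell d)$ is \mset{(m/d)}{[\ell]} \ifa $g$ is \mfield{(m/d)}{U_{\ell}}.

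The key step is to express the discrete logarithm of $g(\omega^{i})$ through $\phi(i,\ell d)$. Write $r=dr_{1}$, $s=ds_{1}$, and note $q-1=\ell s=\ell d s_{1}$ and $\omega^{i}=\xi^{is}$. Then $g(\omega^{i})=(\omega^{i})^{r_{1}}h(\omega^{i})^{s_{1}}=\xi^{\,isr_{1}+s_{1}\log_{\xi}h(\omega^{i})}$, and since $isr_{1}=s_{1}(ir)$ we obtain
\[
\log_{\xi}g(\omega^{i})=\big(s_{1}(ir)+s_{1}\log_{\xi}h(\omega^{i})\big)\bmod(\ell d s_{1})=s_{1}\big((ir+\log_{\xi}h(\omega^{i}))\bmod\ell d\big)=s_{1}\,\phi(i,\ell d),
\]
using $(s_{1}a)\bmod(s_{1}n)=s_{1}(a\bmod n)$ and the definition $\phi(i,\ell d)=(ir_{i}+\log_{\xi}a_{i})\bmod\ell d$ with $r_{i}=r$, $a_{i}=h(\omega^{i})$. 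Since $\phi(i,\ell d)\in\{0,1,\dots,\ell d-1\}$, all the integers $s_{1}\phi(i,\ell d)$ lie in $\{0,1,\dots,q-2\}$, so $g(\omega^{i})=g(\omega^{j})$ \ifa $\phi(i,\ell d)=\phi(j,\ell d)$. Hence, under the bijection $i\mapsto\omega^{i}$ from $[\ell]$ onto $U_{\ell}$, the maps $\phi(\cdot,\ell d)$ on $[\ell]$ and $g$ on $U_{\ell}$ induce identical partitions into fibers (and $g$ takes only nonzero values on $U_{\ell}$). Because the many-to-one property of \cref{defn:mto1} depends only on $\#A$ together with the multiset of fiber sizes, and $\#[\ell]=\#U_{\ell}=\ell$, it follows that $g$ is \mfield{(m/d)}{U_{\ell}} precisely when $\phi(x,\ell d)$ is \mset{(m/d)}{[\ell]}. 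Combining this with the application of \cref{Lpiecesmto1} proves the theorem, both directions simultaneously since every step is an equivalence.

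I expect the only genuine difficulty to be the bookkeeping: establishing $\log_{\xi}g(\omega^{i})=s_{1}\phi(i,\ell d)$ as an exact integer identity (rather than merely a congruence) by choosing the moduli carefully, and verifying that the many-to-one property really does transport across the index bijection $[\ell]\leftrightarrow U_{\ell}$, which is immediate once the two fiber partitions are shown to coincide. Everything else is a direct instance of \cref{Lpiecesmto1}.
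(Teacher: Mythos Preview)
Your proposal is correct and follows essentially the same route as the paper: both recognize $f$ as a generalized cyclotomic mapping with all $r_i=r$ and $a_i=h(\omega^i)$, invoke \cref{Lpiecesmto1}, and then identify the $(m/d)$-to-$1$ property of $\phi(\cdot,\ell d)$ on $[\ell]$ with that of $g$ on $U_\ell$ via the bijection $i\mapsto\omega^i$. The paper packages the last step as a chain of equivalent statements (a)--(e), while you compute $\log_\xi g(\omega^i)=s_1\phi(i,\ell d)$ directly and compare fiber partitions; these are the same argument in slightly different dress.
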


\begin{proof}
For the reason of completeness, we give a different proof.
For any $x \in C_i$, we have $x^s = \omega^i$, 
and so $f(x) = h(\omega^i) x^r$, where $\omega = \xi^s$. 
Note that $U_\ell = \{\omega^i : i \in [\ell]\}$, 
where $[\ell] = \{0, 1, \ldots, \ell-1\}$.
Since $h$ has no roots in $U_\ell$,
we have $h(\omega^i) \in \fqstar$ for any $i \in [\ell]$.
By \cref{Lpiecesmto1}, $f$ is \mfqstar \ifa 
$d \mid m$, $\phi(x, \ell d)$ is \mset{(m/d)}{[\ell]},
and $s(\ell \bmod{(m/d)}) < m$, where
$\phi(x, \ell d) = (r x + \log_{\xi} h(\omega^x)) \bmod{(\ell d)}$. 
Since $\omega = \xi^s$, we have $\omega^{r_1} = \xi^{s r_1} = \xi^{r s_1}$,
and so the following statements are equivalent:
\begin{enumerate}[(a)]
\item $\phi(x, \ell d)$ is \mset{(m/d)}{[\ell]};
\item $s_1(r x + \log_{\xi} h(\omega^x)) \bmod {(q-1)}$ is \mset{(m/d)}{[\ell]};
\item $(\xi^x)^{r s_1} h(\omega^x)^{s_1}$ is \mset{(m/d)}{[\ell]}; 
\item $(\omega^x)^{r_1} h(\omega^x)^{s_1}$ is \mset{(m/d)}{[\ell]};
\item $x^{r_1} h(x)^{s_1}$ is \mfield{(m/d)}{U_\ell}. \qedhere
\end{enumerate}
\end{proof}

Applying \cref{mto1_xrh(xs)} to $\ell = q+1$, 
$s = q-1$, and $d = 1$ yields the next result. 

\begin{corollary}\label{xrh(xs):fq2}
Let $f(x) = x^r h(x^{q-1})$ and $g(x) = x^{r} h(x)^{q-1}$, 
where $(r, q-1) = 1$ and $h \in \fqtwox$ has no roots in $U_{q+1}$.
Then for $1 \le m \le q+1$, $f$ is \mfield{m}{\fqtwostar} \ifa
$g$ is \mfield{m}{U_{q+1}} and $(q-1) (q+1 \bmod{m}) < m$.
\end{corollary}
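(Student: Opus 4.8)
The plan is to obtain \cref{xrh(xs):fq2} directly from \cref{mto1_xrh(xs)} by a change of parameters. First I would apply \cref{mto1_xrh(xs)} with the base field $\fq$ there replaced throughout by $\fqtwo$, so that the factorization $q-1 = \ell s$ of that theorem reads $q^2-1 = \ell s$, and I would take $\ell = q+1$ and $s = q-1$; these are positive integers with $\ell s = q^2-1$, as required. Since the corollary assumes $(r, q-1) = 1$, we have $d = (r, s) = 1$, hence $r_1 = r/d = r$ and $s_1 = s/d = q-1$; consequently the auxiliary polynomial $g(x) = x^{r_1} h(x)^{s_1}$ of \cref{mto1_xrh(xs)} is exactly $x^r h(x)^{q-1}$, matching the statement. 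Likewise $U_\ell = \{\alpha \in \fqtwostar : \alpha^{q+1} = 1\} = U_{q+1}$, and the hypothesis that $h \in \fqtwox$ has no roots in $U_{q+1}$ is precisely what is needed to invoke the theorem.

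Next I would translate the three conditions of \cref{mto1_xrh(xs)} under this specialization. The divisibility $d \mid m$ becomes $1 \mid m$ and is vacuous; the condition that $g$ be $(m/d)$-to-$1$ on $U_\ell$ becomes that $g$ be \mset{m}{U_{q+1}}; and the inequality $s(\ell \bmod{(m/d)}) < m$ becomes $(q-1)\bigl((q+1) \bmod{m}\bigr) < m$. Putting these together gives the asserted equivalence for all $1 \le m < q^2$, and in particular on the range $1 \le m \le q+1$ stated in the corollary. Restricting to $m \le q+1 = \#U_{q+1}$ is the natural choice, since for $q+1 < m < q^2$ the condition ``$g$ is \mset{m}{U_{q+1}}'' cannot hold, whence $f$ is never \mfqtwostar; one may record this as a remark, but it is not needed for the equivalence as stated.

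I do not expect a genuine obstacle here: the argument is essentially bookkeeping in substituting $q \mapsto q^2$ consistently, including in the hypothesis range $1 \le m < q$ of \cref{mto1_xrh(xs)}, which becomes $1 \le m < q^2$. The only point that warrants a moment of care is verifying that $\#U_{q+1} = q+1$, so that the upper bound on $m$ in the corollary matches the size of the set on which $g$ acts, making the restriction $1 \le m \le q+1$ both natural and lossless; everything else follows by unwinding definitions.
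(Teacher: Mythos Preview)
Your proposal is correct and matches the paper's approach exactly: the corollary is obtained by applying \cref{mto1_xrh(xs)} with the ambient field $\fqtwo$, taking $\ell = q+1$, $s = q-1$, and noting $(r,q-1)=1$ forces $d=1$. Your additional remarks about the range $1 \le m \le q+1$ are sound but go slightly beyond what the paper records.
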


This result gives us a recipe to construct \mtoone mappings on $\fqtwostar$ 
from \mtoone mappings on its subgroup $U_{q+1}$ under suitable conditions.  

Very recently, Hou and Lavorante \cite{Hou232193}  
presented a general construction that can
generate all permutation polynomials
of the form $x^r h(x^{q-1})$ over $\fqtwo$,
where $x^r h(x)^{q-1}$ induces monomial functions 
on the cosets of a subgroup of $U_{q+1}$. 
In the earlier sections, $f(x)$ in \cref{map_aixri} also induces
monomial functions on the cosets of a subgroup of $\fqstar$.
Thus, replacing $\fqstar$ by $U_{q+1}$ in the results of the previous sections, 
we can obtain the corresponding \mtoone mappings on $U_{q+1}$,
and then determine the \mtoone property of $x^r h(x^{q-1})$ on $\fqtwostar$ by \cref{xrh(xs):fq2}. 
We will review some results of \cite{Hou232193} in \cref{sec_pp},
and present the \mtoone versions of these results 
in \cref{sec_g_2,sec_g_m}.


\subsubsection{Permutations of Hou and Lavorante}
\label{sec_pp}

Let $\zeta$ be a generator of $U_{q+1}$ 
and $q+1 = \ell t$ for some $\ell$, $t \in \n$. 
Let $A_0 =  \{\zeta^{j \ell} : 0 \le j < t\}$ 
be the set of all $t$-th roots of unity in $U_{q+1}$.
The elements of the factor group $U_{q+1}/A_0$ 
are the cosets
\begin{equation}\label{eq:Ai}
    A_i = \zeta^{i} A_0 = \{\zeta^{j \ell + i} : 0 \le j < t\}, 
    \quad  0 \le i \le \ell - 1,
\end{equation}
which form a partition of $U_{q+1}$.
It was first observed in \cite{wangindex19} that permutation polynomials of the form 
$x+\text{Tr}_{q^2/q}(x^{(q+1)^2/4})$ over $\fqtwo$
can be induced by a generalized cyclotomic mapping $g$ of index $2$ over $U_{q+1}$. That is, $g$ behaves as a monomial function on each $A_0$ and $A_1$. 
An algorithm that produces all permutations 
of $\fqtwo$ of the form $x^r h(x^{q-1}) \in \fqtwox$
such that $x^rh(x)^{q-1}$ induces a monomial function 
on each $A_i$ in \cref{eq:Ai} 
was given by Hou and Lavorante \cite{Hou232193},
which unifies and generalizes some results in \cite{KLiQCL18,DZheng19,CaoHMX20,Lav2022,Qin2334}.
In particular, the expressions of all binomial and 
trinomial~$h(x)$ with such property were determined explicitly; 
see the following lemmas.


\begin{lemma}[\cite{Hou232193}]\label{lem_ppbi}
Let $q+1 = \ell t$ for some $\ell, t \in \n$.
Let $f(x) = x^r h(x^{q-1})$ permute $\fqtwo$, 
where $r \in \n$ and $h \in \fqtwox$ has no roots in $U_{q+1}$.
If $h$ is a binomial and $h^{q-1}$ induces a monomial 
function on $A_i$ for any $0 \le i \le \ell -1$,
then $f$ can only be one of the following forms:
\begin{enumerate}[\upshape(1)]
  \item \label{item_ppbi1} $f(x) = x^r(1 + a x^{v (q^2-1)/\ell})$, 
    where $a \in \fqtwostar$ and $0 \le v \le \ell-1$;
  \item \label{item_ppbi2} $f(x) = x^r(1 + a x^{(u + v t)(q-1)})$,
    where $a \in U_{q+1}$, $0 \le u \le t-1$, and $0 \le v \le \ell-1$;
  \item \label{item_ppbi3} $f(x) = x^r(1 + a x^{(1 + 2v)(q^2-1)/(2\ell)})$,
    where $a \in \fqtwostar$ and $0 \le v \le \ell-1$.
\end{enumerate} 
\end{lemma}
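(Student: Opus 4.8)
The plan is to push the whole question into the cyclic group $U_{q+1}$, normalise the binomial $h$, and then determine the exponent of $h$ by a short case analysis. The permutation hypothesis plays almost no role in fixing the \emph{shape} of $f$: that shape is forced by the two facts that $h$ is a binomial and that $h^{q-1}$ induces a monomial function on each coset $A_i$ of \cref{eq:Ai}; the permutation hypothesis only enters through $(r,q-1)=1$, which comes from the standard reduction of $x^{r}h(x^{q-1})$ to $U_{q+1}$ underlying \cref{xrh(xs):fq2}.

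I would first normalise $h$. Factoring out the lowest power of $x$ from the binomial only shifts $r$ by a multiple of $q-1$ and multiplies $u\mapsto h(u)^{q-1}$ on $U_{q+1}$ by a monomial in $u$, and dividing out the leading coefficient only rescales $f$; all hypotheses are preserved. So I may assume $h(x)=1+ax^{d}$ with $a\in\fqtwostar$, and, reducing $d$ modulo $q+1$, with $0\le d\le q$; when $d=0$ the map $f$ is $x^{r}$ times a constant, which is form~(1) with $v=0$, so from now on $1\le d\le q$. Then for $u\in U_{q+1}$ one has $u^{q}=u^{-1}$, hence $h(u)^{q}=1+a^{q}u^{-d}$ and
\[
  h(u)^{q-1}=\frac{u^{d}+a^{q}}{u^{d}\,(1+au^{d})}.
\]

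Next I would make the coset condition explicit. On $A_i=\zeta^{i}A_0$ write $u=\zeta^{i}w$ with $w\in A_0$ and set $\psi=w^{d}$; as $w$ runs over $A_0$ the element $\psi$ runs $e$-to-$1$ over the $(t/e)$-th roots of unity in $U_{q+1}$, where $e=(d,t)$. Since $h(u)^{q-1}$ depends on $w$ only through $\psi$, requiring it to equal a monomial $c_iw^{k_i}$ on $A_i$ forces $e\mid k_i$, and, using $(d/e,t/e)=1$, reduces to the existence of $c_i\in\fqtwostar$ and an integer $\ell_i$ with
\[
  \zeta^{id}\psi+a^{q}=c_i\zeta^{id}\psi^{\ell_i+1}+c_ia\zeta^{2id}\psi^{\ell_i+2}
  \qquad\text{for all }\psi\text{ with }\psi^{t/e}=1 .
\]
Thus the four monomials occurring here must cancel in pairs modulo $t/e$, which turns the problem into a small bookkeeping of how the exponents $\{0,1\}$ can be matched against $\{\ell_i+1,\ell_i+2\}$ modulo $t/e$.

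The case analysis is the technical core. If $t\mid d$, then $h(u)^{q-1}$ is constant on every $A_i$, no condition arises, and $d=vt$ with $0\le v\le\ell-1$ (since $1\le d\le q<\ell t$), so $f(x)=x^{r}(1+ax^{v(q^{2}-1)/\ell})$ is of form~(1). If $a\in U_{q+1}$, then $a^{q}=a^{-1}$ collapses the fraction to $h(u)^{q-1}=a^{-1}u^{-d}$, so the hypothesis holds automatically, and writing $d=u_0+vt$ with $0\le u_0\le t-1$ and $0\le v\le\ell-1$ gives form~(2). In the remaining case $a\notin U_{q+1}$ and $t\nmid d$ (so $t/e\ge2$): on $A_0$ the two ``positive'' monomials sit at exponents $1$ and $0$, distinct modulo $t/e$, so the two ``negative'' monomials, at $\ell_i+1$ and $\ell_i+2$, must occupy exactly $\{0,1\}$; the option $(\ell_i+1,\ell_i+2)\equiv(0,1)$ forces $a^{q+1}=1$ and is excluded, while $(\ell_i+1,\ell_i+2)\equiv(1,0)$ forces $t/e\mid2$, hence $t/e=2$, and comparing coefficients then yields $c_0=1$ and $a^{q}=a$, i.e.\ $a\in\fqstar\setminus\{\pm1\}$. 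Feeding $a\in\fq$ back into the condition on the cosets $A_i$ with $i\ne0$ (where only $\psi=\pm1$ can now occur), a brief cross-multiplication shows it is equivalent to $\zeta^{2id}=1$ for every $i$; taking $i=1$ together with $1\le d\le q$ forces $q$ odd, $\ell$ odd, and $d=(q+1)/2$, whence $x^{d(q-1)}=x^{(q^{2}-1)/2}=x^{(1+2v)(q^{2}-1)/(2\ell)}$ with $v=(\ell-1)/2$, i.e.\ form~(3). I expect this last case to be the main obstacle: excluding $t/e\ge3$ cleanly, doing the off-diagonal coset check for $i\ne0$ without drowning in cross-multiplications, and matching the resulting exponent data back to the three normal forms through the repeated reductions modulo $q+1$; the normalisation, the formula for $h(u)^{q-1}$, and the $t\mid d$ and $a\in U_{q+1}$ cases are all routine.
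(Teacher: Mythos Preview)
The paper does not prove this lemma: it is quoted verbatim from \cite{Hou232193} with no accompanying argument, so there is no ``paper's own proof'' to compare your proposal against. Your outline is therefore an independent reconstruction of a result the authors simply import.

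As for the proposal itself, the overall strategy---normalise $h$ to $1+ax^{d}$, compute $h(u)^{q-1}$ on $U_{q+1}$, and match monomial terms on each coset---is the natural one and is close in spirit to what Hou and Lavorante do. Two points deserve care. First, in your third case you deduce $a^{q}=a$ from the $A_0$ analysis, but form~(3) in the statement allows $a\in\fqtwostar$; you should check that the constraints you derive are genuinely necessary and not artifacts of fixing $\ell_0$ on $A_0$ before looking at the other cosets. Second, your conclusion that $\zeta^{2id}=1$ for all $i$ forces the single value $d=(q+1)/2$, yet form~(3) is parametrised by $0\le v\le\ell-1$; this is not a contradiction (one representative suffices to land in the listed family after reducing modulo $q+1$), but the write-up should make explicit why every $d$ with $(d,t)=t/2$ already appears among forms~(1)--(3), or else exhibit the reduction that collapses them to your single $d$. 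These are the places where the ``brief cross-multiplication'' and ``short bookkeeping'' need to be spelled out rather than asserted.
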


\begin{lemma}[{\cite[Table~2]{Hou232193}}]\label{lem_pptri}
Let $q+1 = \ell t$ for some $\ell, t \in \n$.
Let $f(x) = x^r h(x^{q-1})$ permute $\fqtwo$, 
where $r \in \n$ and $h \in \fqtwox$ has no roots in $U_{q+1}$.
If $h$ is a trinomial and $h^{q-1}$ induces a monomial 
function on $A_i$ for any $0 \le i \le \ell -1$,
then $f$ can only be one of the following forms:
\begin{enumerate}[\upshape(1)]
  \item \label{item_pptri1} $f(x) = x^r(1 + a x^{(q^2-1)/2} + b x^{(u + v t)(q-1)})$;
  \item \label{item_pptri2} $f(x) = x^r(1 - x^{(q^2-1)/2} + a x^{(u + v t)(q-1)})$;
  \item \label{item_pptri3} $f(x) = x^r (1 - x^{k t (q-1)} + a x^{(u + v t)(q-1)})$;
  \item \label{item_pptri4} $f(x) = x^r(1 + a x^{(q^2-1)/6} - a^{-1} x^{5(q^2-1)/6})$; 
  \item \label{item_pptri5} $f(x) = x^r(1 + a x^{5(q^2-1)/6} - a^{-1} x^{(q^2-1)/6})$. 
\end{enumerate} 
Here the conditions of the parameters are described in \cite[section~4.2]{Hou232193}.
\end{lemma}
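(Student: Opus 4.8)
The plan is to transport everything to the subgroup $U_{q+1}$ and then exploit the identity $x^q=x^{-1}$ valid there. By the standard reduction (the $m=1$ case of \cref{xrh(xs):fq2}, together with \cref{fq=fqstar}), $f(x)=x^rh(x^{q-1})$ permutes $\fqtwo$ if and only if $(r,q-1)=1$ and $g(x):=x^rh(x)^{q-1}$ permutes $U_{q+1}$; since $h$ has no roots in $U_{q+1}$ this $g$ is a well-defined self-map of $U_{q+1}$. Pulling out the lowest power, write $h(x)=x^cH(x)$ with $H(0)\ne0$; because $x^{c(q-1)}=x^{-2c}$ on $U_{q+1}$, replacing $h$ by a monic normalization of $H$ changes $g$ and the coset functions only by monomial factors, so we may assume $h(x)=x^a+\alpha x^b+\beta$ with $a>b>0$ and $\beta\ne0$. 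Using $x^q=x^{-1}$ once more, on $U_{q+1}$ we get $h(x)^{q-1}=\bar h(x^{-1})/h(x)=x^{-a}\,\tilde h(x)/h(x)$, where $\bar h$ is $h$ with coefficients raised to the $q$th power and $\tilde h(x)=\beta^q x^a+\alpha^q x^{a-b}+1$. Hence $g(x)=x^{r-a}\tilde h(x)/h(x)$, and the problem reduces to: find all trinomials $h$ for which $\tilde h/h$ is a monomial function on every coset $A_i$ and $g$ permutes $U_{q+1}$.

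Next I would extract the ``monomial on each coset'' constraint. Parametrize $A_i$ by $x=\zeta^i w$, with $w$ ranging over the $t$th roots of unity; the requirement on $A_i$ says $\tilde h(\zeta^i w)=c_i w^{k_i}h(\zeta^i w)$ for all $w$ with $w^t=1$, i.e.\ $\tilde h(\zeta^i w)$ and $w^{k_i}h(\zeta^i w)$ have the same image in the group ring $\fqtwo[\z/t\z]$. Each side is the image of a trinomial, hence has at most three terms there, and equality of two $\le 3$-term elements of the group ring forces a short list of coincidences: either the two are proportional as polynomials in $w$ (up to the shift $w^{k_i}$), which---after unwinding the consistency across cosets---forces a relation like $a=2b$ together with coefficient relations such as $\beta^{q+1}=1$ and a reciprocal relation between $\alpha$ and $\beta$, putting $h(x^{q-1})$ into the mutually reciprocal pair of shapes $1+ax^{(q^2-1)/6}-a^{-1}x^{5(q^2-1)/6}$ of (4)--(5); or one side degenerates to a binomial, which forces $t$ to divide a prescribed exponent difference, equivalently one of the three terms of $f$ has exponent a multiple of $t(q-1)$ or exactly $(q^2-1)/2$, producing the shapes (1)--(3). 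The rigidifying point is that the \emph{same} $h$ must satisfy this on all cosets $i=0,\dots,\ell-1$ at once, and the cancellation pattern in the group ring depends on the residues of $a$, $b$, $i$ modulo $\ell$; intersecting the constraints over all $i$ collapses the a priori infinite family of trinomials to exactly these finitely many patterns.

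Finally, each surviving shape is fed back into $g(x)=x^{r-a}\tilde h(x)/h(x)$ and tested for permuting $U_{q+1}$. For (1)--(3) this is, via the reciprocal identity once more, a $\gcd$ condition on the exponents modulo $q+1$ (plus a direct count on $U_{q+1}$ in the few degenerate small-$t$ cases), and for (4)--(5) the map $g$ is essentially a monomial on $U_{q+1}$, so again a $\gcd$. Collecting the resulting parameter ranges gives precisely the conditions cited as \cite[Section~4.2]{Hou232193}, and assembling the five families completes the classification.

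The hard part will be the group-ring case analysis in the second step: proving that a ratio of two trinomials which restricts to a single monomial on the $t$th roots of unity---simultaneously on every coset of the index-$\ell$ subgroup---forces $h$ into the short list above. The bookkeeping is delicate because the number and positions of the surviving terms depend on arithmetic relations among $a$, $b$, $i$, $\ell$, $t$, and one must take care not to overlook sporadic solutions; this is exactly where the ``$6$'' in the exponents $(q^2-1)/6$, $5(q^2-1)/6$ of (4)--(5), and the need for a table of sub-cases rather than one clean condition, come from.
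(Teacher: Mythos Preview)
The paper does not supply its own proof of this lemma: it is quoted as a known result from \cite[Table~2]{Hou232193} and used without argument. There is therefore nothing in the present paper to compare your proposal against.

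As for the proposal itself, your outline---reduce to $g(x)=x^{r}h(x)^{q-1}$ on $U_{q+1}$, use $x^{q}=x^{-1}$ there to rewrite $h(x)^{q-1}$ as a ratio $\tilde h(x)/h(x)$ of reciprocal trinomials, and then classify when such a ratio restricts to a monomial on every coset $A_i$---is indeed the skeleton of Hou and Lavorante's approach. But, as you yourself say, the entire content of the lemma lies in the case analysis you have only gestured at: matching the (at most three) surviving terms of $\tilde h(\zeta^{i}w)$ against those of $w^{k_i}h(\zeta^{i}w)$ in $\fqtwo[\z/t\z]$, simultaneously for all $i$, and showing that the only consistent patterns are the five listed shapes. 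Your sketch asserts the outcome (in particular the appearance of the exponents $(q^{2}-1)/6$ and $5(q^{2}-1)/6$) rather than deriving it, and the claim that the ``proportional'' branch forces $a=2b$ and the specific coefficient relations of items~(4)--(5) is exactly the delicate step that needs to be written out. If you want a genuine proof you will have to carry out that bookkeeping in full, or consult \cite{Hou232193} directly.
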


Let $s = (q^2-1)/ \ell$ and $\omega = \xi^{s}$, where $\xi$ is a primitive element of $\fqtwo$.
For \cref{item_ppbi1} of \cref{lem_ppbi},  
\begin{equation*}
 f(x) = x^r(1 + a x^{s v}) 
      = (1 + a \omega^{iv}) x^r \quad \text{if $x \in C_i$, $0 \le i \le \ell -1$,}
\end{equation*}
and thus it is a special case of \cref{Lpiecesmto1}. 
Similarly, \cref{item_ppbi3} of \cref{lem_ppbi} and \cref{item_pptri4,item_pptri5} of \cref{lem_pptri} are also special cases of \cref{Lpiecesmto1}.

For the remaining polynomials in \cref{lem_ppbi,lem_pptri}, 
we will use \cref{xrh(xs):fq2} to convert the \mtoone property of $f(x) = x^r h(x^{q-1})$ on $\fqtwostar$ to the \mtoone property of $g(x) = x^r h(x)^{q-1}$ on $U_{q+1}$.
For $h \in \fqtwox$, 
if $h^{q-1}$ induces a monomial function on $A_i$ for any $0 \le i \le \ell -1$, 
then $g(x)$ also induces a monomial on $A_i$ for any $0 \le i \le \ell -1$. Assume 
\begin{equation}\label{map_lambdaixei}
  g(x) = \lambda_i x^{e_i}, \quad 
  \text{if $x \in A_i$, $0 \le i \le \ell - 1$},
\end{equation} 
where $\lambda_{i} \in U_{q+1}$ and $e_{i} \in \n$. 
When $\ell = 2$ or $\ell \ge 3$ and all $(e_i, t)$'s are equal, 
the \mtoone property of $g$ on $U_{q+1}$ can be determined
 by \cref{aixrimto1_2pieces,Lpiecesmto1}, respectively. 
In addition, we should replace $\phi(i, n)$ 
in \cref{aixrimto1_2pieces,Lpiecesmto1} by 
\[
\psi(i, n)= (i e_{i} + \log_{\zeta}\lambda_{i}) \bmod {n}.
\]

\subsubsection{The case \texorpdfstring{$g(x)$ has two branches on $U_{q+1}$}{g(x) has two branches on Uq+1}}\label{sec_g_2}

In this subsection, we consider the case~$g$ 
in \cref{map_lambdaixei} 
has two branches on $U_{q+1}$, i.e., 
\begin{equation}\label{map_lambdaixei_2}
  g(x) = 
  \begin{cases}
  \lambda_0 x^{e_0}  & \text{if $x \in A_0$,} \\
  \lambda_1 x^{e_1}  & \text{if $x \in A_1$.}
  \end{cases}   
\end{equation}

Replacing $\fqstar$ by $U_{q+1}$ and $\phi(i, n)$ by $\psi(i, n)$ 
in \cref{aixrimto1_2pieces},
we can characterize the \mtoone property of~$g$ on $U_{q+1}$,
and then we obtain the next result by \cref{xrh(xs):fq2}.

\begin{theorem}\label{2piecesmto1_Fq2}
Let $q$ be odd and $t = (q+1)/2$.
Let $f(x) = x^r h(x^{q-1})$ and $g(x) = x^{r} h(x)^{q-1}$, 
where $(r, q-1) = 1$ and $h \in \fqtwox$ has no roots in $U_{q+1}$.
Assume $g$ can be rewritten in the form \cref{map_lambdaixei_2}
and $d_i = (e_i, t)$ with $i \in \{0, 1\}$.
Then for $1 \le m \le q + 1$, $f$ is \mfield{m}{\fqtwostar} \ifa 
$(q-1) (q+1 \bmod{m}) < m$ and one of the following holds:
\begin{enumerate}[\upshape(1)]
  \item \label{item_e0t=m} $m = d_0 = d_1$ and $\psi(0, 2m) \ne \psi(1, 2m)$;
  \item \label{item_e0t<m} $m = d_0 + d_1$, $d \mid m$, $\psi(0, 2d) = \psi(1, 2d)$,
        and $t (m - 2d)/(m - d) < m$, where $d = \min \{d_0, d_1\}$.        
\end{enumerate} 
\end{theorem}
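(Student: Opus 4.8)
The plan is to reduce Theorem~\ref{2piecesmto1_Fq2} to a direct application of two earlier results, namely \cref{xrh(xs):fq2} and \cref{aixrimto1_2pieces}. By \cref{xrh(xs):fq2}, with $\ell = q+1$, $s = q-1$ and $d = (r,q-1) = 1$, the polynomial $f(x) = x^r h(x^{q-1})$ is \mfield{m}{\fqtwostar} if and only if $g(x) = x^r h(x)^{q-1}$ is \mfield{m}{U_{q+1}} and the arithmetic condition $(q-1)(q+1 \bmod m) < m$ holds. So the whole task comes down to characterizing when $g$ is \mfield{m}{U_{q+1}}, given that $g$ has the two-branch form \cref{map_lambdaixei_2} on the subgroup $U_{q+1}$ of order $q+1 = 2t$.

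\textbf{Transferring the two-branch analysis to $U_{q+1}$.} The key observation is that $U_{q+1}$, with its generator $\zeta$ and the index-$2$ subgroup $A_0$, plays exactly the role that $\fqstar$ (with primitive element $\xi$ and index-$2$ subgroup $C_0$) plays in \cref{aixrimto1_2pieces}. The branch functions $g(x) = \lambda_i x^{e_i}$ on $A_i$ are the analogues of $f_i(x) = a_i x^{r_i}$ on $C_i$, with $\lambda_i \in U_{q+1}$ in place of $a_i \in \fqstar$, $e_i$ in place of $r_i$, and $t = (q+1)/2$ in place of $s = (q-1)/2$. All the preliminary results used in the proof of \cref{aixrimto1_2pieces}---namely \cref{lem:fi(ci),thm:fiCifjCj,f_i(C_i)_di=dj,lem:di+dj}---were stated and proved for an arbitrary finite cyclic group written multiplicatively as the multiplicative group of a field only because that is where they were needed; their proofs use nothing beyond the cyclic structure of $C_0$ inside $\fqstar$ and elementary facts about linear Diophantine equations. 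Hence, replacing $\fqstar$ by $U_{q+1}$, $\xi$ by $\zeta$, $\log_\xi$ by $\log_\zeta$, and therefore $\phi(i,n) = (i r_i + \log_\xi a_i)\bmod n$ by $\psi(i,n) = (i e_i + \log_\zeta \lambda_i)\bmod n$, the statement of \cref{aixrimto1_2pieces} transcribes verbatim: $g$ is \mfield{m}{U_{q+1}} if and only if either (i) $m = d_0 = d_1$ and $\psi(0,2m) \ne \psi(1,2m)$, or (ii) $m = d_0 + d_1$, $d \mid m$, $\psi(0,2d) = \psi(1,2d)$, and $t(m-2d)/(m-d) < m$, with $d = \min\{d_0,d_1\}$ and $d_i = (e_i, t)$.

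\textbf{Assembling the proof.} With the above transcription in hand, the proof is immediate: combine the equivalence from \cref{xrh(xs):fq2} (that $f$ is \mfield{m}{\fqtwostar} iff $g$ is \mfield{m}{U_{q+1}} and $(q-1)(q+1\bmod m) < m$) with the transcribed version of \cref{aixrimto1_2pieces} characterizing when $g$ is \mfield{m}{U_{q+1}}. This yields precisely the two cases \cref{item_e0t=m} and \cref{item_e0t<m} of the theorem, together with the standing condition $(q-1)(q+1\bmod m) < m$. One should note that $h$ having no roots in $U_{q+1}$ guarantees $\lambda_i \in U_{q+1} \subseteq \fqtwostar$, so that $g$ does indeed restrict to a well-defined monomial map on each $A_i$ and the hypotheses of the transcribed \cref{aixrimto1_2pieces} are met; this is exactly what \cref{map_lambdaixei_2} encodes.

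\textbf{Main obstacle.} The only real subtlety is making rigorous the claim that \cref{aixrimto1_2pieces} and all its supporting lemmas apply with $\fqstar$ replaced by the subgroup $U_{q+1}$. Since $U_{q+1}$ is not the multiplicative group of a field but merely a cyclic group, one must check that none of the intermediate arguments secretly used field arithmetic outside of $U_{q+1}$---for instance, the decomposition $g(x) = (1/2)(\lambda_0 x^{e_0}(1+x^t) + \lambda_1 x^{e_1}(1-x^t))$ does involve the element $1/2 \in \fqtwo$, but this is harmless because it is only a notational device and division by $2$ is available in $\fqtwo$ when $q$ is odd. Once it is observed that every structural step---counting preimages of a monomial on a coset, intersecting images of cosets, invoking \cref{lem:di+dj}---depends only on $U_{q+1}$ being cyclic of order $q+1 = 2t$, the transcription is legitimate and the proof closes with no further computation.
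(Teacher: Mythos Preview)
Your proposal is correct and follows exactly the paper's approach: the paper proves this theorem simply by remarking, just before the statement, that one replaces $\fqstar$ by $U_{q+1}$ and $\phi(i,n)$ by $\psi(i,n)$ in \cref{aixrimto1_2pieces} and then applies \cref{xrh(xs):fq2}. Your write-up is in fact more detailed than the paper's, which does not spell out the justification that the supporting lemmas depend only on the cyclic group structure.
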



Applying \cref{2piecesmto1_Fq2} to the~$f$ 
in \cref{item_ppbi2} of \cref{lem_ppbi} 
with $v = 1$ yields the next result.


\begin{corollary}\label{co_2piecemto1}
Let $q$ be odd and $t = (q+1)/2$.
Let $f(x) = x^r (1 + a x^{(u+t)(q-1)})$,
where $(r, q-1) = 1$, $0 \le u < t$, and $a \in U_{q+1}$ 
with $(-a)^{(q+1)/(q+1, u+t)} \ne 1$.
Then for $1 \le m \le q + 1$, $f$ is \mfield{m}{\fqtwostar} 
\ifa one of the following holds:
\begin{enumerate}[\upshape(1)]
  \item $m = (r - u, t) $ and $r - u \not \equiv t \pmod{2m}$;
  \item $m = 2(r - u, t)$ and $r - u \equiv t \pmod{m}$.
\end{enumerate} 
\end{corollary}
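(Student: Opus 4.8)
The plan is to specialize \cref{2piecesmto1_Fq2} to the binomial $h(x) = 1 + a x^{u+t}$, which by \cref{item_ppbi2} of \cref{lem_ppbi} (with $v=1$) induces a monomial on each of $A_0$ and $A_1$. First I would compute the branch data $(\lambda_0, e_0)$ and $(\lambda_1, e_1)$ in \cref{map_lambdaixei_2}. Writing $\zeta$ for a generator of $U_{q+1}$ with $\zeta^t = -1$ (since $t = (q+1)/2$), we have $x^{u+t} = \zeta^{i(u+t)}$ for $x \in A_i$, so that $h(x) = 1 + a\zeta^{i(u+t)}$ is constant on each coset. For $i=0$ this gives $h(x) = 1 + a$, and for $i=1$, using $\zeta^t = -1$, it gives $h(x) = 1 - a\zeta^{u}$; the hypothesis $(-a)^{(q+1)/(q+1,u+t)} \ne 1$ is exactly what guarantees $h$ has no roots in $U_{q+1}$, so both values lie in $\fqtwostar$ and in fact in $U_{q+1}$ (as $\lambda^{q-1}$ for such $\lambda$). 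Then $g(x) = x^r h(x)^{q-1}$ satisfies $g(x) = \lambda_i x^r$ on $A_i$ with $\lambda_i = h|_{A_i}^{\,q-1} \in U_{q+1}$ and $e_0 = e_1 = r$. Hence $d_0 = d_1 = (r, t)$ automatically, so only \cref{item_e0t=m} or \cref{item_e0t<m} of \cref{2piecesmto1_Fq2} with $d_0 = d_1$ can occur; in the latter case $d = (r,t) = m/2$ and $(r,t) \mid m$ holds trivially, while $t(m-2d)/(m-d) = 0 < m$, and similarly $(q-1)(q+1 \bmod m) < m$ is automatic here because $q+1 = 2t$ and $m \mid 2(r,t) \mid \dots$ — one checks $m \mid q+1$ in both subcases, so the exceptional-count side conditions vanish.

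The remaining work is to translate the conditions $\psi(0,2m) \ne \psi(1,2m)$ (case 1) and $\psi(0,2d)=\psi(1,2d)$ (case 2) into the stated congruences on $r - u$ and $t$. Recall $\psi(i,n) = (i e_i + \log_\zeta \lambda_i) \bmod n$ with $e_i = r$, so $\psi(0,n) = \log_\zeta \lambda_0 \bmod n$ and $\psi(1,n) = (r + \log_\zeta \lambda_1) \bmod n$. The key computation is $\log_\zeta(\lambda_1/\lambda_0) = \log_\zeta\bigl((1 - a\zeta^u)/(1+a)\bigr)^{q-1}$; I would show this equals $-(u+t)$ modulo $q+1$ (equivalently modulo $2t$), or more precisely that the map $x^{u+t}$ on $U_{q+1}$ accounts for the logarithm shift. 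Concretely, since $1 - a\zeta^u = (1+a\zeta^{u+t})|_{A_1}$ arises from multiplying $x^{u+t}$ at $\zeta$ by $a$ and then evaluating $h$, the $(q-1)$-power logarithm difference $\log_\zeta\lambda_1 - \log_\zeta\lambda_0$ should be computed directly; after identifying it, the condition $\psi(0,2m) \ne \psi(1,2m)$ becomes $r + \log_\zeta(\lambda_1/\lambda_0) \not\equiv 0 \pmod{2m}$, which I expect to reduce to $r - u \not\equiv t \pmod{2m}$, and likewise the equality version at modulus $2d = m$ becomes $r - u \equiv t \pmod m$.

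The main obstacle will be the discrete-logarithm bookkeeping in $U_{q+1}$: pinning down $\log_\zeta\bigl((1 - a\zeta^u)(1+a)^{-1}\bigr)^{q-1} \bmod{q+1}$ precisely enough to get the clean shift by $-(u+t)$. The subtlety is that $a \in U_{q+1}$ is arbitrary subject only to the root-freeness condition, so the logarithm of $(1+a)^{q-1}$ is not itself controlled — but it should cancel against the $i=0$ term, leaving only the exponent $u+t$ contributed by the $\zeta^{u+t}$ factor. I would make this rigorous by observing that for any $\alpha, \beta \in \fqtwo$ with $\alpha\beta \ne 0$, $(\alpha/\beta)^{q-1} = (\alpha^{q-1})/(\beta^{q-1})$, and that $(1 + a\zeta^{u+t})^{q-1}$ as a function of the coset index $i$ (via $\zeta^{i(u+t)}$) has logarithm differing from $(1+a)^{q-1}$ by exactly what the monomial $x^{u+t}$ contributes — i.e.\ $\psi(1,n) - \psi(0,n) \equiv r - (u+t) \pmod n$ after absorbing $\zeta^t = -1$ into a sign that, being a $(q-1)$-power of $-1 \in U_{q+1}$, contributes $0$ to the logarithm when $2 \mid \ord_\zeta(-1)$... this last point needs care and is where I expect to spend the most effort. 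Once the shift $\psi(1,n) - \psi(0,n) \equiv r - u - t \pmod n$ is established, both cases of the corollary follow immediately by substituting $n = 2m$ (case 1, inequality) and $n = m = 2d$ (case 2, equality) and invoking \cref{2piecesmto1_Fq2}.
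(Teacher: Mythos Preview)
There is a genuine error in your computation of the branch data. You claim that for $x \in A_i$ one has $x^{u+t} = \zeta^{i(u+t)}$, and hence that $h(x) = 1 + a\zeta^{i(u+t)}$ is constant on each coset. This is false: for $x = \zeta^{2j+i} \in A_i$ we get $x^{u+t} = \zeta^{(2j+i)(u+t)}$, which depends on $j$ whenever $0 < u < t$. What \emph{is} constant on $A_i$ is $x^t$ (equal to $(-1)^i$), not $x^{u+t}$. Consequently your exponents $e_0 = e_1 = r$ are wrong; they would give $d_0 = d_1 = (r,t)$ rather than the $(r-u,t)$ appearing in the statement, and the discrete-logarithm difficulties you anticipate downstream are symptoms of this miscomputation rather than an intrinsic obstacle.

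The paper's route avoids all of this by computing $g(x) = x^r h(x)^{q-1}$ directly on each coset, using $a^q = a^{-1}$ (from $a \in U_{q+1}$) and $x^q = x^{-1}$ for $x \in U_{q+1}$. On $A_0$ one has $x^t = 1$, so $h(x) = 1 + a x^u$ and
\[
  h(x)^{q-1} \;=\; \frac{(1+ax^u)^q}{1+ax^u}
             \;=\; \frac{1+a^{-1}x^{-u}}{1+ax^u}
             \;=\; \frac{a^{-1}x^{-u}(1+ax^u)}{1+ax^u}
             \;=\; a^{-1}x^{-u},
\]
whence $g(x) = a^{-1}x^{r-u}$. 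On $A_1$ the same manipulation with $x^t = -1$ gives $g(x) = -a^{-1}x^{r-u}$. Thus $e_0 = e_1 = r-u$, $\lambda_0 = a^{-1}$, $\lambda_1 = -a^{-1}$, so $d_0 = d_1 = (r-u,t)$ and
\[
  \psi(1,n)-\psi(0,n) \equiv (r-u) + \log_\zeta(-1) \equiv (r-u) + t \pmod n.
\]
With this in hand, both cases of the corollary drop out of \cref{2piecesmto1_Fq2} with no further logarithm bookkeeping: the side conditions $(q-1)((q+1)\bmod m) < m$ and $t(m-2d)/(m-d) < m$ hold automatically since $m \mid q+1$ in each case. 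The point you were missing is that it is $h(x)^{q-1}$, not $h(x)$, that reduces to a monomial on each $A_i$.
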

\begin{proof}
Let $h(x) = 1 + a x^{u+t}$ with $a \in U_{q+1}$. 
Then $f(x) = x^r h(x^{q-1})$ and $h$ has no roots 
in $U_{q+1}$ \ifa $(-a)^{(q+1)/(q+1, u+t)} \ne 1$. 
For any $x \in A_0$, we have $x^{t} = 1$ and so
\begin{align*}
     g(x) \coloneq x^r h(x)^{q-1} 
    = x^r \frac{(1 + a x^{u})^q}{1 + a x^{u}} 
    = x^r \frac{1 + a^{-1} x^{-u}}{1 + a x^{u}}
    = a^{-1} x^{r-u}.  
  \end{align*}
Similarly, for any $x \in A_1$, we get $x^{t} = - 1$ and so $g(x) = -a^{-1} x^{r-u}$. By \cref{2piecesmto1_Fq2}, for $1 \le m \le q + 1$, $f$ is \mfield{m}{\fqtwostar} \ifa $(q-1) (q+1 \bmod{m}) < m$ and one of the following holds:
\begin{enumerate}[\upshape(1)]
  \item $m = (r - u, t)$ and 
        $\log_{\zeta} {a^{-1}} \not \equiv r - u + \log_{\zeta} {(-a^{-1})} \pmod{2m}$;
  \item $m = 2 (r - u, t)$ and 
        $\log_{\zeta} {a^{-1}} \equiv r - u + \log_{\zeta} {(-a^{-1})} \pmod{m}$.
\end{enumerate}
Obviously, $(q-1) (q+1 \bmod{m}) < m$ and $\log_{\zeta} {a^{-1}} - \log_{\zeta} {(-a^{-1})} = \log_{\zeta} {(-1)} = t$.
\end{proof}

When $v=0$ in \cref{item_ppbi2} of \cref{lem_ppbi}, 
a similar argument gives the following result.

\begin{corollary}
Let $q$ be odd and $t = (q+1)/2$.
Let $f(x) = x^r (1 + a x^{u(q-1)})$,
where $(r, q-1) = 1$, $0 \le u < t$, and $a \in U_{q+1}$ with $(-a)^{(q+1)/(q+1, u)} \ne 1$.
Then for $1 \le m \le q + 1$, $f$ is \mfield{m}{\fqtwostar} \ifa $(r-u, q+1) = m$.
\end{corollary}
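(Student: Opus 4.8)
The plan is to follow the pattern of the proof of \cref{co_2piecemto1}, but the present situation is actually simpler: here $g(x) := x^r h(x)^{q-1}$ will turn out to be a single monomial on all of $U_{q+1}$, not a two‑branch cyclotomic mapping. First I would set $h(x) = 1 + a x^{u}$, so that $f(x) = x^r h(x^{q-1})$ with $h \in \fqtwox$. Then $h$ has a root in $U_{q+1}$ \ifa $x^{u} = -a^{-1}$ is solvable in the cyclic group $U_{q+1}$ of order $q+1$, i.e.\ \ifa $(-a^{-1})^{(q+1)/(q+1,u)} = 1$, equivalently $(-a)^{(q+1)/(q+1,u)} = 1$. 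Hence the hypothesis on $a$ says exactly that $h$ has no roots in $U_{q+1}$, and together with $(r, q-1) = 1$ this makes \cref{xrh(xs):fq2} applicable.

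Next I would compute $h(x)^{q-1}$ for $x \in U_{q+1}$. Since $x^{q} = x^{-1}$ and $a^{q} = a^{-1}$ (because $a \in U_{q+1}$), we get $h(x)^{q} = 1 + a^{-1} x^{-u}$, and from the factorization $1 + a^{-1} x^{-u} = a^{-1} x^{-u}\,(1 + a x^{u})$ it follows that
\[
h(x)^{q-1} = \frac{h(x)^{q}}{h(x)} = a^{-1} x^{-u}
\qquad\text{for every } x \in U_{q+1},
\]
the case $u = 0$ being included (it just gives the constant $a^{-1}$). Therefore $g(x) = x^r h(x)^{q-1} = a^{-1} x^{\,r-u}$ on the whole group $U_{q+1}$; unlike in \cref{co_2piecemto1}, the exponent $x^{u}$ with $u < t$ does not interact with the decomposition $U_{q+1} = A_0 \cup A_1$, so no branching occurs.

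Since $g$ is a monomial function on the cyclic group $U_{q+1}$ of order $q+1$, it is $(r-u, q+1)$-to-$1$ on $U_{q+1}$ with empty exceptional set (the $U_{q+1}$-analogue of the $\ell = 1$ case recorded after \cref{lem:fi(ci)}); in particular $g$ is \mfield{m}{U_{q+1}} \ifa $(r-u, q+1) = m$, which already forces $m \mid q+1$. Finally I would invoke \cref{xrh(xs):fq2}: $f$ is \mfield{m}{\fqtwostar} \ifa $g$ is \mfield{m}{U_{q+1}} and $(q-1)(q+1 \bmod m) < m$. When $(r-u, q+1) = m$ we have $m \mid q+1$, so $q+1 \bmod m = 0$ and the inequality holds automatically; conversely, if $f$ is \mfield{m}{\fqtwostar} then $g$ is \mfield{m}{U_{q+1}}, hence $(r-u, q+1) = m$. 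This gives the stated equivalence.

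There is no substantial obstacle here beyond bookkeeping. The two points deserving a sentence of care are that $g$ is genuinely a single monomial on $U_{q+1}$ (so its many-to-one behaviour is read off directly from $(r-u, q+1)$, with no case split), and that the auxiliary inequality $(q-1)(q+1 \bmod m) < m$ appearing in \cref{xrh(xs):fq2} becomes vacuous precisely because the condition $(r-u,q+1)=m$ forces $m \mid q+1$.
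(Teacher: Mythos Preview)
Your argument is correct. The paper does not spell out a proof here, saying only that ``a similar argument'' to that of \cref{co_2piecemto1} works; taken literally, that would mean computing $g$ separately on $A_0$ and $A_1$, feeding the two (identical) branches into \cref{2piecesmto1_Fq2}, and then checking that its two cases collapse to the single condition $(r-u,q+1)=m$. You take the more direct route: since $v=0$, the identity $h(x)^{q-1}=a^{-1}x^{-u}$ holds uniformly for all $x\in U_{q+1}$, so $g(x)=a^{-1}x^{r-u}$ is a single monomial on the cyclic group $U_{q+1}$ and \cref{xrh(xs):fq2} applies with no branching at all. The core computation is the same in both approaches; your version is shorter and makes transparent why no case distinction survives, while the paper's implied route has the virtue of fitting the $v=0$ and $v=1$ cases into a common template.
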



Applying \cref{2piecesmto1_Fq2} to the~$f$ 
in \cref{item_pptri1} of \cref{lem_pptri} 
yields the next result.
 
\begin{corollary}\label{cor_tri1ab}
Let $q$ be odd, $t = (q+1)/2$, $0 < u < t$, and $v \in \{0, 1\}$.
Let $a, b \in \fqtwostar$ satisfy $a^{q-1} = -1$ and $b^{q+1} = 1 - a^2$.
Let
\[
    f(x) = x^r (1 + a x^{t(q-1)} + b x^{(u + v t)(q-1)}),
\] 
where $(r, q-1) = 1$ and $1 + a x^t + b x^{u + v t}$ has no roots in $U_{q+1}$. 
Let $w = \log_{\zeta} {(-1)^v (1-a)/(1+a)}$.
Then for $1 \le m \le q + 1$, $f$ is \mfield{m}{\fqtwostar} \ifa one of the following holds:
\begin{enumerate}[\upshape(1)]
  \item $m = (r - u, t)$ and $r - u \not \equiv w \pmod{2m}$;
  \item $m = 2 (r - u, t)$ and $r - u \equiv w \pmod{m}$.  
\end{enumerate} 
\end{corollary}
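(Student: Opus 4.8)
The plan is to apply \cref{2piecesmto1_Fq2} with $h(x) = 1 + a x^{t} + b x^{u + v t}$, in the same spirit as the proof of \cref{co_2piecemto1}. First I would note that $f(x) = x^{r} h(x^{q-1})$ with $(r,q-1) = 1$ and that $h$ has no roots in $U_{q+1}$ by hypothesis, so \cref{2piecesmto1_Fq2} applies once we verify that $g(x) = x^{r} h(x)^{q-1}$ has the two-branch monomial shape \cref{map_lambdaixei_2} on $U_{q+1} = A_{0} \cup A_{1}$, where $t = (q+1)/2$, $A_{0} = \{x \in U_{q+1} : x^{t} = 1\}$, and $A_{1} = \{x \in U_{q+1} : x^{t} = -1\}$.

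The core step is to evaluate $h(x)^{q-1}$ on $A_{0}$ and $A_{1}$. On $A_{0}$ one has $x^{t} = 1$, so $h(x) = (1+a) + b x^{u}$; raising to the $q$-th power and using $x^{q} = x^{-1}$ (valid since $x \in U_{q+1}$), $a^{q} = -a$ (from $a^{q-1} = -1$), and $b^{q} = b^{q+1}/b = (1-a^{2})/b$ (from $b^{q+1} = 1 - a^{2}$), I expect the factorization $h(x)^{q} = \frac{1-a}{b}\, x^{-u}\, h(x)$, hence $h(x)^{q-1} = \frac{1-a}{b}\, x^{-u}$ and $g(x) = \frac{1-a}{b}\, x^{r-u}$ on $A_{0}$. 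Repeating the computation on $A_{1}$, where $x^{t} = -1$ gives $h(x) = (1-a) + b(-1)^{v} x^{u}$, should yield $g(x) = \frac{(-1)^{v}(1+a)}{b}\, x^{r-u}$ on $A_{1}$. Thus $g$ is of the form \cref{map_lambdaixei_2} with $e_{0} = e_{1} = r - u$, $\lambda_{0} = (1-a)/b$, $\lambda_{1} = (-1)^{v}(1+a)/b$, and in particular $d_{0} = d_{1} = (r-u, t)$.

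It then remains to rewrite the two cases of \cref{2piecesmto1_Fq2} in the stated form. With $d_{0} = d_{1} = d$, its first case forces $m = d = (r-u,t)$ together with $\psi(0, 2m) \neq \psi(1, 2m)$, while its second case forces $m = 2d = 2(r-u,t)$ together with $\psi(0, 2d) = \psi(1, 2d)$; here the divisibility $d \mid m$ and the inequality $t(m - 2d)/(m-d) < m$ are automatic because $d_{0} = d_{1}$. Substituting $\psi(i, n) = (i e_{i} + \log_{\zeta} \lambda_{i}) \bmod n$ and simplifying, I expect $\psi(0, n) \equiv \psi(1, n) \pmod{n}$ to be equivalent to $r - u \equiv \log_{\zeta} \frac{1-a}{(-1)^{v}(1+a)} \pmod{n}$; since $(-1)^{-v} = (-1)^{v}$, this fraction equals $(-1)^{v}(1-a)/(1+a)$, so the congruence is exactly $r - u \equiv w \pmod{n}$ with $w$ as defined. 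Taking $n = 2m$ in the first case and $n = 2d = m$ in the second, and noting that the side condition $(q-1)(q+1 \bmod m) < m$ of \cref{2piecesmto1_Fq2} holds trivially because $m \mid 2t = q+1$ (so $q+1 \bmod m = 0$), we recover precisely items (1) and (2) of the corollary.

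A small preliminary to include is that $w$ is well defined, i.e.\ that $(-1)^{v}(1-a)/(1+a) \in U_{q+1}$; this follows from $\bigl(\frac{1-a}{1+a}\bigr)^{q} = \frac{1 - a^{q}}{1 + a^{q}} = \frac{1+a}{1-a}$ and $(-1)^{q+1} = 1$ (note also $a \neq \pm 1$ since $a^{q-1} = -1$). The main obstacle is the branchwise factorization $h(x)^{q} = (\text{const}) \cdot x^{-u} \cdot h(x)$: this is where the hypotheses $a^{q-1} = -1$ and $b^{q+1} = 1 - a^{2}$ are actually consumed, and where the $(-1)^{v}$ bookkeeping on $A_{1}$ requires care; once this identity is in place, the rest is a routine substitution into \cref{2piecesmto1_Fq2}.
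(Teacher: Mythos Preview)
Your proposal is correct and follows essentially the same route as the paper's proof: compute $g(x)=x^{r}h(x)^{q-1}$ on $A_{0}$ and $A_{1}$ using $a^{q}=-a$ and $b^{q}=(1-a^{2})/b$ to obtain $\lambda_{0}=(1-a)/b$, $\lambda_{1}=(-1)^{v}(1+a)/b$, $e_{0}=e_{1}=r-u$, then invoke \cref{2piecesmto1_Fq2}. The only cosmetic difference is that the paper first raises $h(x)$ to the $q$-th power and then specializes $x^{t}=\pm1$, whereas you specialize first and then factor; your explicit check that $m\mid q+1$ kills the side condition $(q-1)(q+1\bmod m)<m$, and that $w$ is well defined, are details the paper leaves implicit.
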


\begin{proof}
Let $h(x) = 1 + a x^t + b x^{u + v t}$. 
Then $f(x) = x^r h(x^{q-1})$ and for any $x \in U_{q+1}$,
\begin{equation*}
     g(x) \coloneq x^r h(x)^{q-1}
        = x^r \frac{(1 + a x^t + b x^{u + v t})^q}{1 + a x^t + b x^{u + v t}}
        = x^r \frac{1 + a^q x^{-t} + b^q x^{- u - v t}}{1 + a x^t + b x^{u + v t}}.
\end{equation*}
Since $a^{q-1} = -1$ and $b^{q+1} = 1 - a^2$, we have $a^q = -a$ and $b^q = (1- a^2)b^{-1}$, and thus
\begin{equation*}
    g(x) = x^r \frac{1 -a x^{-t} + (1-a^2)b^{-1} x^{- u - v t}}{1 + a x^t + b x^{u + v t}}.
\end{equation*}
Because $x^t = 1$ for $x \in A_0$ and $x^t = -1$ for $x \in A_1$, we get
\begin{equation*}
  g(x) = 
  \begin{cases}
  (1-a)b^{-1} x^{r-u}  & \text{if $x \in A_0$,} \\
  (-1)^v (1 + a) b^{-1} x^{r-u}  & \text{if $x \in A_1$.}
  \end{cases}   
\end{equation*}
The condition $b^{q+1} = 1 - a^2 = (1 \pm a)^{q+1}$ implies 
$(1 \pm a) b^{-1} \in U_{q+1}$.
Then the result follows from \cref{xrh(xs):fq2} and \cref{2piecesmto1_Fq2}.
\end{proof}

In \cref{cor_tri1ab}, take $q \equiv 3 \pmod{4}$, $u = 1$, $v = 0$, $b = -(1-a)$, and $b^t = -(1+a)^t$.
Then $t$ is even and $w$ is odd. 
Moreover, since $(r, q-1) = 1$, 
we get $r$ is odd and so $r - 1$ is even.
Thus $(r-1, t) \ne 1$ and $w \not \equiv r - 1 \pmod{4}$. 
Then we obtain the next example.
\begin{example}
Let $q \equiv 3 \pmod{4}$ and $t = (q+1)/2$.
Let $a \in \fqtwostar$ satisfy $a^{q-1} = -1$ and $(1-a)^t = -(1+a)^t$.
Let
\[
    f(x) = x^r (1 + a x^{t(q-1)} - (1-a) x^{q-1}), 
\]
where $(r, q-1) = 1$.
Then $f$ is \mfield{2}{\fqtwostar} \ifa $(r - 1, t) = 2$.
\end{example}

The existence of $h(x) := 1 + a x^t - (1 - a) x$ 
such that $h$ has no roots in $U_{q+1}$ 
is explained in \cite[Example~4.12]{Hou232193}.

Applying \cref{2piecesmto1_Fq2} to the~$f$ 
in \cref{item_pptri2} of \cref{lem_pptri} 
yields the next result.

\begin{corollary}\label{cor_tri1a}
Let $q$ be odd, $t = (q+1)/2$, $0 < u < t$, 
and $v \in \{0, 1\}$.
Let $a \in \fqtwostar$ satisfy $a^{q+1} = 4$. Let
\[
    f(x) = x^r (1 - x^{t(q-1)} + a x^{(u + v t)(q-1)}),
\] 
where $(r, q-1) = 1$ and $1 - x^t + a x^{u + v t}$ 
has no roots in $U_{q+1}$. 
Then for $1 \le m \le q + 1$, $f$ is 
\mfield{m}{\fqtwostar} \ifa $(q-1) (q+1 \bmod{m}) < m$ and
one of the following holds:
\begin{enumerate}[\upshape(1)]
  \item $m = (r - u, t) = (r - 2 u, t)$ and 
        $\log_{\zeta} {(-1)^v 2 a^{-1}} \not \equiv r - u \pmod{2m}$;
  \item $m = (r - u, t) + (r - 2 u, t)$, $d \mid m$, 
        $\log_{\zeta} {(-1)^v 2 a^{-1}} \equiv r - u \pmod{2d}$,
        and $t (m - 2d)/(m - d) < m$, 
        where $d = \min \{(r - u, t), (r - 2 u, t)\}$.
\end{enumerate} 
\end{corollary}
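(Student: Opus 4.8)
The plan is to derive \cref{cor_tri1a} from \cref{2piecesmto1_Fq2}, for which the only thing required is the explicit two-branch description of the induced map $g(x) = x^{r} h(x)^{q-1}$ on $U_{q+1}$, where $h(x) = 1 - x^{t} + a x^{u+vt}$. So the main step is to carry out this computation. I would first record, for $x \in U_{q+1}$, the identities $x^{q} = x^{-1}$ and hence $x^{q-1} = x^{-2}$, together with the consequences of $a^{q+1} = 4$, namely $a^{q} = 4 a^{-1}$ and $a^{q-1} = 4 a^{-2}$, and also $2^{q} = 2$ and $(-1)^{q} = -1$ since $q$ is odd. Writing $U_{q+1} = A_0 \cup A_1$ with $A_0 = \{x \in U_{q+1} : x^{t} = 1\}$ and $A_1 = \{x \in U_{q+1} : x^{t} = -1\}$ (recall $t = (q+1)/2$): on $A_0$ the constant term and $-x^{t}$ cancel and $x^{u+vt} = x^{u}$, so $h(x) = a x^{u}$ and $g(x) = x^{r}(a x^{u})^{q-1} = 4 a^{-2} x^{\,r-2u}$; on $A_1$ one has $h(x) = 2 + (-1)^{v} a x^{u}$, so $h(x)^{q} = 2 + (-1)^{v} 4 a^{-1} x^{-u}$, and the key identity $2 + (-1)^{v} 4 a^{-1} x^{-u} = (-1)^{v} 2 a^{-1} x^{-u}(2 + (-1)^{v} a x^{u})$ gives $h(x)^{q-1} = (-1)^{v} 2 a^{-1} x^{-u}$ and thus $g(x) = (-1)^{v} 2 a^{-1} x^{\,r-u}$. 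Since $h$ has no roots in $U_{q+1}$, these formulas are valid on all of $U_{q+1}$, and each leading coefficient lies automatically in $U_{q+1}$.

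Hence $g$ has the form \cref{map_lambdaixei_2} with $(\lambda_0, e_0) = (4a^{-2},\, r - 2u)$ and $(\lambda_1, e_1) = ((-1)^{v} 2 a^{-1},\, r - u)$, so that $d_0 = (e_0, t) = (r - 2u, t)$ and $d_1 = (e_1, t) = (r - u, t)$. Plugging this into \cref{2piecesmto1_Fq2}: from $\psi(0, n) = \log_{\zeta}(4a^{-2}) \bmod n$, $\psi(1, n) = (r - u + \log_{\zeta}((-1)^{v} 2 a^{-1})) \bmod n$, and the observation that $\lambda_0 = \lambda_1^{2}$ (equivalently $\log_{\zeta}(4a^{-2}) - \log_{\zeta}((-1)^{v} 2 a^{-1}) = \log_{\zeta}((-1)^{v} 2 a^{-1})$), the relation $\psi(0, n) = \psi(1, n)$ becomes $\log_{\zeta}((-1)^{v} 2 a^{-1}) \equiv r - u \pmod{n}$. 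Taking $n = 2m$ in the first alternative of \cref{2piecesmto1_Fq2} (namely $m = d_0 = d_1$ with $\psi(0, 2m) \ne \psi(1, 2m)$) and $n = 2d$ in the second ($m = d_0 + d_1$, $d = \min\{d_0, d_1\}$, $d \mid m$, $\psi(0, 2d) = \psi(1, 2d)$, $t(m-2d)/(m-d) < m$), and keeping the blanket condition $(q-1)(q+1 \bmod m) < m$ from \cref{2piecesmto1_Fq2} verbatim, gives exactly items (1) and (2) of \cref{cor_tri1a}.

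Most of this is bookkeeping; the one delicate point I expect is the simplification of $h(x)^{q-1}$ on $A_1$, where one must track the sign $(-1)^{v}$ correctly through the $q$-th power and verify the factorization displayed above. This is precisely where the hypothesis $a^{q+1} = 4$ is used, and it is also what forces $g$ to be genuinely monomial on $A_1$ rather than merely a quotient of two binomials — which is why such an $a$ (and a companion condition ensuring $1 - x^{t} + a x^{u+vt}$ has no root in $U_{q+1}$) must be assumed. Once the two-branch form of $g$ is in hand, the remainder is routine: matching $(\lambda_i, e_i)$ against the statement of \cref{2piecesmto1_Fq2} and rewriting the congruences $\psi(0, n) = \psi(1, n)$ in terms of $r$, $u$, $v$, $a$.
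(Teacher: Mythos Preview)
Your proposal is correct and follows essentially the same approach as the paper: compute the two-branch monomial form of $g$ on $A_0$ and $A_1$ (the paper writes $\lambda_0 = a^{q-1}$ rather than $4a^{-2}$, but these agree since $a^{q+1}=4$), then invoke \cref{2piecesmto1_Fq2}. The paper's proof is terser and leaves the final congruence simplification implicit; your explicit observation that $\lambda_0 = \lambda_1^{2}$ is a clean way to see why $\psi(0,n) = \psi(1,n)$ collapses to $\log_{\zeta}((-1)^{v}2a^{-1}) \equiv r-u \pmod{n}$.
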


\begin{proof}
Let $h(x) = 1 - x^t + a x^{u + v t}$ 
and $g(x) = x^r h(x)^{q-1}$.
For $x \in A_0$, $x^t = 1$ and so $h(x) = a x^u$. 
Then $g(x) = a^{q-1} x^{r-2u}$.
For $x \in A_1$, we have $x^t = -1$ and so 
$h(x) = 2 + (-1)^v a x^u$. Then 
\[
    g(x) = x^r \frac{h(x)^q}{h(x)}
         = x^r \frac{2 + (-1)^v a^q x^{-u}}{2 + (-1)^v a x^u}
         = x^r \frac{2 + (-1)^v 4 a^{-1} x^{-u}}{2 + (-1)^v a x^u}
         = (-1)^v 2 a^{-1}x^{r-u}.
\]
By $a^{q+1} = 4 = 2^{q+1}$, we get $2 a^{-1} \in U_{q+1}$. 
Then the result follows from \cref{xrh(xs):fq2} and \cref{2piecesmto1_Fq2}.
\end{proof}
An argument similar to the one used in \cite[Section~4.4]{Hou232193} shows that \cite[Theorem~2.1]{Qin2334}, \cite[Theorem~5]{GoharZ16}, and \cite[Theorem~4.5]{Qin2469} are special cases of \cref{cor_tri1a}.

In \cref{cor_tri1a}, take $q \equiv 3 \pmod{4}$, $v = 0$, and $(2/a)^t = -1$. 
Then $h(x) = 1 - x^t + a x^u$. An argument similar to the one used in \cite[Example~4.13]{Hou232193} shows that $h$ has no roots in $U_{q+1}$. Since $r-u$ is odd, we have $(r-u, t) \ne 2$ and $\log_{\zeta} {2 a^{-1}} \equiv 1 \equiv r - u \pmod{2}$. 
Then we get the next example.
\begin{example}
Let $q \equiv 3 \pmod{4}$ and $t = (q+1)/2$.
Let $a \in \fqtwostar$ satisfy $(2/a)^t = -1$.
Let
\[
    f(x) = x^r (1 - x^{t(q-1)} + a x^{u(q-1)}),
\]
where $(r, q-1) = 1$ and $u$ is even. 
Then $f$ is \mfield{2}{\fqtwostar} \ifa $(r - u, t) = (r - 2 u, t) = 1$.
\end{example}

\subsubsection{The case \texorpdfstring{$g(x)$ has multiple branches on $U_{q+1}$}{g(x) has multiple branches on Uq+1}}\label{sec_g_m}

This subsection is devoted to the case~$g$ in 
\cref{map_lambdaixei} has multiple branches on $U_{q+1}$. 
Replacing $\fqstar$ by $U_{q+1}$ and 
$\phi(i, n)$ by $\psi(i, n)$ in \cref{Lpiecesmto1},
we can characterize the \mtoone property of~$g$ on $U_{q+1}$,
and then we obtain the following result by \cref{xrh(xs):fq2}.

\begin{theorem}\label{Lpiecesmto1_Fq2}
Let $q+1 = \ell t$ and $[\ell] = \{0, 1, \ldots, \ell-1\}$.
Let $f(x) = x^r h(x^{q-1})$ and $g(x) = x^{r} h(x)^{q-1}$, where $(r, q-1) = 1$ and $h \in \fqtwox$ has no roots in $U_{q+1}$. 
Assume $g$ can be rewritten in the form \cref{map_lambdaixei},
and $(e_i, t) = d$ for any $i \in [\ell]$.
Then for $1 \le m \le q+1$, $f$ is \mfield{m}{\fqtwostar} \ifa 
$d \mid m$, $\psi(x, \ell d)$ is \mset{(m/d)}{[\ell]}, $t(\ell \bmod{(m/d)}) < m$, and $(q-1) (q+1 \bmod{m}) < m$.
\end{theorem}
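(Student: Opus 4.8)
The plan is to reduce the assertion to \cref{Lpiecesmto1} applied over the cyclic group $U_{q+1}$ in place of $\fqstar$, and then to transport the resulting criterion for $g$ on $U_{q+1}$ back to $f$ on $\fqtwostar$ by means of \cref{xrh(xs):fq2}.

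By hypothesis $g(x)=x^rh(x)^{q-1}$ has the form \cref{map_lambdaixei}, that is, $g(x)=\lambda_i x^{e_i}$ for $x\in A_i$, where $A_0$ is the index-$\ell$ subgroup of $U_{q+1}$ and $A_i=\zeta^iA_0$ with $\zeta$ a generator of $U_{q+1}$. This is precisely the analogue over $U_{q+1}$ of the generalized cyclotomic mapping \cref{map_aixri}: $U_{q+1}$ plays the role of $\fqstar$, $t$ that of $s$, $\zeta$ that of $\xi$, and $\psi(i,n)=(ie_i+\log_\zeta\lambda_i)\bmod n$ that of $\phi(i,n)$. The point I would stress is that the whole machinery underlying \cref{Lpiecesmto1} — namely \cref{lem:fi(ci)}, \cref{thm:fiCifjCj} and \cref{f_i(C_i)_di=dj} — uses only that one is partitioning a finite cyclic group of order $\ell t$ into the cosets of an index-$\ell$ subgroup, and never appeals to any field-specific feature of $\fq$ nor to the element $0$. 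Consequently \cref{Lpiecesmto1} holds verbatim with $(\fqstar,s,\xi,\phi)$ replaced by $(U_{q+1},t,\zeta,\psi)$.

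With that in place, since $(e_i,t)=d$ for all $i\in[\ell]$ and $1\le m\le q+1=\#U_{q+1}$, the $U_{q+1}$-version of \cref{Lpiecesmto1} gives: $g$ is \mfield{m}{U_{q+1}} \ifa $d\mid m$, $\psi(x,\ell d)$ is \mset{(m/d)}{[\ell]}, and $t(\ell\bmod(m/d))<m$. I would then invoke \cref{xrh(xs):fq2} — applicable because $(r,q-1)=1$ and $h$ has no roots in $U_{q+1}$ — which states that $f$ is \mfqtwostar \ifa $g$ is \mfield{m}{U_{q+1}} and $(q-1)(q+1\bmod m)<m$. Conjoining the two equivalences produces exactly the four stated conditions, which is the claimed characterization.

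I do not expect a genuine obstacle here; the only step demanding attention is the verification that the proof of \cref{Lpiecesmto1} transports to the group $U_{q+1}$, i.e.\ that each lemma invoked in it is purely group-theoretic and nothing is lost in passing from $\fqstar$ to $U_{q+1}$. This is a routine inspection, so once that observation is recorded the argument is essentially a two-line reduction.
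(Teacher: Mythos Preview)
Your proposal is correct and matches the paper's own argument essentially verbatim: the paper likewise observes that \cref{Lpiecesmto1} carries over to $U_{q+1}$ upon replacing $\fqstar$ by $U_{q+1}$ and $\phi$ by $\psi$, and then concludes by invoking \cref{xrh(xs):fq2}. Your added remark that the supporting lemmas are purely group-theoretic makes explicit what the paper leaves implicit, but the strategy is identical.
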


Applying \cref{Lpiecesmto1_Fq2} to the~$f$ 
in \cref{item_pptri3} of \cref{lem_pptri} 
with $\ell = 3$ yields the next result.

\begin{corollary}\label{cor_trikuv}
Let $q \equiv 2 \pmod{3}$ with $q \ge 5$, $t = (q+1)/3$, $0 < u < t$, $0 \le v < 2$ and $k \in \{1, 2\}$.
Let $a \in \fqtwostar$ satisfy $(1 - \epsilon^k)/a \in U_{q+1}$, where $\epsilon = \zeta^t$.
Let
\[
    f(x) = x^r (1 - x^{k t (q-1)} + a x^{(u + v t)(q-1)}),
\] 
where $(r, q-1) = 1$ and $1 - x^{k t} + a x^{u + v t}$ has no roots in $U_{q+1}$. 
Assume $(r-u, t) = (r-2u, t) = d$. 
Let $z_1 = \log_{\zeta} a^{-1}(1 - \epsilon^{k}) \epsilon^{v}$ and $z_2 = \log_{\zeta} a^{-1}(1 - \epsilon^{-k}) \epsilon^{-v}$.
Then for $1 \le m \le q+1$, $f$ is \mfield{m}{\fqtwostar} \ifa one of the following holds:
\begin{enumerate}[\upshape(1)]
  \item \label{item_m=d}$m = d$, $i(r - u) \not \equiv z_i \pmod{3d}$ for any $i \in \{1, 2\}$, 
        and $r - u \not \equiv z_2 - z_1 \pmod{3d}$;
  \item \label{item_m=3d}$m = 3d$ and $i(r - u) \equiv z_i \pmod{3d}$ for any $i \in \{1, 2\}$. 
\end{enumerate} 
\end{corollary}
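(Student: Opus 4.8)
The plan is to follow the template established by Corollaries~\ref{co_2piecemto1}, \ref{cor_tri1ab}, and~\ref{cor_tri1a}: rewrite $g(x) = x^r h(x)^{q-1}$ explicitly as a generalized cyclotomic mapping on $U_{q+1}$ with $\ell = 3$ branches, verify that all three branch exponents have the same $\gcd$ with $t$, and then invoke \cref{Lpiecesmto1_Fq2}. The key computational input is that $q \equiv 2 \pmod 3$ forces $3 \mid q+1$, so $t = (q+1)/3 \in \n$ and $\epsilon = \zeta^t$ is a primitive cube root of unity in $U_{q+1}$; the three cosets $A_0, A_1, A_2$ are distinguished by $x^t \in \{1, \epsilon, \epsilon^2\}$. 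First I would set $h(x) = 1 - x^{kt} + a x^{u+vt}$ and compute, for $x \in A_j$ with $j \in \{0,1,2\}$,
\[
  g(x) = x^r \frac{h(x)^q}{h(x)}
       = x^r \frac{1 - x^{-kt} + a^q x^{-u-vt}}{1 - x^{kt} + a x^{u+vt}},
\]
using $x^{q+1} = 1$. Since $x^{kt} = \epsilon^{jk}$ and $x^{vt} = \epsilon^{jv}$ are scalars on $A_j$, the numerator becomes $(1 - \epsilon^{-jk}) + a^q \epsilon^{-jv} x^{-u}$ and the denominator $(1 - \epsilon^{jk}) + a \epsilon^{jv} x^{u}$; factoring $x^{-u}$ out of the numerator and using $a^q = a \cdot a^{q-1}$ together with the hypothesis $(1-\epsilon^k)/a \in U_{q+1}$ (hence $a^{q-1} = (1-\epsilon^k)^{q-1} = (1-\epsilon^{-k})/(1-\epsilon^k)$, noting $(1-\epsilon^k)^q = 1 - \epsilon^{-k}$ because $\epsilon^q = \epsilon^{-1}$) should collapse the whole fraction to a monomial $\lambda_j x^{r-2u}$ on $A_0$ and $\lambda_j x^{r-u}$ on $A_1, A_2$. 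The precise constants $\lambda_j$ will be $a^{-1}(1-\epsilon^{jk})\epsilon^{jv}$ up to the obvious adjustments, which is exactly where $z_1, z_2$ enter as the relevant discrete logarithms.

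Next I would identify $(e_0, e_1, e_2) = (r - 2u, r - u, r - u)$ — wait, more carefully: on $A_0$ one gets $x^{r-2u}$ while on $A_1, A_2$ one gets $x^{r-u}$, so the branch exponents are $r-2u$ on one coset and $r-u$ on the other two. The hypothesis $(r-u, t) = (r-2u, t) = d$ guarantees $(e_i, t) = d$ uniformly, which is precisely the condition needed to apply \cref{Lpiecesmto1_Fq2}. Then the theorem reduces the $m$-to-$1$ property of $f$ on $\fqtwostar$ to: $d \mid m$, the map $\psi(x, 3d)$ is $(m/d)$-to-$1$ on $[3] = \{0,1,2\}$, plus the two side conditions $t(3 \bmod (m/d)) < m$ and $(q-1)(q+1 \bmod m) < m$. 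Since $\# [3] = 3$, a map can only be $1$-to-$1$ or $3$-to-$1$ on it (a $2$-to-$1$ map would need an even domain), so $m/d \in \{1, 3\}$, giving exactly the two cases $m = d$ and $m = 3d$. For $m/d = 1$ the side condition $t(3 \bmod 1) = 0 < m$ is automatic; for $m/d = 3$ we get $t(3 \bmod 3) = 0 < m$ as well, and $(q-1)(q+1 \bmod m) < m$ needs a brief check using $q \ge 5$ and $3 \mid q+1$ (when $m = 3d$ and $d \mid t$, $m \mid q+1$ so the remainder is $0$; when $m = d \mid t$ likewise $m \mid q+1$). So both side conditions are vacuous here, and the only real content is translating "$\psi(x, 3d)$ is $1$-to-$1$ (resp.\ $3$-to-$1$) on $[3]$" into congruences.

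The remaining step is to unwind $\psi(j, 3d) = (j e_j + \log_\zeta \lambda_j) \bmod 3d$ explicitly. With $e_0 = r - 2u$, $e_1 = e_2 = r - u$, and $\lambda_0 = a^{-1}\cdot(\text{something with } \epsilon^{0}) $, $\lambda_1, \lambda_2$ carrying the $z_1, z_2$ data, the three residues are (up to a common additive shift by $\log_\zeta$ of a fixed constant, which does not affect whether they are distinct or all equal) of the form $0$, $(r-u) - z_1$, $2(r-u) - z_2$ modulo $3d$. Then $\psi(x,3d)$ is $1$-to-$1$ on $[3]$ iff these three are pairwise distinct mod $3d$, i.e.\ $(r-u) \not\equiv z_1$, $2(r-u) \not\equiv z_2$, and $(r-u) - z_1 \not\equiv 2(r-u) - z_2$, the last being $r - u \not\equiv z_2 - z_1 \pmod{3d}$ — exactly item~\labelcref{item_m=d}. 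And $\psi(x,3d)$ is $3$-to-$1$ on $[3]$ iff all three coincide mod $3d$, i.e.\ $(r-u) \equiv z_1$ and $2(r-u) \equiv z_2 \pmod{3d}$ — exactly item~\labelcref{item_m=3d} (noting $i(r-u) \equiv z_i$ for $i \in \{1,2\}$ is the compact form, and when both hold the third congruence is automatic). The main obstacle I anticipate is the bookkeeping in the $g(x)$ computation: getting the factor $(1-\epsilon^{jk})/(1-\epsilon^{-jk})$ versus $a^{q-1}$ exactly right, correctly distributing $\epsilon^{\pm jv}$ between numerator and denominator, and confirming that the constant $\lambda_0$ on $A_0$ is consistent with the $j=0$ specialization of the general formula (where $1 - \epsilon^0 = 0$, so the $-x^{kt}$ term vanishes and $g$ on $A_0$ is just $a^{q-1} x^{r-2u}$ — this edge case must be handled so that $z_1, z_2$ and the congruences still read correctly). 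Once that monomialization is pinned down, everything else is the routine translation through \cref{Lpiecesmto1_Fq2} and \cref{xrh(xs):fq2}.
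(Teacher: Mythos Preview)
Your overall approach matches the paper's: compute $g(x)=x^r h(x)^{q-1}$ branch by branch on $A_0,A_1,A_2$, check the equal-$\gcd$ hypothesis, and invoke \cref{Lpiecesmto1_Fq2}. Your shifted residues $0,\ (r-u)-z_1,\ 2(r-u)-z_2$ are correct and yield exactly the congruences in \cref{item_m=d,item_m=3d}. One bookkeeping remark: the branch constants come out swapped relative to your guess---on $A_1$ one gets $\lambda_1=a^{-1}(1-\epsilon^{-k})\epsilon^{-v}=\zeta^{z_2}$ and on $A_2$ one gets $\lambda_2=a^{-1}(1-\epsilon^{k})\epsilon^{v}=\zeta^{z_1}$---but since $\psi(0,3d)=\log_\zeta a^{q-1}=z_1+z_2$ your final shifted values are right anyway.

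There is, however, a genuine gap in how you exclude $m/d=2$. Your claim that ``a $2$-to-$1$ map would need an even domain'' is false under the paper's Definition~\ref{defn:mto1}: a $2$-to-$1$ map on a $3$-element set is allowed, with one exceptional element. So $\psi(x,3d)$ can certainly be $2$-to-$1$ on $[3]$, and this case must be ruled out by the side conditions you dismissed as vacuous. Concretely, with $m/d=2$ the condition $t(\ell\bmod(m/d))<m$ becomes $t<2d$; since $d\mid t$ this forces $d=t$ and $m=2t$. Then $q+1=3t$ gives $(q+1)\bmod m=t$, so $(q-1)((q+1)\bmod m)=(q-1)t$, and the requirement $(q-1)t<2t$ fails for $q\ge5$. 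This is precisely the paper's argument, and it is where the hypothesis $q\ge5$ is actually used.
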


\begin{proof}
Let $h(x) = 1 - x^{k t} + a x^{u + v t}$ and $g(x) = x^r h(x)^{q-1}$. 
For any $x \in A_0$, $x^t = 1$ and so $h(x) = a x^u$. 
Then $g(x) = a^{q-1} x^{r-2u}$.
For $x \in A_1$, we get $x^t = \epsilon$ and so 
$h(x) = 1 - \epsilon^k + a \epsilon^v x^u$.
Since $(1 - \epsilon^k)/a \in U_{q+1}$, we have 
$a^{q+1} = (1 - \epsilon^k)^{q+1} = (1-\epsilon^{-k})(1 - \epsilon^k)$ and thus 
\begin{align*}
g(x) & = x^r \frac{h(x)^q}{h(x)}
    = x^r \frac{1 - \epsilon^{-k} + a^q \epsilon^{-v} x^{-u}}{1 - \epsilon^k + a \epsilon^v x^u}\\
    & = (1 - \epsilon^{-k}) x^r 
      \frac{1 + a^{-1}(1 - \epsilon^k) \epsilon^{-v} x^{-u}}{1 - \epsilon^k + a \epsilon^v x^u} \\
    & = a^{-1}(1 - \epsilon^{-k}) \epsilon^{-v} x^{r-u}. 
\end{align*}
For $x \in A_2$, $x^t = \epsilon^{-1}$ and so 
$h(x) = 1 - \epsilon^{-k} + a \epsilon^{-v} x^u$.
Similarly, $g(x) = a^{-1}(1 - \epsilon^{k}) \epsilon^{v} x^{r-u}$.

Obviously, $f(x) = x^r h(x)^{q-1}$ and $a^{q-1}$,
$a^{-1}(1 - \epsilon^{k}) \epsilon^{v} \in U_{q+1}$.
Since $(1 - \epsilon^k)^{q+1} = (1 - \epsilon^{-k})^{q+1}$, we have 
$a^{-1}(1 - \epsilon^{-k}) \epsilon^{-v} \in U_{q+1}$.
By \cref{Lpiecesmto1_Fq2}, if $f$ is \mfield{m}{\fqtwostar},
then $\psi(x, 3 d)$ is \mset{(m/d)}{[3]} and thus $1 \le m/d \le 3$.
If $m = d$, then $f$ is \mfield{m}{\fqtwostar} \ifa $\psi(x, 3 d)$ is \mset{1}{[3]}.
That is, \cref{item_m=d} holds.
If $m = 2d$, then $t < m = 2d$, i.e., $d = t$. 
Since $q \ge 5$, we have $(q-1) (q+1 \bmod{2t}) = (q-1)t > 2t = m$ and thus $f$ is not \mfield{m}{\fqtwostar}.
If $m = 3d$, then $f$ is \mfield{m}{\fqtwostar} \ifa $\psi(x, 3 d)$ is \mset{3}{[3]}.
That is, \cref{item_m=3d} holds.
\end{proof}

In \cref{cor_trikuv}, take $q = 2^5$, $r = 6$, $k = 1$, 
$u = 2$, $v = 0$, $a = \zeta^{-1} (1 + \epsilon)$. 
Then $(r, q-1) = 1$,  $t = 11$ and so $ (r-u, t) = (r-2u, t) = 1$.
Moreover, $z_1 = 1$, $z_2 = 1-t$, and $h(x) = 1 + x^t + \zeta^{-1} (1 + \epsilon) x^2$. 
An argument similar to the one used in \cite[Example~4.14]{Hou232193} shows that $h$ has no roots in $U_{q+1}$. 
Since $i(r - u) \equiv z_i \pmod{3}$ for $i \in \{1, 2\}$, we obtain the next example.
\begin{example}
Let $q = 2^5$ and $\zeta$ be a generator of $U_{q+1}$. Then
\[
    f(x) = x^6 (1 + x^{11(q-1)} + (\zeta^{-1} +\zeta^{10}) x^{2(q-1)})
\]
is \mfield{3}{\fqtwostar}.
\end{example}

\section{Conclusions}
In this paper, we completely characterize the many-to-one property 
of generalized cyclotomic mappings for index $\ell \leq 3$.
For any divisor $\ell$ of $q-1$, 
we completely determine their \twotoone property,
and characterize their many-to-one property 
under the condition that all $(r_i, s)$'s are equal.
Moreover,  explicit expressions of some \mtoone mappings on $\fqstar$ are obtained 
by using special polynomials to represent the constants $a_i$'s in their cyclotomic representations. 
This includes several classes of many-to-one binomials 
and trinomials of the form $x^r h(x^{q-1})$ over $\fqtwostar$.

\vspace{12pt}

\noindent
\textbf{Data availability} No datasets were generated or analysed during the current study.

\noindent
\textbf{Competing interests} The authors declare no competing interests.

\bibliography{jrnlabbr.bib, mto1.bib} 

\end{document}